\crefname{subsection}{subsection}{subsections}
\newcommand{\subscript}[2]{$#1 _ #2$}
\numberwithin{equation}{section}
\newtheorem{theorem}{Theorem}[section]
\newtheorem{proposition}[theorem]{Proposition}
\newtheorem{corollary}[theorem]{Corollary}
\newtheorem{lemma}[theorem]{Lemma}
\newtheorem{example}[theorem]{Example}
\theoremstyle{definition}
\newtheorem{definition}{Definition}
\newtheorem{remark}[theorem]{Remark}
\newcommand{\sta}{\text{st}}
\newcommand{\scr}{\mathcal}
\newcommand{\mb}{\mathbb}
\newcommand{\til}{\widetilde}
\newcommand{\eps}{\varepsilon}
\newcommand{\val}{\text{val}}
\newcommand{\OPT}{\text{OPT}}
\newcommand{\nOPT}{\text{OPT}_\text{n-adp}}
\newcommand{\rOPT}{\text{OPT}_{\text{rel}}}
\newcommand{\LPOPT}{\text{LPOPT}}
\newcommand{\poly}{\text{poly}}
\newcommand{\typ}{\text{typ}}
\newcommand{\qLPOPT}{\text{LPOPT}_{\text{QC}}}
\newcommand{\dLPOPT}{\LPOPT_{\text{DP}}}
\newcommand{\sLPOPT}{\LPOPT_{\text{std}}}
\newcommand{\Ber}{\textup{Ber}}
\newcommand{\citet}{\cite}
\newcommand{\citep}{\cite}
\def\keywords{\xdef\@thefnmark{}\@footnotetext}
\newcommand{\CM}[1]{\textcolor{black}{#1}}
\title{Online Bipartite Matching in the Probe-Commit Model\footnote{Journal paper is based on the following two conference papers: ``Prophet Matching in the Probe-Commit Model'' \cite{bor_prophets} and ``Secretary Matching Meets Probing with Commitment'' \cite{borodin2021secretary}}
}
\author{
Allan Borodin
\thanks{Department of Computer Science, University of Toronto, Toronto, Canada.
\texttt{bor@cs.toronto.edu}}
\and
Calum MacRury \thanks{Graduate School of Business, Columbia University, New York, USA. 
	\texttt{cm4379@columbia.edu}}
}
\date{}
\begin{document}

\maketitle

\begin{abstract}
We consider the classical online bipartite matching problem in the probe-commit model. In this problem, when an online vertex arrives, its edges must be probed to determine if they exist, based on known edge probabilities. A probing algorithm must respect commitment, meaning that if a probed edge exists, it must be used in the matching. Additionally, each online vertex has a patience constraint which limits the number of probes that can be made to its adjacent edges. We introduce a new configuration linear program and use it to establish the following competitive ratios which depend on the model used to generate the instance graph, and the arrival order of its online vertices:
\begin{itemize}
	\item In the worst-case instance model, an optimal $1/e$ ratio when the vertices arrive in uniformly at random (u.a.r.) order.
	\item In the known independently distributed (i.d.) instance model, an optimal $1/2$ ratio when the vertices arrive in adversarial order, and a $1-1/e$ ratio when the vertices arrive in u.a.r. order.
\end{itemize}
The latter two results improve upon the previous best competitive ratio of $0.46$ due to Brubach et al. (Algorithmica 2020). Our $1-1/e$-competitive algorithm matches the best known result for the prophet secretary matching problem due to Ehsani et al. (SODA 2018). Our algorithm is efficient and implies a $1-1/e$ approximation ratio for the special case when the graph is known. This is the offline stochastic matching problem, and we improve upon the 0.42 approximation ratio for one-sided patience due to Pollner et al. (EC 2022), while also generalizing the $1-1/e$ approximation ratio for unbounded patience due to Gamlath et al. (SODA 2019).

\end{abstract}

\newpage
\tableofcontents
\newpage



\section{Introduction} \label{sec:intro}

Stochastic probing problems are part of the larger area of decision making under uncertainty and more specifically, stochastic optimization. Unlike more standard forms of stochastic optimization, it is not just that there is some stochastic uncertainty in the set of inputs, stochastic probing problems involve inputs which cannot be determined without probing (at some cost and/or within some constraint). Applications of stochastic probing occur naturally in many settings, such as in matching problems where compatibility cannot be determined without some trial or investigation (for example, in online dating, online advertising, and kidney exchange applications).
There is by now an extensive literature for stochastic probing problems. 


While we are only considering ``one-sided online bipartite matching'', offline stochastic matching was first considered in the context of a general graph by Chen et al. \cite{Chen}. In this problem, \textit{the probing algorithm} is presented a {\it stochastic graph} $G = (V, E)$ as input, which has a probability $p_e$ associated with each edge $e$ and a \textit{patience} (or time-out) parameter $\ell_v$ associated with each vertex $v$. An algorithm \textit{probes} edges in $E$ in some adaptive order within the constraint that at most $\ell_v$ edges are probed incident to any particular vertex $v$. When an edge $e$ is probed, it is guaranteed to exist with probability exactly $p_e$. If an edge $(u,v)$ is found to exist, then the algorithm must \textit{commit} to the edge -- that is, it must be added to the current matching. The goal is to maximize the expected size of a matching constructed in this way. Chen et al. showed that by greedily probing edges in  non-increasing order of edge probability,
one attains an approximation ratio of $1/4$ against an \textit{optimal offline probing algorithm} (we provide
a precise definition in \Cref{sec:benchmark}). The analysis was later improved by Adamczyk \cite{Adamczyk11}, who showed that this greedy algorithm in fact attains an approximation ratio of $1/2$. 

In addition to generalizing the results of Chen et al. to edge weights, Bansal et al. \cite{BansalGLMNR12} introduced a bipartite online variant of the (offline) stochastic matching problem called the \textit{online stochastic matching problem with known i.i.d. arrivals}.
In this problem, a single seller wishes to match
their offline (indivisible) \textit{items} to (unit-demand) \textit{buyers} which arrive online one by one. The seller knows how many buyers will arrive,
and the possible \textit{type/profile} of each online buyer, which is specified by edge probabilities, edge weights and a patience parameter. Here the edge probability models the likelihood a buyer type will purchase an item if the seller presents it to them and an edge weight represents the revenue the seller will gain from making such a sale successfully.
The patience of a buyer type indicates the maximum number of items they are willing to be shown. Note that
there are no restrictions on how many buyers an item may be shown to. The online buyers are drawn \textit{i.i.d.} from a \textit{known distribution}, where the type of each online buyer is presented to the seller upon its arrival.  The (potential) sale of an item to an online buyer must be made before the next online buyer arrives, and the seller's goal is to maximize their expected revenue. As in the Chen et al. model, the seller must \textit{commit} to the first sale to which an online buyer agrees. Fata et al. observed that this problem is closely related to the
multi-customer \textit{assortment optimization problem}, which has numerous practical applications in revenue management (see \cite{Fata2019MultiStageAM} for details).

Mehta and Panigrahi \cite{MehtaP12} introduced the \textit{online matching problem with stochastic rewards},
which studies the original online bipartite matching problem of Karp, Vazirani and Vazirani \cite{KarpVV90} in
the probe-commit model. Here the algorithm initially only knows the offline vertices $U$ of the stochastic graph $G=(U,V,E)$, and its online vertices $V$ are determined by an adversary and presented to the algorithm one by one. Moreover, they consider when $G$ has \textit{unit patience values}, that is, each online vertex can probe at most one of its adjacent edges. They considered the special case of uniform edge probabilities (i.e, $p_{e} =p$ for all $e \in E$) and proved a competitive ratio $\frac{1}{2}(1 + (1-p)^{2/p})$ (this limits to $\frac{1}{2}(1+e^{-2}) \approx .567$ as $p \rightarrow 0$).
Mehta et al. \cite{Mehta2015} considered arbitrary edge probabilities, and
attained a competitive ratio of $0.534$, and recently, Huang and Zhang \cite{huang2020online} additionally handled the case of arbitrary offline vertex weights, while improving this ratio to $0.572$.
However, as in \cite{MehtaP12}, both \cite{Mehta2015} and  \cite{huang2020online} require edge probabilities which are \textit{vanishingly small}\footnote{Vanishingly small edge probabilities must satisfy $\max_{e \in E} p_{e} \rightarrow 0$,
	where the asymptotics are with respect to the size of $G$.}. Goyal and Udwani \cite{Goyal2020OnlineMW} improved on both of these works by proving a $0.596$ competitive ratio in the same setting. They also showed that $1-1/e$ is attainable for the separate special case of \textit{vertex-decomposable}\footnote{Vertex-decomposable means that there exists probabilities $(p_{u})_{u \in U}$ and $(p_{v})_{v \in V}$, such that $p_{(u,v)}=p_{u} \cdot p_{v}$ for each $(u,v) \in E$.} edge probabilities. It remains open whether $1-1/e$ is attainable for arbitrary edge probabilities. Goyal and Udwani discuss the difficulty of this problem in the context of the Adwords problem with an arbitrary budget to bid ratio.

Clearly, any classical online matching problem can be generalized to the probe-commit model. Given such a problem, we can ask if the optimal competitive ratio is the same when probing is not required. When the optimal competitive ratio is not known (even in the classical setting), we can still ask whether there exists an online probing algorithm whose competitive ratio is equal
to the best known competitive ratio when probing is not required. We provide a number of positive answers to questions of this form. In particular, we generalize the problem of Bansal et al.
to \textit{known i.d. arrivals}. Specifically, each online buyer is drawn from a (potentially) distinct distribution, and the draws are done independently. When online buyers arrive adversarially, we generalize the \textit{prophet inequality matching problem} of Alaei et al. \cite{alaei_2012}. When online buyers arrive in random order, we generalize the \textit{prophet secretary matching problem} of Ehsani et al. \cite{Ehsani2017}. We prove a $1/2$ competitive ratio for adversarial arrivals
and a $1-1/e$ competitive ratio for random order arrivals.  These competitive ratios match the best known results when probing is not required (see \cite{alaei_2012,Ehsani2017}), and the $1/2$ result is in fact tight. Note that the arrival order does not matter for the case of identical distributions, and so our $1-1/e$ result implies a $1-1/e$ competitive ratio for known i.i.d. arrivals.  Up until very recently, this result also matched the best known competitive ratio when probing is not required due to Manshadi et al. \cite{ManshadiGS12}.  Yan \cite{Yan2022} improved on \cite{ManshadiGS12} and showed that a competitive ratio
of $0.645 > 1-1/e$ is attainable for known i.i.d. arrivals when probing is not required. We are also the first to study the online stochastic matching problem with \textit{worst-case random order arrivals}. Here the stochastic graph is adversarially generated, and its online vertices arrive in random order. When the graph is edge-weighted,
we generalize the \textit{secretary matching problem} \cite{KRTV2013} to the probe-commit model,
and prove a competitive ratio of $1/e$, which is exactly the optimal competitive ratio
when probing is not required. 

All of the above competitive ratios in fact hold in a more general probing model,
where for each $v \in V$ the patience value $\ell_v$ is generalized to a downward-closed \textit{online probing constraint} $\scr{C}_v$,
which specifies which sequences of edges adjacent to $v$ may be probed. For instance, this includes when $v$ has a \textit{budget} and \textit{edge probing costs} (i.e., $\scr{C}_v$ is a \textit{knapsack constraint}). Probing constraints of a similar nature were originally considered by Gupta and Nagarajan \cite{GuptaN13} for a wide range of offline stochastic probing problems. For simplicity, we defer the precise statement of our  probing model to \Cref{sec:general_probing}, and first introduce the online stochastic matching problem restricted to patience values.

\section{Preliminaries}

An instance of the \textbf{online stochastic matching problem} is a \textbf{stochastic graph} specified in the following way. Let $G=(U,V,E)$ be a bipartite graph with edge weights $(w_{e})_{e \in E}$ and edge probabilities $(p_{e})_{e \in E}$, where $\partial(r)$ denotes the edges of $G$ which include $r$ for $r \in U \cup V$.
Each $e \in E$ is \textbf{active} independently with probability $p_e$,
where  the \textbf{edge state} $\sta(e) \sim \Ber(p_e)$ is the indicator random variable
for this event. In addition, each \textbf{online vertex} $v \in V$ has an integer \textbf{patience parameter} $\ell_v \ge 1$. 
We denote $n:= |V|$ to be the number of online vertices of $G$.

An \textbf{online probing algorithm} begins with limited information
regarding $G$. Specifically, it knows $U$, the \textbf{offline vertices} of $G$, and
in all but one of the settings we study, it also knows the value of $n$. An ordering on $V$ is then generated either by an \textbf{adversary} or \textbf{uniformly at random}, independent of all other randomization. We refer to the
former case as the \textbf{adversarial order model (AOM)} and the latter case as the \textbf{random order model (ROM)}.



In the \textbf{worst-case instance model}, the stochastic graph $G$ is generated by an adversary.
Based on whichever ordering is generated on $V$,
the online vertices are then presented to the online probing algorithm one by one.
When an online node $v \in V$ is presented (arrives), the online probing algorithm learns $(p_{e})_{e \in \partial(v)}$
and $(w_{e})_{e \in \partial(v)}$, however, the edge states $(\sta(e))_{e \in \partial(v)}$ initially
remain hidden to the algorithm. Instead, the algorithm also learns the patience value
$\ell_v$ of $v$, and it is allowed to \textit{adaptively} \textbf{probe} at most $\ell_v$ edges of $\partial(v)$. 
Here a probe to an edge $e \in \partial(v)$  reveals the \textit{instantiation} of $\sta(e)$
to the algorithm. The algorithm 
operates in the \textbf{probe-commit model}, in which there is a \textbf{commitment} requirement upon probing an edge:
Specifically, if an edge $e = (u,v)$ is probed and turns out to be active, then the online probing algorithm must make an irrevocable decision as to whether or not to include $e$ in its matching, prior to probing any subsequent edges. This definition of commitment is the one considered by Gupta et al. \cite{GuptaNS16}, and is slightly different but equivalent to the Chen et al. \cite{Chen} model in which an active edge must be immediately accepted into the matching. As in the classical bipartite matching problem, an online probing algorithm must decide on a possible match for an online node $v$ before seeing the next online node. The algorithm
returns a matching $\scr{M}$ of (probed) active edges, and its goal is to maximize $\mb{E}[w(\scr{M})]$,
where $w(\scr{M}):= \sum_{e \in \scr{M}} w_e$ is the \textbf{weight} of $\scr{M}$, and the expectation is over $(\sta(e))_{e \in E}$ and any random decisions of the algorithm (as well as the ordering on $V$ in the ROM). We will consider online algorithms which know the value of $n = |V|$, as well as those which do not. By setting edge probabilities to $0$, we hereby assume $E = U \times V$ without loss in generality.

In the \textbf{known i.d. instance model}, the algorithm again executes on an unknown stochastic graph $G=(U,V,E)$,
however the online vertices $V$ and edges $E$ of $G$ are instead generated through the following stochastic process: Let $H_{\typ} = (U,B,F)$ be a
\textbf{type graph}, which is a bipartite graph with edge weights $(w_{f})_{f \in F}$, edge probabilities $(p_{f})_{f \in F}$, and patience values $(\ell_b)_{b \in B}$. We refer to $B$ as the \textbf{type nodes} of $H_{\typ}$, as 
$H_{\typ}$ is known to the algorithm and these represent the possible online vertices that $G$ may have.  In addition, the algorithm knows that $n=|V|$ online vertices will arrive,
and it is presented distributions $(\scr{D}_{i})_{i=1}^{n}$ supported on $B$. For $i =1, \ldots, n$, online vertex $v_{i}$ is drawn independently from $\scr{D}_{i}$, and $V$ 
is now defined to be the multiset including $v_1, \ldots ,v_n$. The online vertices $V$ are once again presented to the algorithm either in adversarial or random order, and processed in the same way as in the
worst-case instance model, with the caveat that the online algorithm additionally learns the distribution
the online vertex was drawn from  upon its arrival. We denote $G \sim (H_{\typ}, (\scr{D}_i)_{i=1}^{n})$ to indicate
that $G$ is drawn from the \textbf{known i.d. input} $(H_{\typ}, (\scr{D}_{i})_{i=1}^{n})$. The algorithm returns
a matching $\scr{M}$ of (probed) active edges of $G$, and its goal is to maximize $\mb{E}[ w(\scr{M})]$,
where the expectation is now also over the additional randomness in $G \sim (H_{\typ}, (\scr{D}_i)_{i=1}^{n})$.
Note that in the AOM, we assume that the ordering is generated by an \textbf{oblivious adversary}.
This means that the ordering is a permutation $\pi$ of $[n] = \{1, \ldots ,n\}$ which depends solely on
$(H_{\typ}, (\scr{D}_i)_{i=1}^n)$. The vertices $v_1, \ldots ,v_n$ then arrive in order
$v_{\pi(1)}, \ldots ,v_{\pi(n)}$. We again hereby assume that $F= U \times B$.

\subsection{Benchmark} \label{sec:benchmark}

It is easy to see that even when the edges are unweighted and the algorithms initially knows the stochastic graph we cannot hope to obtain a non-trivial competitive ratio against the
expected size of an optimal matching of the stochastic graph. Consider a stochastic graph with a single online vertex with patience $1$, and $k \ge 1$ offline (unweighted) vertices where each edge $e$ has probability $\frac{1}{k}$ of being active. The expectation of an online probing algorithm will be at most $\frac{1}{k}$ while the expected size of an optimal matching will be $1 - (1-\frac{1}{k})^k \rightarrow 1 -\frac{1}{e}$ as $k \rightarrow \infty$.  

The standard approach in the literature is to instead consider the \textbf{offline stochastic matching problem}
and benchmark against an \textit{optimal offline probing algorithm} \cite{BansalGLMNR12,Adamczyk15,BrubachSSX16,BrubachSSX20}.
An \textbf{offline probing algorithm} is given the stochastic graph $G=(U,V,E)$, but initially the edge states $(\sta(e))_{e \in E}$ are hidden. Its goal is
to construct a matching of (probed) active edges of $G$ with weight as large as possible in expectation. It can adaptively probe the edges of $E$ in any order, potentially interleaving edges between distinct vertices. For instance, it may probe $e_1 \in \partial(v)$ followed by $e_2 \in \partial(v')$ and then $e_3 \in \partial(v)$ for distinct $v, v' \in V$. However,  at most $\ell_v$ edges of $\partial(v)$ may be probed for each $v \in V$,
and it must operate in the same probe-commit model as an online probing algorithm. We define
the \textbf{(offline) adaptive benchmark}, denoted $\OPT$, to be an optimal offline probing algorithm,
and define $\OPT(G)$ to be the expected weight of the matching returned by $\OPT$ when it executes on $G$. An alternative weaker benchmark used by Brubach et al. \cite{brubach2021follow,brubach2021conf} is the \textbf{online adaptive benchmark}. This is defined as
an optimal offline probing algorithm which executes on $G$ and whose edge probes respect
some adaptively chosen vertex ordering on $V$. Equivalently, the edge probes
involving each $v \in V$ occur contiguously: If $e_2=(u,v') \in E$ is probed after $e_1 = (u,v)$ for $v' \neq v$, then
no edge of $\partial(v)$ is probed following $e_2$. 

In this work, we focus exclusively on the offline adaptive benchmark.
In the worst-case instance model, we benchmark against $\OPT(G)$ for a worst-case stochastic graph $G$. In the known i.d. instance model, we benchmark against $\OPT(H_{\typ}, (\scr{D}_i)_{i=1}^{n}):=\mb{E}[\OPT(G)]$, where the expectation is over the randomness in drawing $G$ from a worst-case known i.d. input $(H_{\typ}, (\scr{D}_i)_{i=1}^{n})$. 
In either instance model, we can benchmark against a restricted sub-class of instances. This is relevant
to us in the worst-case instance model, where in one case we benchmark
against \textbf{(offline) vertex-weighted} stochastic graphs (i.e., there exists $(w_u)_{u \in U}$
such that $w_{u,v} = w_u$ for all $(u,v) \in E$).

Even assuming the full generality of edge-weighted stochastic graphs,
we are still left with a wide range of problems, depending on if we work in the worst-case instance model or the known i.d. instance model, as well as whether we assume adversarial or random order arrivals. We refer to each problem as the \textbf{online stochastic matching
	problem} with a \textbf{worst-case (known i.d.) instance} and \textbf{adversarial (random order) arrivals}. If
we restrict to a sub-class of stochastic graphs, then we indicate this in the middle of the problem name. For instance, when $G$ is restricted to vertex weights, we refer to the problem as the online stochastic matching problem with a worst-case \textit{vertex-weighted} instance
and random order arrivals.

\section{Contributions and Techniques}

We first summarize the competitive ratios of our algorithms in \Cref{tab:my_label}.
We then discuss each result individually and explain its significance.
Afterwards, in \Cref{sec:general_probing}, we describe the general probing model from \Cref{sec:intro}.
All our algorithms are efficient in this model, as we prove in \Cref{sec:algorithm_efficiency}.
With the exception of our vertex-weighted algorithm, all our algorithms can be extended to this model
without a loss in competitiveness. 
In \Cref{sec:overview_of_techniques}, we give an overview of the techniques used
in the paper.

\begin{table}[H]
	\centering
	\begin{tabular}{c|c|c} \hline
		Competitive Ratios & AOM & ROM \\ 
		
		\hline
		Known I.D. Instance & $? \to  \ge \textbf{1/2}$ [\S\ref{sec:known_id_additions}] & $? \to \textbf{1-1/e}$ [\S\ref{sec:known_id_additions}] \\
		& $\le 1/2$ \cite{krengel_prophet} & $\le 0.703$ \cite{huang2022}\\
		\hline
		Known I.I.D. Instance & $\ge 0.46$ \cite{BrubachSSX20} $\to \ge\textbf{1-1/e}$ [\S\ref{sec:known_id_additions}] & $\ge0.46$ \cite{BrubachSSX20} $\to\ \ge\textbf{1-1/e}$ [\S\ref{sec:known_id_additions}] \\
		&  $\le 0.703$ \cite{huang2022} & $\le 0.703$ \cite{huang2022}\\
		\hline
		Worst-case & $-$ & $? \to\ \ge\textbf{1/e}$ [\S\ref{sec:edge_weights}]\\
		Edge-weighted Instance &   & $\le 1/e$ \cite{lindley61} \\
		\hline
		Worst-case & $\ge 1/2$ \ \cite{Brubach2019} & $1/2 \to\ \ge\textbf{1 - 1/e}$ \tablefootnote{This competitive ratio holds when $G$ is unweighted, and for certain
			cases when $G$ is vertex-weighted. We defer the precise statement of the result to \Cref{thm:ROM_rankable} of \Cref{sec:vertex_weights}.} [\S\ref{sec:vertex_weights}]\\
		Vertex-weighted Instance 	&  $\le 1 -1/e$ \cite{KarpVV90} & $\le 0.826 $ \cite{ManshadiGS12}\\
		\hline
	\end{tabular}
	\caption{New competitive ratios are \textbf{bolded}.  "$\ge$" refers to lower bounds on the \textit{optimal} competitive ratio (algorithmic results), "$\le$" refers to upper bounds (impossibility/hardness results), and arrows indicate improvement from the state of the art. ``$-$" indicates that no constant competitive ratio is attainable, and ``$?$" means that no previous competitive ratio was known.}
	\label{tab:my_label}
\end{table}

\textbf{Known I.D. Instance.} 
We first consider the setting when the algorithm is given a type graph $H_{\typ} = (U,B,F)$
and distributions $(\scr{D}_i)_{i=1}^n$, and executes on $G \sim (H_{\typ}, (\scr{D}_i)_{i=1}^n)$. Observe that if $p_{e} \in \{0,1\}$ for each $e \in F$ of $H_{\typ}=(U,B,F)$, then probing is unnecessary, and $\mb{E}[\OPT(G)]$ corresponds to the expected weight of the maximum matching of $G$. In this special case, the online algorithm also does not need to probe edges, and so the problem generalizes either the \textbf{prophet inequality matching problem} or the \textbf{prophet secretary matching problem}, depending on if we work with adversarial arrivals or
random order arrivals, respectively.

\textbf{Adversarial Order Arrivals.} In the AOM, we attain a competitive ratio of $1/2$. This is a tight bound since the problem generalizes the classical single item prophet inequality for which $1/2$ is an optimal competitive ratio \cite{krengel_prophet}. 
Brubach et al. \cite{brubach2021conf} independently proved the same competitive ratio against the weaker \textit{online} adaptive benchmark (see the definition in \Cref{sec:benchmark}). Our results are incomparable, as their result can be applied to an \textit{unknown} patience framework (at a loss in competitive ratio), whereas our result holds against a stronger benchmark.
Note that our result also holds assuming \textit{known} downward-closed online probing constraints, as we discuss in \Cref{sec:general_probing}.

\textbf{Random Order Arrivals.}
In the ROM, we attain a competitive ratio of $1-1/e$. Interestingly, $1-1/e$ remains the best known competitive ratio in the prophet secretary matching problem due to Ehsani et al. \cite{Ehsani2017}, despite significant progress in the case of a single
offline node (see \cite{AzarCK18,CorreaSZ19}). Huang et al. \cite{huang2022} very recently proved a $0.703$ hardness result for multiple offline nodes and known i.i.d. arrivals. 

\textbf{Known I.I.D. Instance.} 
The special case of identical distributions has been studied in multiple works \cite{BansalGLMNR12,Adamczyk15,BrubachSSX16,BrubachSSX20}, beginning with the $0.12$ competitive ratio of Bansal et al. \cite{BansalGLMNR12}, and which the previous best competitive ratio of $0.46$ is due to Brubach et al. \cite{BrubachSSX20}.
Fata et al. \cite{Fata2019MultiStageAM} improved this competitive ratio to $0.51$ for the special case of \textbf{unbounded patience}
(i.e., $\ell_b = \infty$ for all $b \in B$).  All of these previous competitive ratios (including ours) are proven against the offline adaptive benchmark. Our $1-1/e$ competitive ratio for a known i.d. instance and random order arrivals improves on both of these bounds, while simultaneously applying to non-identical distributions. Note that Brubach et al. \cite{brubach2021follow,brubach2021conf} independently achieved a $1-1/e$ competitive ratio for a known i.i.d. instance, however their ratio is again proven against the weaker online adaptive benchmark. Our $1-1/e$ result matches the previously best known competitive ratio when probing is not required due to Manshadi et al. \cite{ManshadiGS12}. Very recently, Yan \cite{Yan2022} improved on \cite{ManshadiGS12} and showed that $0.645 > 1-1/e$ is attainable when probing is not required.

\textbf{Known Stochastic Graph Instance.} An important case of the online stochastic matching problem with a known i.d. instance is the
case of a \textbf{known stochastic graph}.  In this setting, $H_{\typ}=(V,B,F)$ satisfies $n= |B|$,
and the distributions $(\scr{D}_i)_{i=1}^{n}$ are all point-mass on \textit{distinct} vertices of $B$. Thus, the online
vertices of $G$ are not randomly drawn, and $G$ is instead 
equal to $H_{\typ}$. The online probing algorithm thus knows the stochastic graph $G$ in advance, but remains unaware of the edge
states $(\sta(e))_{e \in E}$, and so it still must sequentially probe the edges to reveal their states. Again, it operates in the probe-commit model, and respects the patience values $(\ell_v)_{v \in V}$ as well as the ordering on $V$. 

As described in \Cref{sec:intro}, the focus of the original \textit{offline} stochastic matching problem is to design efficient offline probing algorithms which attain approximation ratios against the offline adaptive benchmark (see \cite{Chen, Adamczyk11,BansalGLMNR12,Adamczyk15,BavejaBCNSX18,Gamlath2019,brubach2021offline,PollnerRSW22} for a collection of such results). Since all our probing algorithms are efficient, one of the main benefits of working with the offline adaptive benchmark opposed to the \textit{online} adaptive benchmark, is that our competitive ratios imply approximation ratios. In particular,
we get a $1-1/e$ approximation ratio for stochastic graphs with one-sided patience values. For context, $0.426$ is the previously best known
approximation ratio for bipartite graphs with one-sided patience values due to Pollner et al. \cite{PollnerRSW22}. Note that their algorithm has the benefit of working for random order \textit{edge arrivals}, whereas ours requires one-sided random order vertex arrivals.

Gamlath et al. \cite{Gamlath2019} also consider an offline bipartite matching problem for the special case when $G =(U,V,E)$ has unbounded patience, which they refer to as the \textbf{query-commit problem}. Both of our algorithms process $V$ in random order, and attain performance guarantees of $1-1/e$ against very different non-standard LPs (linear programs) -- 
\ref{LP:new} and \ref{LP:full_patience}, respectively. Note that \ref{LP:full_patience} has exponentially many constraints and polynomially many variables, whereas \ref{LP:new} has polynomially many constraints and exponentially many variables (see \Cref{sec:LP_relations} for a statement of \ref{LP:full_patience}). To the best of our knowledge, \ref{LP:full_patience} does not seem to have an extension even to arbitrary patience values,
as it is unclear how to generalize its constraints while maintaining polynomial time solvability. Despite having such different forms, in the unbounded patience setting the LPs take on the same value, as we prove in Proposition \ref{prop:full_patience_equivalence} of \Cref{sec:LP_relations}. Thus, our $1-1/e$
competitive ratio can be viewed as a generalization of their work to arbitrary patience values and more general probing constraints, as well as to known i.d. instances with random order arrivals. For the special case
when $G$ is known and has unbounded patience, Derakhshan and Farhadi \cite{Derakhshan2022} recently proved an approximation
ratio of $1 - 1/e+ \delta$ for $\delta \ge 0.0014$.

Our $1-1/e$ approximation ratio is attained
by a \textbf{non-adaptive} probing algorithm. An offline probing algorithm is non-adaptive on $G$
if the edges it probes (and the order this is done in) can be described as a randomized function of $G$. Equivalently, the edges probed
by the algorithm are determined independently from the edge states $(\sta(e))_{e \in E}$ of $G$.  In \Cref{thm:adaptivity_gap_negative_restatement} of \Cref{sec:adaptivity_gap}, we prove a $1-1/e$ hardness result against $\OPT$ which applies to \textit{all} non-adaptive probing algorithms.
Thus, we show that the \textbf{adaptivity gap} of the offline stochastic matching problem with one-sided patience values is exactly $1-1/e$ (see \Cref{cor:adaptivity_gap} for details). 

\textbf{Edge-weighted Worst-case Instance.} We next consider the online stochastic matching problem 
in the most challenging setting for the algorithm. That is, the stochastic graph $G=(U,V,E)$
has edge weights and is generated by an adversary.
When the edge probabilities of $G$ are binary (i.e., $p_e \in \{0,1\}$ for all $e \in E$),
this is the edge weighted online bipartite matching problem. It is well known that no constant
competitive ratio is attainable for adversarial arrivals, and so we instead consider random order arrivals.
This is the \textbf{secretary matching problem} in which Kesselheim et al. \cite{KRTV2013} proved an
optimal asymptotic competitive ratio of $1/e$ (note that this is optimal even for a single offline vertex, i.e., $|U|=1$). 
We generalize their matching algorithm so as to apply to the stochastic probing setting,
and recover the same asymptotic $1/e$ competitive ratio.

\textbf{Vertex-weighted Worst-case Instance.} 
Purohit et al. \cite{Purohit2019} and Brubach et al. \cite{Brubach2019} both independently 
designed an \textit{efficient} probing strategy, referred to as $\textsc{DP-OPT}$, which matches an online node $v$ to an edge of maximum weight in expectation, no matter the patience value $\ell_v$. Using  $\textsc{DP-OPT}$, Brubach et al. considered a deterministic \textit{greedy} probing algorithm when $G$ has offline vertex weights $(w_u)_{u \in U}$ and is a worst-case instance. In this setting, they prove a competitive ratio of $1/2$ in the AOM,
and this is best possible amongst \textit{deterministic} online probing algorithms.

When $G$ is unweighted, this greedy probing algorithm is very simple as it probes the available edges of an arriving vertex $v \in V$ based on non-increasing edge probabilities. In this setting, we prove a competitive ratio of $1-1/e$ for random order arrivals. We then extend this competitive ratio to a number of important
settings when $G$ is vertex-weighted, yet \textbf{rankable} (see \Cref{sec:vertex_weights} for the precise definition). \textbf{Rankability} includes when $G$ is unweighted, as well as the well-studied case when $G$ has unit patience values \cite{MehtaP12,Mehta2015,huang2020online,Goyal2020OnlineMW}. In this latter setting,
each $v \in V$ \textit{ranks} its adjacent edges in non-increasing order of $(w_u \cdot p_{u,v})_{u \in U}$.
Assuming adversarial arrivals, this is the \textit{online matching with stochastic rewards problem}, as described in \Cref{sec:intro}. Mehta and Panigrahi \cite{MehtaP12} showed that $0.621 < 1-\frac{1}{e}$ is a randomized in-approximation for this problem with regard to  guarantees made against \ref{LP:standard_benchmark}. 
Here \ref{LP:standard_benchmark} is the LP used in \cite{MehtaP12} to upper bound/relax the offline adaptive benchmark in the unit patience setting. Our $1-1/e$ result in fact holds against \ref{LP:standard_benchmark} which implies that deterministic probing algorithms
in the ROM have strictly more power than randomized probing algorithms in
the AOM (see \Cref{cor:ROM_unit_patience} for details). This contrasts with the classical online bipartite matching setting
where such a separation is \textit{not} known. We also prove
that the same algorithm attains an \textit{asymptotic} (as $|G| \rightarrow \infty$)
competitive ratio of $1-1/e$ when  $p_v := \max_{e \in \partial(v)}p_e$
satisfies $p_v = o(\ell_v)$ for each $v \in V$. The vanishing probability
setting is similar in spirit to the small bid to budget assumption in the Adwords problem
(see Goyal and Udwani \cite{Goyal2020OnlineMW} for details).

Finally, we generalize $\textsc{DP-OPT}$ to downward-closed probing constraints and extend our competitive ratios
in a number of settings. Note that $\textsc{DP-OPT}$ is crucial for ensuring the efficiency of our other online probing algorithms, as we explain in \Cref{sec:algorithm_efficiency}.

\subsection{Generalizing to Downward Closed Online Probing Constraints} \label{sec:general_probing}
Consider an online vertex $v \in V$ of a stochastic graph $G=(U,V,E)$. 
The patience parameter $\ell_v$ can be viewed as a simple budgetary constraint, where each probe has unit cost and the patience parameter is the budget. A natural generalization is thus to instead consider a \textbf{knapsack} (or \textbf{budgetary}) constraint. That is, a non-negative \textbf{budget} $B_v \ge 0$ and \textbf{edge probing costs} $(c_{e})_{e \in \partial(v)}$, such that any subset $S \subseteq \partial(v)$ may be probed (in any order), provided $\sum_{e \in S} c_{e} \le B_v$. Alternatively,
$\ell_v$ can be viewed as an $\ell_v$-uniform-matroid constraint on $\partial(v)$, and so an incomparable generalization is to an \textit{arbitrary} matroid constraint on $\partial(v)$. Matroid probing constraints
of the latter form were considered in the offline setting in \cite{GuptaN13,adamczyk2016}. We introduce a general online probing framework that encompasses both budgetary and matroid probing constraints.

Define $\partial(v)^{(*)}$ as the collection of strings (tuples) formed from the edges of $\partial(v)$ whose characters (entries) are all distinct. Note that we use string/tuple notation and terminology interchangeably. Each $v \in V$ has an \textbf{(online) probing constraint} $\scr{C}_v \subseteq \partial(v)^{(*)}$ which we assume is \textbf{downward-closed}.
That is, $\scr{C}_v$ has the property that if $\bm{e} \in \scr{C}_v$,
then so is any substring or permutation of $\bm{e}$. 
Clearly our setting encodes the case when $v$ has a patience value $\ell_v$, and more generally, when $\scr{C}_v$ corresponds to a matroid or budgetary constraint.

In order to discuss the efficiency of our algorithms in the generality of our probing constraints, we
work in the \textbf{membership oracle model}. An online probing algorithm may make a
\textbf{membership query} to any string $\bm{e} \in \partial(v)^{(*)}$ for $v \in V$,
thus determining in a single operation whether or not $\bm{e} \in \partial(v)^{(*)}$ is in $\scr{C}_v$. Assuming access to such an oracle, all our algorithms are implementable in polynomial time, as we prove in \Cref{sec:algorithm_efficiency}. When a probing constraint can be stated explicitly, our algorithms are polynomial time in the usual (e.g., Turing machine time) sense. For example, for a budget constraint, the budget $B_v$ and edge probing costs $(c_{e})_{e \in \partial(v)}$ are revealed upon the arrival of the online vertex $v$.

For the remainder of the paper, when discussing a stochastic graph $G=(U,V,E)$, we work in the full generality of
online probing constraints $(\scr{C}_v)_{v \in V}$, unless indicated otherwise.

\subsection{An Overview of Our Techniques} \label{sec:overview_of_techniques}

We first describe our techniques in the known stochastic graph setting (i.e, when the stochastic graph $G=(U,V,E)$
is edge-weighted and known to the online probing algorithm). Afterwards, we explain how our techniques apply when $G$ is unknown and drawn from a known i.d. input or generated by an adversary.
We conclude by describing our techniques when $G$ is vertex-weighted.

The main challenge in designing \textit{efficient} probing algorithms (whether offline or online), is
that in general it is infeasible to efficiently compute the decisions made by the offline adaptive benchmark
(i.e., $\OPT$).
One of the standard approaches in the literature is to instead relax $\OPT$ via an LP. 
When $G$ has patience values $(\ell_v)_{v \in V}$,  Bansal et al. \cite{BansalGLMNR12} introduced an LP which assigns fractional values to the edges of $G$, say $(x_{e})_{e \in E}$, such that $x_e$ upper bounds the probability $e$ is probed by $\OPT$. Clearly, $\sum_{e \in \partial(v)} x_{e} \le \ell_{v}$ is a constraint for each $v \in V$, and 
so by applying a dependent rounding algorithm (such as the GKSP algorithm of Gandhi et al. \cite{GandhiGKSP06}), one can round the values $(x_{e})_{e \in \partial(v)}$ to determine $\ell_{v}$ edges of $\partial(v)$ to probe. By probing these edges in a carefully
chosen order, and matching $v$ to the first edge revealed to be active, one can guarantee that each 
$e \in \partial(v)$ is matched with probability reasonably close to $p_e  x_e$. This is the high-level approach used in many stochastic matching algorithms (for example \cite{BansalGLMNR12,Adamczyk15,BavejaBCNSX18,BrubachSSX20,brubach2021offline,PollnerRSW22}). However, even for a single online node, this LP overestimates the value of the offline adaptive benchmark, and so any algorithm designed in
this way will match certain edges with probability strictly less than $p_e  x_e$. This is problematic,
for the value of the match made to $v$ is ultimately compared to $\sum_{e \in \partial(v)} p_{e} w_{e} x_e$, the contribution
of the variables $(x_e)_{e \in \partial(v)}$ to the LP solution. In fact, Fata et al. \cite{Fata2019MultiStageAM} showed that the ratio between $\OPT(G)$ and an optimum solution to this LP can be as small as $0.51$, so our $1-1/e$ competitive ratio cannot be achieved via a comparison to this LP,
even for the special case when $G$ is known and has patience values.

\textbf{Defining \ref{LP:new}:}
Our approach is to work with a new configuration LP (\ref{LP:new}) which we can state even when $G$
has online probing constraints $(\scr{C}_v)_{v \in V}$. This LP has exponentially many variables which accounts for the many probing strategies available to an arriving vertex $v$ with probing constraint $\scr{C}_v$. For each $\bm{e} \in E^{(*)}$, define $q(\bm{e}) = \prod_{f \in \bm{e}} (1 - p_{f})$,
to be the probability that all the edges of $\bm{e}$ are inactive, where $q(\lambda):=1$ for the empty string $\lambda$. For $f \in \bm{e}$,
we denote $\bm{e}_{< f}$ to be the substring of $\bm{e}$ from its first
edge up to, but not including, $f$. Observe then that $\val(\bm{e}):=\sum_{f \in \bm{e}} w_{f} \cdot p_{f} \cdot q(\bm{e}_{< f})$
corresponds to the expected weight of the first active edge revealed if $\bm{e}$ 
is probed in order of its entries. For each $v \in V$ and $\bm{e} \in \scr{C}_v$, we introduce a decision variable $x_{v}(\bm{e})$
and state the following LP:
\begin{align}\label{LP:new}
	\tag{LP-config}
	&\text{maximize} &  \sum_{v \in V} \sum_{\bm{e} \in \scr{C}_v } \val(\bm{e}) \cdot x_{v}(\bm{e}) \\
	&\text{subject to} & \sum_{v \in V} \sum_{\substack{ \bm{e} \in \scr{C}_v: \\ (u,v) \in \bm{e}}} 
	p_{u,v} \cdot q(\bm{e}_{< (u,v)}) \cdot x_{v}( \bm{e})  \leq 1 && \forall u \in U  \label{eqn:relaxation_efficiency_matching}\\
	&& \sum_{\bm{e} \in \scr{C}_v} x_{v}(\bm{e}) = 1 && \forall v \in V,  \label{eqn:relaxation_efficiency_distribution} \\
	&&x_{v}( \bm{e}) \ge 0 && \forall v \in V, \bm{e} \in \scr{C}_v
\end{align}
We first prove that \ref{LP:new} is a \textbf{relaxation} of the offline adaptive benchmark. Unlike previous LPs used in the literature, we are not aware of an easy proof of this
fact, and so we consider our proof to be a technical contribution.
\begin{theorem}\label{thm:new_LP_relaxation}
	$\OPT(G) \le \LPOPT(G)$.
\end{theorem}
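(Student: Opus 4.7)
The plan is to extract, from the optimal offline adaptive probing algorithm $\OPT$, a feasible solution $x = (x_v(\bm{e}))_{v \in V,\ \bm{e} \in \scr{C}_v}$ to \ref{LP:new} whose objective value is at least $\OPT(G)$. For each online vertex $v$, I will define $x_v(\bm{e})$ as the probability that $\OPT$'s \emph{probing commitment} at $v$ is $\bm{e}$: viewing $\OPT$ as a randomized decision tree, let $\bm{C}_v$ be the (random) sequence of probes to $\partial(v)$ obtained by following the "inactive" branch at every probe-of-$\partial(v)$ node, and the "decline" branch at every commit node involving $\partial(v)$. Since $\OPT$ respects $\scr{C}_v$ and $\scr{C}_v$ is downward closed, $\bm{C}_v \in \scr{C}_v$ always, so setting $x_v(\bm{e}) := \Prob[\bm{C}_v = \bm{e}]$ immediately yields the equality constraint~(\ref{eqn:relaxation_efficiency_distribution}). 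A naive first attempt, defining $x_v$ via the \emph{actual} probe sequence of $\OPT$, fails because that distribution is biased by which earlier probes at $v$ happened to be inactive; using the counterfactual commitment avoids this bias.

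The key structural property is that $\bm{C}_v$ depends only on randomness external to the edge states of $\partial(v)$ (namely, outcomes of probes at other vertices and $\OPT$'s coins), so conditional on $\bm{C}_v = \bm{e}$ the states $(\sta(f))_{f \in \bm{e}}$ are still independent Bernoullis with their original parameters. From this, the matching constraint~(\ref{eqn:relaxation_efficiency_matching}) drops out of
\[
\sum_{v \in V}\sum_{\substack{\bm{e} \in \scr{C}_v:\\ (u,v) \in \bm{e}}} p_{u,v}\, q(\bm{e}_{<(u,v)})\, x_v(\bm{e}) \;=\; \sum_{v \in V} \Prob\!\left[(u,v)\text{ is the first active edge of }\bm{C}_v\right],
\]
whose right-hand side is at most $\sum_{v} \Prob[\OPT\text{ matches } u \text{ via } v] \le \Prob[u\text{ is matched by }\OPT] \le 1$. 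Dually, for the objective I aim to show $\sum_{\bm{e}} \val(\bm{e}) x_v(\bm{e}) = \E[\val(\bm{C}_v)]$ is at least $\OPT$'s expected contribution from $v$, since $\val(\bm{e})$ is exactly the expected weight of the first active edge of $\bm{e}$ when its edges are independently active.

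The main obstacle is reconciling this "first-active" accounting with the fact that, in the probe-commit model, $\OPT$ may genuinely benefit from declining a low-weight active edge in order to keep the $U$-side free for a later, heavier probe. My plan is to prove via a swap/exchange argument that, without decreasing $\OPT(G)$, one may reorder $\OPT$'s probes at each $v$ so that it commits to the first active edge it encounters in $\bm{C}_v$; intuitively, if $\OPT$ declines $e_i$ and later commits to $e_j$ with $j > i$, the heavier edge $e_j$ could have been probed first and committed to immediately, which is consistent with the commitment structure at $u$ because $u$ remained free until the later match anyway. Once this reduction is carried out, the matching identity aligns with $\OPT$'s true match probabilities, the objective identity aligns with $\OPT$'s true $v$-contribution, and summing over $v$ gives $\sum_{v,\bm{e}} \val(\bm{e})\, x_v(\bm{e}) \ge \OPT(G)$, proving $\OPT(G) \le \LPOPT(G)$.
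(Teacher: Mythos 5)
Your construction is correct, and it takes a genuinely different route from the paper's. The paper never extracts an LP solution from $\OPT$ directly: it introduces the relaxed stochastic matching problem (offline vertices matched at most once only \emph{in expectation}), proves via Lemma \ref{lem:non_adaptive_optimum} and the rounding procedure of Proposition \ref{prop:vertex_round} that an optimal \emph{relaxed} algorithm may be taken committal and fully non-adaptive, and only then reads off a feasible solution, concluding $\OPT(G)\le\rOPT(G)\le\LPOPT(G)$. You instead stay with $\OPT$ itself and decouple its probes at $v$ from $(\sta(f))_{f\in\partial(v)}$ by the counterfactual sequence $\bm{C}_v$ obtained by forcing the inactive branch at every $\partial(v)$-probe node. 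Since $\bm{C}_v$ is a function only of $\OPT$'s coins and of the states of $E\setminus\partial(v)$, it is independent of the states in $\partial(v)$, the distribution constraint \eqref{eqn:relaxation_efficiency_distribution} holds with equality, and once $\OPT$ is normalized so that it probes only edges it could currently add and commits to every active probe, the actual execution coincides with the counterfactual walk up to the first active entry of $\bm{C}_v$; hence $\{v \text{ matched to } (u,v)\}$ coincides with $\{(u,v)\text{ is the first active edge of }\bm{C}_v\}$, which gives feasibility of \eqref{eqn:relaxation_efficiency_matching} and an objective value \emph{equal} to $\OPT(G)$ (you only need ``at least''). This is shorter than the paper's route; what the relaxed-benchmark machinery buys in exchange is the exact characterization $\rOPT(G)=\LPOPT(G)$ (Theorem \ref{thm:LP_relaxation_benchmark_equivalence}), which the paper reuses for \ref{LP:new_id} and for relating \ref{LP:new} to \ref{LP:full_patience}.

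The one step you should repair is the ``swap/exchange'' justification of the commitment property. The property you need is exactly the paper's \ref{eqn:single_vertex_committal}, and it does hold w.l.o.g., but not for the reason you give: nothing forces the later accepted edge $e_j$ to be heavier than the declined $e_i$, and moving the probe of $e_j$ earlier changes what $\OPT$ knows when it schedules probes at \emph{other} online vertices, which is precisely the interleaving difficulty this theorem is about; a reordering argument would have to control that and is not needed. The standard argument is simpler: an optimal algorithm never probes an edge it would decline if active, because edge states are independent (so the revealed state is useless for any other decision) while declining forfeits the edge forever; formally, replace any probe node whose active branch declines by an unprobed $\Ber(p_e)$ coin flip with the same subtrees, which preserves the value distribution and, by downward-closedness, the feasibility of the probe sequences. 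The same replacement removes probes of edges with an already-matched endpoint, which you also need so that when the actual run reaches the first active edge $(u,v)$ of $\bm{C}_v$, the vertex $u$ is still free and the commit is legal. With that substitution your proof is complete.
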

When each $\scr{C}_v$ is downward-closed, \ref{LP:new} can be solved efficiently by using a deterministic separation oracle for the dual of \ref{LP:new}, in conjunction with the ellipsoid algorithm \cite{Groetschel,GartnerM}.
The separation oracle we require is precisely the greedy probing strategy we analyze in the vertex-weighted
setting of \Cref{sec:vertex_weights}. We provide the details of this reduction in \Cref{sec:algorithm_efficiency}.

For the case of patience values, a closely related LP was independently introduced by Brubach et al. \cite{brubach2021follow,brubach2021conf} to design probing algorithms for known i.i.d. instances and known i.d. instances with adversarial arrivals. Their competitive ratios are proven against an optimal solution to this LP.
They then argue that the LP solution relaxes the \textit{online} adaptive benchmark, which
is a weaker statement than \Cref{thm:new_LP_relaxation}. We explain below why this weaker
statement is easier to prove.

\textbf{Proving Theorem \ref{thm:new_LP_relaxation}:} In order to prove \Cref{thm:new_LP_relaxation},
one natural approach is to view $x_{v}(\bm{e})$ as the probability that the offline adaptive benchmark
probes the edges of $\bm{e}$ in order, where $v \in V$ and $\bm{e} \in \scr{C}_v$. 
\CM{Let us first \textit{hypothetically} assume that the following restrictive assumptions
on the offline adaptive benchmark hold without loss of generality (w.l.o.g.):}
\begin{enumerate}[label=(\subscript{P}{{\arabic*}})]
	\item \CM{Suppose $v \in V$ is currently unmatched. If $e=(u,v)$ is probed and $\sta(e)=1$, then $e$ is included in the matching.} \label{eqn:single_vertex_committal}
	\item For each $v \in V$, the edge probes involving $\partial(v)$ are determined independently of the edge states $(\sta(e))_{e \in \partial(v)}$. \label{eqn:single_vertex_non_adaptivity}
\end{enumerate}
\CM{Observe then that \ref{eqn:single_vertex_committal} and \ref{eqn:single_vertex_non_adaptivity} would imply
that the expected weight of the edge assigned to $v$ is
$\sum_{\bm{e} \in \scr{C}_v } \val(\bm{e}) \cdot x_{v}(\bm{e})$.
Moreover, the left-hand side of \eqref{eqn:relaxation_efficiency_matching} 
would correspond to the probability $u \in U$ is matched,
so $(x_{v}(\bm{e}))_{v \in V, \bm{e} \in \scr{C}_v}$
would be a feasible solution to \ref{LP:new}, and so
we could upper bound $\OPT(G)$ by $\LPOPT(G)$. Unfortunately,
while we can assume \ref{eqn:single_vertex_committal} holds w.l.o.g., we cannot simultaneously assume \ref{eqn:single_vertex_non_adaptivity}. This is because the probes involving $v \in V$ will in general depend on $(\sta(e))_{e \in \partial(v)}$. For instance, if $e \in \partial(v)$ is probed and inactive, then perhaps the offline adaptive benchmark next probes $e' =(u,v') \in \partial(v')$ for some $v' \neq v$. If $e'$ is active and thus added to the matching by \ref{eqn:single_vertex_committal}, then the offline adaptive benchmark can never subsequently probe $(u,v)$ without violating \ref{eqn:single_vertex_committal}, as $u$ is now unavailable to be matched to $v$. 
An alternative approach would be to define $x_{v}(\bm{e})$ as the probability that the offline adaptive benchmark probes $\bm{e}$ in order,
\textit{conditioned} on its first $|\bm{e}| -1$ edges being inactive. In this case,
$\OPT(G) = \sum_{v \in V}\sum_{\bm{e} \in \scr{C}_v } \val(\bm{e}) \cdot x_{v}(\bm{e})$; however then $(x_{v}(\bm{e}))_{v \in V, \bm{e} \in \scr{C}_v}$ need not satisfy constraint \eqref{eqn:relaxation_efficiency_distribution}. This is because
for a fixed $v \in V$, $\sum_{\bm{e} \in \scr{C}_v} x_{v}(\bm{e})$ is a sum over
$|\scr{C}_v|$ \textit{conditional} probabilities -- each of which conditions on a \textit{different} event -- and so its value could exceed $1$. Thus, neither of these decision variable interpretations
seems to lead to an easy proof of Theorem \ref{thm:new_LP_relaxation}.}

\CM{Before continuing, we note that when working with the \textit{online} adaptive benchmark of Brubach et al. \cite{brubach2021conf,brubach2021follow}, \ref{eqn:single_vertex_committal} and \ref{eqn:single_vertex_non_adaptivity} \textit{can} be assumed to hold w.l.o.g. As a result of the discussion following \ref{eqn:single_vertex_committal} and \ref{eqn:single_vertex_non_adaptivity}, this explains why Brubach et al. are able to easily argue that $\LPOPT(G)$ upper bounds this weaker benchmark on $G$.}

Returning to the proof of \Cref{thm:new_LP_relaxation}, our solution is to introduce an alternative interpretation of $(x_{v}(\bm{e}))_{v \in V, \bm{e} \in \scr{C}_v}$
based upon a  \textit{relaxation} of the offline stochastic matching problem
we refer to as the \textbf{relaxed stochastic matching problem}. A solution to this problem is a \textbf{relaxed probing algorithm}. A relaxed probing algorithm
operates in the same framework as an offline probing algorithm, yet it returns a one-sided matching of the online vertices which matches each offline node at most once \textit{in expectation}. Observe that if $\rOPT(G)$ is the performance of an \textit{optimal} relaxed probing algorithm, then by definition $\OPT(G) \le \rOPT(G)$. 
Crucially, there exists an optimal relaxed probing algorithm which 
satisfies \ref{eqn:single_vertex_committal} and which is \textbf{non-adaptive}. 
A relaxed probing algorithm is said to be non-adaptive, provided its edge probes are
determined independently of the edge states $(\sta(e))_{e \in E}$.  Non-adaptivity is a much stronger property than \ref{eqn:single_vertex_non_adaptivity}, and so by the above discussion we
are able to conclude that $\rOPT(G) \le \LPOPT(G)$. Since $\OPT(G) \le \rOPT(G)$,
this implies Theorem \ref{thm:new_LP_relaxation}. Proving the existence of an optimal relaxed probing algorithm which is non-adaptive is one of
the most technically challenging parts of the paper, and is the main content of Lemma \ref{lem:non_adaptive_optimum} of
Section \ref{sec:relaxation_adaptive_benchmark}. Note that there may be a simpler proof of Theorem \ref{thm:new_LP_relaxation}, however our relaxed stochastic matching
problem exactly characterizes \ref{LP:new} (i.e., $\rOPT(G)= \LPOPT(G)$), and
so it helps us understand \ref{LP:new}. For instance, in \Cref{sec:LP_relations}, we show that in the unbounded patience setting, \ref{LP:full_patience} of \cite{Gamlath2019} is also characterized by our relaxed matching problem. This implies that the LPs take on the same value, despite having very different formulations in this special setting.

\textbf{Defining the probing algorithms:}
After proving that \ref{LP:new} is a relaxation of the offline adaptive benchmark, we use it to design online probing algorithms. 
Suppose that we are presented a feasible solution, say $(x_{v}(\bm{e}))_{v \in V, \bm{e} \in \scr{C}_v}$, to \ref{LP:new} for $G$. 
For each $e \in E$, define
\begin{equation}
	\til{x}_{e}:= \sum_{\substack{\bm{e}' \in \scr{C}_v: \\ e \in \bm{e}'}} q(\bm{e}_{ < e}') \cdot x_{v}( \bm{e}'). 
\end{equation}
We refer to the values $(\til{x}_{e})_{e \in E}$
as the \textbf{edge variables}
of the solution $(x_{v}(\bm{e}))_{v \in V,\bm{e} \in \scr{C}_v}$.
If we now fix $s \in V$, then we can easily
leverage constraint \eqref{eqn:relaxation_efficiency_distribution} to
design a simple \textit{fixed vertex} probing algorithm which matches
each edge of $e \in \partial(s)$ with probability \textit{exactly} equal to $p_e \til{x}_e$. Specifically, draw $\bm{e}' \in \scr{C}_s$ with probability $x_{s}(\bm{e}')$. If $\bm{e}' = \lambda$, then return the empty set. Otherwise,
set $\bm{e}' = (e_{1}', \ldots ,e_{k}')$ for $k := |\bm{e}'| \ge 1$, and probe the
edges of $\bm{e}'$ in order. Return the first edge which is revealed to be active, if such an
edge exists. Otherwise, return the empty set. We refer to this algorithm as $\mathtt{VertexProbe}$,
and denote its output on the input $(s, \partial(s), (x_{s}(\bm{e}))_{\bm{e} \in \scr{C}_s})$ by
$\mathtt{VertexProbe}(s, \partial(s), (x_{s}(\bm{e}))_{ \bm{e} \in \scr{C}_{s}})$.
\begin{lemma}\label{lem:fixed_vertex_probe}
	For each $e \in \partial(s)$,
	$\mb{P}[ \mathtt{VertexProbe}(s, \partial(s), (x_{s}(\bm{e}))_{ \bm{e} \in \scr{C}_{s}}) =e]  = p_e \til{x}_e$.
\end{lemma}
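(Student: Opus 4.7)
The plan is to compute the matching probability directly by conditioning on which string $\bm{e}'$ is sampled in the first step of $\mathtt{VertexProbe}$. First I would observe that constraint \eqref{eqn:relaxation_efficiency_distribution} guarantees that $(x_s(\bm{e}))_{\bm{e} \in \scr{C}_s}$ is a genuine probability distribution over $\scr{C}_s$, so the sampling step is well-defined. Then I would partition the event $\{\mathtt{VertexProbe} = e\}$ according to the drawn string, writing
\[
\mb{P}[\mathtt{VertexProbe}(s,\partial(s),(x_{s}(\bm{e}))_{\bm{e}\in \scr{C}_s}) = e]
= \sum_{\substack{\bm{e}' \in \scr{C}_s: \\ e \in \bm{e}'}} x_s(\bm{e}') \cdot \mb{P}[\mathtt{VertexProbe} = e \mid \bm{e}' \text{ drawn}],
\]
noting that we only sum over strings containing $e$, since $e$ cannot be returned otherwise.

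Next I would analyze the conditional probability. Fixing $\bm{e}' \in \scr{C}_s$ with $e \in \bm{e}'$, the algorithm returns $e$ precisely when (i) every edge appearing in $\bm{e}'_{<e}$ is probed and found inactive, and (ii) $e$ itself is then probed and found active. By independence of the edge states $(\sta(f))_{f \in E}$, this event has probability exactly $q(\bm{e}'_{<e}) \cdot p_e$. Substituting back and recognizing the definition of $\til{x}_e$,
\[
\mb{P}[\mathtt{VertexProbe} = e] = p_e \sum_{\substack{\bm{e}' \in \scr{C}_s: \\ e \in \bm{e}'}} q(\bm{e}'_{<e}) \cdot x_s(\bm{e}') = p_e \cdot \til{x}_e,
\]
which is the claim. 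I don't anticipate any real obstacle here; the lemma is essentially a bookkeeping identity that unpacks the definition of $\til{x}_e$, and its role in the paper is to cleanly interface the LP solution with the subsequent analyses of the online algorithms.
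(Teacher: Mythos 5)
Your proof is correct and is precisely the direct computation the paper intends: the lemma is stated without an explicit proof (it is treated as an immediate consequence of constraint \eqref{eqn:relaxation_efficiency_distribution}, the independence of the edge states, and the definition of $\til{x}_e$), and your conditioning on the drawn string $\bm{e}'$ and the factor $q(\bm{e}'_{<e}) \cdot p_e$ is exactly that argument. No gaps.
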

\begin{remark}
	We can view Lemma \ref{lem:fixed_vertex_probe} as an \textbf{exact rounding guarantee}. The fact that such a guarantee exists, no matter the choice of $\scr{C}_s$, is one of the main benefits of working with \ref{LP:new}, opposed to
	\ref{LP:standard_definition_general} or \ref{LP:full_patience}. As discussed, a solution to \ref{LP:standard_definition_general} provably cannot be rounded exactly in this way. There \textit{does} exist an exact rounding guarantee for
	\ref{LP:full_patience}, however it only applies to the unbounded patience setting (i.e.,  $\scr{C}_v = \partial(s)^{(*)}$), and the procedure is much more complicated than ours (see Theorem \ref{thm:costello_svensson_guarantee} of \Cref{sec:LP_relations} for details).
\end{remark}

\begin{definition}[Propose - Known Stochastic Graph]\label{def:vertex_probe}
	We say that $\mathtt{VertexProbe}$ \textbf{proposes} $s$ to the vertex $u \in N(s)$ provided the algorithm outputs $(u,s)$ when executing on the fixed node $s \in V$. When it is clear that $\mathtt{VertexProbe}$ is being executed on $s$, we say that $s$ proposes to $u$.
\end{definition}
Consider now the following online probing algorithm, where
the ordering $\pi$ on $V$ is either adversarial or u.a.r.
\begin{algorithm}[H]
	\caption{Known Stochastic Graph} 
	\label{alg:known_stochastic_graph}
	\begin{algorithmic}[1]
		\Require a stochastic graph $G=(U,V,E)$.
		\Ensure a matching $\scr{M}$ of active edges of $G$.
		\State $\scr{M} \leftarrow \emptyset$.
		\State Compute an optimal solution of \ref{LP:new} for $G$, say $(x_{v}(\bm{e}))_{v \in V, \bm{e} \in \scr{C}_v}$
		\For{$s \in V$ in order based on $\pi$} 
		\State Set $e \leftarrow \mathtt{VertexProbe}(s, \partial(s), (x_{s}(\bm{e}))_{ \bm{e} \in \scr{C}_{s}})$.
		\If{$e=(u,s)$ for some $u \in U$, and $u$ is unmatched} \Comment{this line ensures $e \neq \emptyset$}
		\State Add $e$ to $\scr{M}$. \label{line:matched_edge}
		\EndIf
		\EndFor
		\State \Return $\scr{M}$.
	\end{algorithmic}
\end{algorithm}
\begin{remark}
	Technically, line \eqref{line:matched_edge}
	should occur within the $\mathtt{VertexProbe}$ subroutine to adhere
	to the probe-commit model, however we express our algorithms in this way for conciseness.
\end{remark}
\textbf{Improvement via online contention resolution:} 
Algorithm \ref{alg:known_stochastic_graph} does not attain a constant competitive ratio when $\pi$ is adversarial, and its competitive ratio is only $1/2$ when $\pi$ is u.a.r.
Thus, we must modify  the algorithm to prove $1/2$ and $1-1/e$ competitive ratios, even in the known stochastic graph setting. When $\pi$ is adversarial, we reduce the problem to designing a $1/2$-\textbf{selectable} \textbf{online contention resolution scheme (OCRS)} for a $1$-uniform matroid \footnote{The independent sets of a $1$-uniform matroid on ground set $S$ are the subsets of $S$ of size at most $1$.}. When $\pi$ is u.a.r., we reduce the problem to designing a $1-1/e$-selectable \textbf{random order contention resolution scheme (RCRS)} for a $1$-uniform matroid. (See \Cref{sec:known_id_additions} for formal definitions). 
In both cases, such contention resolution schemes are known to exist (see \cite{Ezra_2020} for the OCRS,
and \cite{Lee2018} for the RCRS). 
We now provide a high level overview of the reduction for the known stochastic graph case.
Each CRS we use is for a $1$-uniform matroid, so we omit this in the descriptions below.

First observe that $\LPOPT(G) = \sum_{e \in E} w_e p_{e} \til{x}_e$, and $\OPT(G) \le \LPOPT(G)$ by \Cref{thm:new_LP_relaxation}.
Thus, it suffices to design an online probing algorithm which matches each $e \in E$ with probability $1/2 \cdot z_e$ when $\pi$ is generated by an adversary, and $(1-1/e) \cdot z_e$ when $\pi$ is generated u.a.r,
where $z_{e}:= p_{e} \til{x}_e$. Observe that for a fixed $u \in U$,
as \Cref{alg:known_stochastic_graph} executes,
each $v \in V$ proposes to $u$ independently with probability $z_{u,v}$. 
On the other hand, $\sum_{v \in V} z_{u,v} \le 1$ by \eqref{eqn:relaxation_efficiency_matching} of \ref{LP:new}. Let us first assume $\pi$ is adversarial. Observe that if $u$ executes its own $\alpha$-selectable OCRS on ground set $\partial(u)$ as \Cref{alg:known_stochastic_graph}
executes, then each $e \in \partial(u)$ will be matched to $u$ with probability $\alpha \cdot z_e$. By having \textit{each} $u \in U$
concurrently execute its own $\alpha$-selectable OCRS, the resulting probing algorithm will be
$\alpha$-competitive. Thus, an $\alpha$-selectable OCRS can be used to design
an $\alpha$-competitive online probing algorithm in the AOM. Similarly, when $\pi$ is u.a.r., an $\alpha$-selectable RCRS can be used to design an $\alpha$-competitive online probing algorithm
in the ROM.

For random order arrivals, the RCRS based approach simplifies the pricing based approach Gamlath et al. \cite{Gamlath2019} used to attain a competitive ratio of $1-1/e$ in the special case of unbounded patience. This simplified approach was also observed by Fu et al. \cite{Fu_2021} in the context of the Gamlath et al. LP (\ref{LP:full_patience}). They assume unbounded patience in the probe-commit model,
and design a $8/15$-competitive algorithm for general graph random order vertex arrivals. It remains open whether their results can be extended to general patience values. For context, $0.395$ is the best known competitive ratio when allowing for arbitrary patience values and random order edge arrivals \cite{PollnerRSW22} (this was recently improved to $0.474$ in the unbounded patience setting \cite{macrury_contention_2023}).

\subsubsection{Extending to Known I.D. Instances:} Consider when $G$ is unknown and drawn from a known i.d. input $(H_{\typ}, (\scr{D}_{i})_{i=1}^{n})$. In this case, we generalize \ref{LP:new} to a new LP called \ref{LP:new_id}. This LP
departs from previous ones used in the probing literature, as it depends both on the type graph as well as the distributions. For each $i \in [n]$,
we introduce a collection of variables $(x_{i}(\bm{e} \, || \, b))_{\bm{e} \in \scr{C}_b, b \in B}$ associated with the distribution $\scr{D}_i$.  We again reduce to known contention resolution schemes, however the additional variables associated with the possible types
of $v_i \sim \scr{D}_i$ introduce correlated events which must be treated delicately in the context of CRS selectibility. Crucially, the schemes do \textit{not} make use of the type of vertex $v_i$, and so we are able to argue that analogous edge variable lower bounds hold as in the known stochastic graph setting.

\subsubsection{Worst-Case Instances and Random-Order Arrivals.}
Suppose that the adversary chooses a graph $G=(U,V,E)$ whose
online vertices arrive in random order $v_{1}, \ldots ,v_{n}$. Upon receiving the online vertices $V_{t}:=\{v_1, \ldots ,v_{t}\}$, in order
to generalize the matching algorithm of Kesselheim et al. \cite{KRTV2013}, we 
would ideally probe the edges of $\partial(v_t)$ suggested by $\OPT$ on $G_t$, where $G_{t}:=G[U \cup V_t]$
is the induced stochastic graph on $U \cup V_t$. However, since we wish for our algorithms to
be efficient in addition to attaining optimal competitive ratios, this strategy is not feasible. 
We instead solve \ref{LP:new} for $G_t$ to get a solution  $(x_{v}(\bm{e}))_{v \in V_t, \bm{e} \in \scr{C}_{v}}$,
and then execute $\mathtt{VertexProbe}$ on $(v_t, \partial(v_t),(x_{v}(\bm{e}))_{\bm{e} \in \scr{C}_{v_t}})$.
If edge $e_t=(u_t,v_t)$ is returned, and $u_t$ is unmatched, then we add $e_t$ to the current matching. In
\Cref{thm:ROM_edge_weights} of \Cref{sec:edge_weights}, we show
that this online algorithm attains an asymptotic competitive ratio of $1/e$. The analysis follows a similar
proof structure as presented in \cite{KRTV2013}.

\subsubsection{Solving \ref{LP:new} Efficiently and Vertex-weighted Worst-case Instances}
\CM{Let $v \in V$ be a fixed vertex of $G=(U,V,E)$. In \Cref{thm:efficient_star_dp}, we show how to probe $\partial(v)$ in such a way that $v$ is matched to a neighbouring edge $e=(u,v)$ whose weight $w_e$ is as large as possible in expectation. This is precisely the behaviour of $\OPT$ on the \textit{induced stochastic graph} $G[ \{v\} \cup U]$.
While computing such a strategy is computationally challenging for a general graph, in \Cref{thm:efficient_star_dp} we show that it can be done efficiently
for the \textit{star graph} $G[\{v\} \cup U]$. This was first observed by Purohit et al. \cite{Purohit2019} and Brubach et al. \cite{Brubach2019} for the special case when $v$ has patience $\ell_v$. Using a similar dynamic programming (DP) based approach, we generalize their algorithm $\textsc{DP-OPT}$ to apply to an arbitrary downward-closed constraint $\scr{C}_v$. In \Cref{thm:LP_solvability} of \Cref{sec:algorithm_efficiency}, we show that $\textsc{DP-OPT}$ provides a separation oracle for the dual of \ref{LP:DP}, thus leading to an efficient way to solve \ref{LP:new} (despite it having exponentially many variables).}  

\CM{Suppose now that $G=(U,V,E)$ has vertex weights $(w_u)_{u \in U}$. In \Cref{sec:vertex_weights}, we use $\textsc{DP-OPT}$ to design a \textit{greedy} online probing algorithm for $G$. Upon the arrival of $v$, if $R \subseteq U$ denotes the unmatched vertices available, we apply $\textsc{DP-OPT}$ to determine which edges of $G[\{v\} \cup R]$
to probe. We analyze our algorithm using an LP relaxation of $\OPT$ called \ref{LP:DP}, a generalization of an LP introduced by Brubach et al. \cite{Brubach2019} to arbitrary probing constraints $(\scr{C}_v)_{v \in V}$. For both adversarial and random order arrivals, we apply the primal-dual method of Devanur et al. \cite{DJK2013}. As the matching of our algorithm is constructed, we simultaneously construct a (random)  dual solution of \ref{LP:DP}. For adversarial arrivals, the dual solution is feasible in expectation, and twice as large as the expected weight of the algorithm's matching, thus leading to a competitive ratio of $1/2$. When the vertices arrive in random order, we assign dual variables using $g(Y_v)$, where $g(y) := e^{y-1}$ is the function used to analysis the \textsc{Ranking} algorithm in \cite{DJK2013}, and $Y_v \in [0,1]$ is the \textit{random arrival time} of $v$. Interestingly, the \textit{non-monotonic} behaviour of our greedy algorithm prevents the dual-solution from being feasible in expectation, and so we are not able to prove an unconditional competitive ratio of $1-1/e$. Instead, in \Cref{thm:ROM_rankable} we characterize a number of important inputs in which our algorithm is monotonic, thus allowing us to prove a conditional competitive ratio of $1-1/e$.}

\section{Relaxing the Offline Adaptive Benchmark via \ref{LP:new}} \label{sec:relaxation_adaptive_benchmark}

Given a stochastic graph $G=(U,V,E)$, we define the \textbf{relaxed stochastic matching problem}.
A solution to this problem is a \textbf{relaxed probing algorithm} $\scr{A}$,
which operates in the previously described framework of an (offline) probing algorithm. 
That is, $\scr{A}$ is firstly given access to a stochastic graph $G=(U,V,E)$. Initially, the edge states $(\sta(e))_{e \in E}$ are unknown to $\scr{A}$, and $\scr{A}$ must adaptivity
probe these edges to reveal their states, while respecting the downward-closed probing constraints $(\scr{C}_v)_{v \in V}$. As in the offline problem, $\scr{A}$ returns a subset $\scr{M}$ of its active edge probes,
and its goal is to maximize $\mb{E}[w(\scr{M})]$, where $w(\scr{M}):= \sum_{e \in \scr{M}} w_{e}$.
However, unlike before where $\scr{M}$ was required to be a matching of $G$, we relax the required properties of $\scr{M}$:
\begin{enumerate}
	\item Each $v \in V$ appears in at most one edge of $\scr{M}$.
	\item If $N_{u}$ counts the number of edges of $\partial(u)$ which
	are included in $\scr{M}$, then $\mb{E}[N_{u}] \le 1$ for each $u \in U$.
\end{enumerate}
We refer to $\scr{M}$ as a \textbf{one-sided matching} of the online nodes,
and abuse terminology slightly and say that $e \in E$ is matched by $\scr{A}$ if $e \in \scr{M}$.  In constructing $\scr{M}$, $\scr{A}$ must operate in the previously
described probe-commit model. We define the \textbf{relaxed benchmark}
as an optimal relaxed probing algorithm,
and denote its expected value when executing on $G$
by $\rOPT(G)$. Observe that since any offline probing algorithm is a relaxed probing algorithm,
we have that
\begin{equation} \label{eqn:relaxed_benchmark_upper_bound}
	\OPT(G) \le \rOPT(G).
\end{equation}
We say that $\scr{A}$ is \textbf{non-adaptive},
provided its edge probes can be described as a (randomized) function of $G$.
Equivalently, $\scr{A}$ is non-adaptive if the edges probes of $\scr{A}$ are statistically
independent from $(\sta(e))_{e \in E}$.
Unlike for the offline stochastic matching problem,
there exists a relaxed probing algorithm which is both optimal \textit{and} non-adaptive:
\begin{lemma} \label{lem:non_adaptive_optimum}
	For any stochastic graph $G=(U,V,E)$ with downward-closed probing constraints $(\scr{C}_v)_{v \in V}$,
	there exists an optimum relaxed probing algorithm $\scr{B}$ which satisfies the following properties:
	\begin{enumerate}[label=(\subscript{Q}{{\arabic*}})]
		\item If $e=(u,v)$ is probed, $\sta(e)=1$, and $v$ was previously unmatched,
		then $\scr{B}$ matches $e$. \label{eqn:committal}
		\item $\scr{B}$ is non-adaptive on $G$. \label{eqn:non_adaptive}
	\end{enumerate}
\end{lemma}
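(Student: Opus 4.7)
The plan is to take any optimum relaxed probing algorithm $\scr{A}$ and construct a non-adaptive committal relaxed probing algorithm $\scr{B}$ with the same expected weight, which is then itself optimum. For each $v \in V$ and $u \in U$, set $\mu_v(u) := \mb{P}_{\scr{A}}[(u,v) \in \scr{M}]$. Because $v$ appears in at most one edge of $\scr{M}$, we have $\sum_u \mu_v(u) \le 1$, and by the relaxed constraint, $\sum_v \mu_v(u) = \mb{E}_{\scr{A}}[N_u] \le 1$ for every $u \in U$. Moreover, by linearity, $\mb{E}_{\scr{A}}[w(\scr{M})] = \sum_{v,u} w_{u,v}\, \mu_v(u)$.

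The crux is a single-vertex realizability claim: for each $v \in V$ there exists a distribution $\pi_v$ over $\scr{C}_v$ with
\[
\sum_{\bm{e} \in \scr{C}_v : (u,v) \in \bm{e}} p_{u,v}\, q(\bm{e}_{<(u,v)})\, \pi_v(\bm{e}) \;=\; \mu_v(u) \qquad \forall u \in U.
\]
To establish this, I would condition on all randomness of $\scr{A}$ that is external to $\partial(v)$. Under this conditioning, $\scr{A}$ induces a (possibly adaptive, possibly non-committal) single-vertex strategy $\sigma_v$ on $\partial(v)$ respecting $\scr{C}_v$, whose matching profiles integrate to $(\mu_v(u))_{u}$. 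I would then prove, by induction on the depth of $\sigma_v$'s decision tree, that the matching profile of any such $\sigma_v$ lies in the convex hull of the profiles $\{(p_{u,v}\, q(\bm{e}_{<(u,v)}))_{u \in U} : \bm{e} \in \scr{C}_v\}$ corresponding to non-adaptive committal strategies. The inductive step splits on whether $\sigma_v$'s root edge $e$ is probed and on its outcome, applies the hypothesis to the resulting subtrees (whose probing constraints, obtained by restricting $\scr{C}_v$ to sequences not involving $e$, remain downward-closed), and then combines the sub-distributions; averaging over $\scr{A}$'s external randomness yields the claimed $\pi_v$.

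Given $(\pi_v)_{v \in V}$, define $\scr{B}$ to run, independently for each $v \in V$, the procedure that samples $\bm{e} \sim \pi_v$, probes the edges of $\bm{e}$ in the specified order, and matches $v$ to the first active edge among them (if any). By construction $\scr{B}$ is non-adaptive and committal, so (Q1) and (Q2) hold. A direct computation gives $\mb{P}_{\scr{B}}[(u,v) \in \scr{M}] = \mu_v(u)$, whence $\mb{E}_{\scr{B}}[N_u] = \sum_v \mu_v(u) \le 1$, so $\scr{B}$ is a feasible relaxed probing algorithm, and $\mb{E}_{\scr{B}}[w(\scr{M})] = \sum_{v,u} w_{u,v}\, \mu_v(u) = \mb{E}_{\scr{A}}[w(\scr{M})] = \rOPT(G)$, so $\scr{B}$ is optimum.

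The main obstacle is the single-vertex realizability claim. A naive ``flatten the decision tree into its root-to-leaf probe sequences'' does not work: in a non-committal $\sigma_v$, a match to $(u,v)$ may occur along a branch in which some earlier probe was active yet uncommitted, whereas in a non-adaptive committal strategy a match to $(u,v)$ via sequence $\bm{e}$ requires every edge preceding $(u,v)$ in $\bm{e}$ to be inactive. The induction compensates for this mismatch by mixing sequences that start with the root edge and those that skip it, in proportions tuned to $\sigma_v$'s (state-dependent) commit behavior at the root; the downward-closedness of $\scr{C}_v$ is what keeps the restricted probing constraints in the recursion well-defined.
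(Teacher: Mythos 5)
Your proposal is correct in outline but takes a genuinely different route from the paper. The paper first assumes w.l.o.g.\ that the optimum relaxed algorithm is committal, then records for each $v$ the full distribution of its probe \emph{sequences} at $\partial(v)$ -- the probabilities $x_v(\bm{e}) = \mb{P}[\bm{X}^v_{\le |\bm{e}|}=\bm{e}]$ -- derives the consistency inequality $\sum_{e}x_v(\bm{e}',e)\le (1-p_{e'_k})x_v(\bm{e}')$ from the fact that continuing past $\bm{e}'$ forces its last edge to be inactive, rescales by $q(\cdot)$, and invokes an explicit sequential sampling procedure (Proposition \ref{prop:vertex_round}) that reproduces these prefix probabilities exactly; optimality then follows because the whole per-vertex sequence distribution, not just the match marginals, is preserved. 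You instead keep only the marginals $\mu_v(u)=\mb{P}[(u,v)\in\scr{M}]$ and prove a single-vertex realizability statement -- that the matching profile of any adaptive (even non-committal) strategy respecting $\scr{C}_v$ lies in the convex hull of first-active profiles of fixed sequences -- by conditioning on the randomness external to $\partial(v)$ and inducting on the decision tree. This works: the mixture you describe (weight $c$ on the inactive-branch sequences with the root edge prepended, weight $1-c$ on a $p_e/(1-p_e)$ blend of the two branches' sequences without it) does reproduce the target profile, and since the objective and the relaxed constraint $\mb{E}[N_u]\le 1$ are linear in the marginals, preserving marginals suffices. What each approach buys: the paper's argument is constructive and yields the stronger sequence-level statement that it reuses elsewhere (e.g.\ $\rOPT(G)=\LPOPT(G)$), with the adaptivity removed by a single inequality rather than a tree induction; yours avoids the committal-w.l.o.g.\ step and isolates a clean convex-hull fact, at the cost of actually carrying out the induction. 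Two details to tighten: the residual constraint after probing the root edge $e$ should be $\{\bm{f}:(e,\bm{f})\in\scr{C}_v\}$ (downward-closed, and exactly what makes prepending $e$ legal), not ``sequences of $\scr{C}_v$ not involving $e$'' -- under a patience constraint the latter is too large and prepending could leave $\scr{C}_v$; and to meet the paper's literal definition of non-adaptivity in \ref{eqn:non_adaptive}, $\scr{B}$ should probe the entire sampled sequence (matching only the first active edge), rather than stopping at the first success, so that the set of probed edges is independent of $(\sta(e))_{e\in E}$.
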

\begin{remark}
	Note that \ref{eqn:non_adaptive} implies the hypothetical property \ref{eqn:single_vertex_non_adaptivity}, yet
	is much stronger.
\end{remark}
Let us assume that Lemma \ref{lem:non_adaptive_optimum} holds for now.
\begin{proof}[Proof of Theorem \ref{thm:new_LP_relaxation}]
	Consider algorithm $\scr{B}$ of Lemma \ref{lem:non_adaptive_optimum},
	and define $x_{v}(\bm{e})$ to be the probability that $\scr{B}$
	probes the edges of $\bm{e}$ in order for $v\in V$ and $\bm{e} \in \scr{C}_v$.
	Since $\scr{B}$ is a relaxed probing algorithm, we 
	can apply properties \ref{eqn:committal} and \ref{eqn:non_adaptive} 
	to show that $(x_{v}(\bm{e}))_{v \in V, \bm{e} \in \scr{C}_v}$
	is a feasible solution to \ref{LP:new}. Moreover, if $\scr{N}$ is returned when $\scr{B}$ executes on $G$, then
	$
	\mb{E}[w(\scr{N})] = \sum_{v \in V} \sum_{\bm{e} \in \scr{C}_v } \val(\bm{e}) \cdot x_{v}(\bm{e}).
	$
	Thus, the optimality of $\scr{B}$ implies
	that $\rOPT(G) \le \LPOPT(G)$, and so together with \eqref{eqn:relaxed_benchmark_upper_bound},
	Theorem \ref{thm:new_LP_relaxation} follows. 
\end{proof}
\begin{remark}
	As mentioned, \ref{LP:new} is an exact LP formulation of the relaxed stochastic matching problem, as we prove in Theorem \ref{thm:LP_relaxation_benchmark_equivalence} of Appendix \ref{sec:LP_relations}.
	
\end{remark}

\subsection{Proving Lemma \ref{lem:non_adaptive_optimum}}
Let us suppose that $G=(U,V,E)$ is a stochastic graph with downward-closed
probing constraints $(\scr{C}_v)_{v \in V}$. In order to prove
Lemma \ref{lem:non_adaptive_optimum}, we must show that there exists
an optimal relaxed probing algorithm which is non-adaptive and satisfies \ref{eqn:committal}.
Our high level approach is to consider an optimal relaxed probing algorithm $\scr{A}$
which satisfies \ref{eqn:committal}, and then to construct a new
non-adaptive algorithm $\scr{B}$ by \textit{stealing} the strategy
of $\scr{A}$, without any loss in performance. More
specifically, we construct $\scr{B}$ by writing down for each $v \in V$
and $\bm{e} \in \scr{C}_v$ the probability that
$\scr{A}$ probes the edges of $\bm{e}$ in order. These
probabilities necessarily satisfy certain inequalities which we make
use of in designing $\scr{B}$. In order to do so, we need a technical randomized rounding
procedure whose precise relevance will become clear in the proof of
Lemma \ref{lem:non_adaptive_optimum}.

Suppose that $\bm{e} \in E^{(*)}$, and recall that $\lambda$ is the empty string/character. 
Let us now assume that $(y_{v}(\bm{e}))_{\bm{e} \in \scr{C}_v}$ is a collection of non-negative values
which satisfy $y_{v}(\lambda)=1$, and
\begin{equation} \label{eqn:marginal_distribution}
	\sum_{\substack{e \in \partial(v): \\ (\bm{e}',e) \in \scr{C}_v}} y_{v}(\bm{e}', e) \le y_{v}(\bm{e}'),
\end{equation}
for each $\bm{e}' \in \scr{C}_v$. 

\begin{proposition} \label{prop:vertex_round}
	Given a collection of values $(y_{v}(\bm{e}))_{\bm{e} \in \scr{C}_v}$ which satisfy
	$y_{v}(\lambda)=1$ and \eqref{eqn:marginal_distribution}, there exists a distribution $\scr{D}^v$ supported on $\scr{C}_v$,
	such that if $\bm{Y} \sim \scr{D}^v$, then for each $\bm{e}=(e_1,\ldots,e_k) \in \scr{C}_v$ with $k:= |\bm{e}| \ge 1$,
	it holds that
	\begin{equation}\label{eqn:target_probability}
		\mb{P}[ (\bm{Y}_{1}, \ldots , \bm{Y}_{k}) = (e_1, \ldots , e_{k})] = y_{v}(\bm{e}),
	\end{equation}
	where $\bm{Y}_1,\ldots ,\bm{Y}_k$ are the first $k$ characters of $\bm{Y}$
	(where $\bm{Y}_i := \lambda$ if $\bm{Y}$ has no $i^{th}$ character).
	
\end{proposition}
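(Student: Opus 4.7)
The plan is to construct $\scr{D}^v$ via a top-down random walk on the tree of prefixes. View $\scr{C}_v$ as a rooted tree: the root is the empty string $\lambda$, and the children of $\bm{e}' \in \scr{C}_v$ are the tuples $(\bm{e}', e)$ for $e \in \partial(v)$ with $(\bm{e}', e) \in \scr{C}_v$. This tree is finite since $\partial(v)$ is finite, and it is well-defined because $\scr{C}_v$ is downward-closed. The walk starts at $\lambda$, and upon reaching a node $\bm{e}'$ with $y_v(\bm{e}') > 0$, it either terminates (outputting $\bm{Y} := \bm{e}'$) with probability $1 - \sum_{e : (\bm{e}', e) \in \scr{C}_v} y_v(\bm{e}', e)/y_v(\bm{e}')$, or moves to the child $(\bm{e}', e)$ with probability $y_v(\bm{e}', e)/y_v(\bm{e}')$. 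Hypothesis \eqref{eqn:marginal_distribution} guarantees that the termination probability is non-negative, and nodes with $y_v(\bm{e}') = 0$ are never reached, so the walk is well-defined and terminates in at most $|\partial(v)|$ steps.

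Let $\scr{D}^v$ be the distribution of the terminal $\bm{Y}$. I would then verify \eqref{eqn:target_probability} by induction on $k = |\bm{e}|$. Let $R(\bm{e})$ denote the probability that the walk reaches node $\bm{e}$ (equivalently, that the first $k$ characters of $\bm{Y}$ equal $\bm{e}$, since once the walk visits $\bm{e}$ it either stops there or extends $\bm{e}$). The base case $R(\lambda) = 1 = y_v(\lambda)$ is immediate. For the inductive step with $\bm{e} = (\bm{e}', e)$, conditioning on reaching $\bm{e}'$ gives
\[
R(\bm{e}) = R(\bm{e}') \cdot \frac{y_v(\bm{e})}{y_v(\bm{e}')} = y_v(\bm{e}'): \frac{y_v(\bm{e})}{y_v(\bm{e}')} = y_v(\bm{e}),
\]
with the case $y_v(\bm{e}') = 0$ handled separately since then $y_v(\bm{e}) = 0$ as well by \eqref{eqn:marginal_distribution} and $R(\bm{e}) = 0$ trivially.

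I expect no serious obstacle here; the main point is just to recognize that \eqref{eqn:marginal_distribution} is precisely the consistency condition needed for the conditional probabilities $y_v(\bm{e}', e)/y_v(\bm{e}')$ together with the residual mass $1 - \sum_e y_v(\bm{e}', e)/y_v(\bm{e}')$ to form a valid probability distribution at each internal node. The only minor care needed is handling zero-weight nodes to avoid division by zero, which is cosmetic since such nodes are unreachable.
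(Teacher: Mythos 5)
Your construction is exactly the paper's \textsc{VertexRound} procedure: the paper's exit probability $z_v(\bm{e}')$ and conditional edge draw compose to the same transition probability $y_v(\bm{e}',e)/y_v(\bm{e}')$ you use, and the same induction on prefix length verifies \eqref{eqn:target_probability}. The only difference is bookkeeping for zero-mass strings (the paper restricts to the downward-closed subfamily $\scr{C}^{>}_v$ of positive-mass strings, you note such nodes are unreachable), so the proposal is correct and essentially identical to the paper's proof.
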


\begin{proof}[Proof of Proposition \ref{prop:vertex_round}]
	First define $\scr{C}^{>}_v:= \{ \bm{e}' \in \scr{C}_v : y_{v}(\bm{e}') > 0\}$, which we observe
	is downward-closed since by assumption $\scr{C}_v$ is downward-closed and \eqref{eqn:marginal_distribution} holds. We
	prove the proposition for $\scr{C}^{>}_v$, which we then argue implies the proposition holds for $\scr{C}_v$.
	Observe now that for each $\bm{e}' \in \scr{C}^{>}_v$, we have that
	\begin{equation}\label{eqn:online_vertex_edge_distribution}
		\sum_{\substack{e \in \partial(v): \\ (\bm{e}',e) \in \scr{C}^{>}_v}} \frac{y_{v}(\bm{e}', e)}{y_{v}(\bm{e}')} \le 1
	\end{equation}
	as a result of \eqref{eqn:marginal_distribution} (recall that $y_{v}(\lambda):=1$). We thus define for
	each $\bm{e}' \in \scr{C}^{>}_v$,
	\begin{equation}\label{eqn:pass_probability}
		z_{v}(\bm{e}'):= 1 - \sum_{\substack{e \in \partial(v): \\ (\bm{e}',e) \in \scr{C}^{>}_v}} \frac{y_{v}(\bm{e}',e)}{y_{v}(\bm{e}')},
	\end{equation}
	which we observe has the property that $0  \le  z_{v}(\bm{e}') <  1$. The strict inequality follows from the definition
	of $\scr{C}^{>}_v$. This leads to the following randomized rounding algorithm,
	which we claim outputs a random string $\bm{Y}$ which satisfies the desired properties:
	\begin{algorithm}[H]
		\caption{VertexRound} \label{alg:general_vertex_rounding}
		\begin{algorithmic}[1]
			\Require a collection of values $(y_{v}(\bm{e}))_{\bm{e} \in \scr{C}^{>}_v}$ satisfying \eqref{eqn:marginal_distribution} and $y_{v}(\lambda)=1$.
			\Ensure a random string $\bm{Y}=(Y_{1},\ldots ,Y_{|\partial(v)|})$ supported on $\scr{C}^{>}_v$.
			\State Set $\bm{e}' \leftarrow \lambda$.
			\State Initialize $Y_{i}=\lambda$ for each $i=1, \ldots , |\partial(v)|$.
			\For{$i=1, \ldots , |\partial(v)|$}
			\State Exit the ``for loop'' with probability $z_{v}(\bm{e}')$.
			\Comment{pass with a certain probability -- see \eqref{eqn:pass_probability}}
			\State Draw $e \in \partial(v)$ satisfying $(\bm{e}',e) \in \scr{C}^{>}_v$ with probability $y_{v}(\bm{e}', e)/ (y_{v}(\bm{e}') \, (1-z_{v}(\bm{e}')))$. \label{line:edge_draw}
			\State Set $Y_{i}=e$.
			\State $\bm{e}' \leftarrow (\bm{e}',e)$.
			\EndFor
			\State \Return $\bm{Y}=(Y_{1},\ldots ,Y_{|\partial(v)|})$. 		\Comment{concatenate the edges in order and return the resulting string}
		\end{algorithmic}
	\end{algorithm}
	Clearly, the random string $\bm{Y}$ is supported on $\scr{C}^{>}_v$, thanks to line \ref{line:edge_draw} of Algorithm \ref{alg:general_vertex_rounding}.
	We now show that \eqref{eqn:target_probability} holds. As such,
	let us first assume $k=1$, and $e \in \partial(v)$ satisfies $(e) \in \scr{C}^{>}_v$. Observe
	that
	\[
	\mb{P}[Y_{1} = e] = (1 - z_{v}(\lambda)) \frac{ y_{v}(e)}{1 - z_{v}(\lambda)} = y_{v}(e),
	\]
	as the algorithm does not exit the ``for loop'' with probability $1-z_{v}(\lambda)$, in which case it draws $e$
	with probability $y_{v}(e)/(1-z_{v}(\lambda))$. In general, take $k \ge 2$, and assume that for each $\bm{e}' \in \scr{C}^{>}_v$ with $1 \le |\bm{e}'| < k$,
	it holds that
	$
	\mb{P}[ (Y_{1},\ldots ,Y_{k}) = \bm{e}'] = y_{v}(\bm{e}').
	$
	If we now fix $\bm{e} =(e_{1}, \ldots, e_{k}) \in \scr{C}^{>}_v$ with $|\bm{e}|=k$,
	observe that $\bm{e}_{<k}:= (e_{1}, \ldots ,e_{k-1}) \in \scr{C}^{>}_v$, as $\scr{C}^{>}_v$ is downward-closed.
	Moreover,
	\begin{align*}
		\mb{P}[(Y_{1}, \ldots , Y_{k}) = \bm{e}] &= \mb{P}[ Y_{k} =  e_k \, | \, (Y_{1}, \ldots , Y_{k-1}) = \bm{e}_{<k}] \cdot \mb{P}[ (Y_{1}, \ldots , Y_{k-1}) = \bm{e}_{<k}] \\
		&= \mb{P}[ Y_{k} = e_k \, | \, (Y_{1}, \ldots , Y_{k-1}) = \bm{e}_{<k})] \cdot y_{v}(\bm{e}_{<k}),
	\end{align*}
	where the last line follows by the induction hypothesis since $\bm{e}_{<k} \in \scr{C}^{>}_v$ is of length $k-1$. 
	We know however that
	\[
	\mb{P}[ Y_{k} = \bm{e}_k \, | \, (Y_{1}, \ldots , Y_{k-1}) = \bm{e}_{<k}]= (1-z_{v}(\bm{e}_{<k})) \, \frac{y_{v}(\bm{e}_{<k},e_k)}{y_{v}(\bm{e}_{<k}) \, (1-z_{v}(\bm{e}_{<k}))} = \frac{y_{v}(\bm{e}_{<k},e_k)}{y_{v}(\bm{e}_{<k})}.
	\]
	This is because once we condition on the event $(Y_{1}, \ldots , Y_{k-1}) = \bm{e}_{<k}$, we know that the algorithm
	does not exit the ``for loop'' with probability $1 - z_{v}(\bm{e}_{<k})$, in which case it selects $e_{k} \in \partial(v)$ with probability
	$y_{v}(\bm{e}_{<k}, e_k)/(y_{v}(\bm{e}_{<k}) \, (1-z_{v}(\bm{e}_{<k})))$, since $(\bm{e}_{<k}, e_k) \in \scr{C}^{>}_v$ by assumption. As such, we have that
	$
	\mb{P}[(Y_{1}, \ldots , Y_{k}) = \bm{e}] = y_{v}(\bm{e}),
	$
	and so the proposition holds for $\scr{C}^{>}_v$. 
	To complete the argument, observe
	that since $\bm{Y}$ is supported on $\scr{C}^{>}_v$, the substrings
	of $\bm{Y}$ are also supported on $\scr{C}^{>}_v$, as $\scr{C}^{>}_v$ is downward-closed. Thus, $\bm{Y}$
	satisfies \eqref{eqn:target_probability} for the non-empty strings of $\scr{C}_v \setminus \scr{C}^{>}_v$, in addition
	to the non-empty strings of $\scr{C}^{>}_v$.
\end{proof}

\begin{proof}[Proof of Lemma \ref{lem:non_adaptive_optimum}]
	
	Suppose that $\scr{A}$ is an optimal relaxed probing algorithm which returns
	the one-sided matching $\scr{M}$ after executing on the stochastic graph
	$G=(U,V,E)$. In a slight abuse of terminology, we say that $e$
	is matched by $\scr{A}$, provided $e$ is included in $\scr{M}$.
	We shall also make the simplifying assumption that $p_{e} < 1$ for each $e \in E$,
	as the proof can be clearly adapted to handle the case when certain edges have
	$p_{e}=1$ by restricting which strings of each $\scr{C}_v$ are considered.

	Observe that since $\scr{A}$ is optimal, it is clear
	that we may assume the following properties hold w.l.o.g. for each $e \in E$:
	\begin{enumerate}
		\item $e$ is probed only if $e$ can be added to the currently constructed one-sided
		matching. \label{eqn:probing_only_if}
		\item If $e$ is probed and $\sta(e)=1$, then $e$ is included in $\scr{M}$. \label{eqn:if_active_probe}
	\end{enumerate}
	Thus, in order to prove the lemma, we must find an alternative algorithm $\scr{B}$ which
	is non-adaptive, yet continues to be optimal.
	To this end, we shall first express $\mb{E}[w(\scr{M}(v))]$ in
	a convenient form for each $v \in V$, where  $w(\scr{M}(v))$ is the weight of the edge matched to $v$ (which is $0$ if no match occurs).
	
	Given $v \in V$ and $1 \le i \le |U|$, we define $X_{i}^{v}$ to be the $i^{th}$ edge adjacent to $v$ that is probed by $\scr{A}$.
	This is set equal to $\lambda$ by convention, provided no such edge exists. We may
	then define $\bm{X}^{v}:=(X^{v}_{1}, \ldots , X^{v}_{|U|})$, and
	$\bm{X}_{\le k}^{v} := (X^{v}_1, \ldots , X^{v}_k)$ for each $1 \le k \le |U|$. Moreover, given $\bm{e} =(e_{1}, \ldots ,e_{k}) \in E^{(*)}$ with $k \ge 1$, define $S(\bm{e})$ to be
	the event in which $e_{k}$ is the only active edge amongst $e_{1}, \ldots ,e_{k}$.
	Observe then that
	\[
	\mb{E}[ w(\scr{M}(v))] = \sum_{\substack{\bm{e}=(e_{1}, \ldots ,e_{k}) \in \scr{C}_{v}: \\ k \ge 1}} w_{e_k} \mb{P}[S(\bm{e}) \cap \{\bm{X}^{v}_{\le k} = \bm{e}\}],
	\]
	as \eqref{eqn:probing_only_if} and \eqref{eqn:if_active_probe} ensure $v$ is matched to the first probed
	edge which is revealed to be active. Moreover,
	if $\bm{e}=(e_{1}, \ldots ,e_{k}) \in \scr{C}_v$ for $k \ge 2$, then 
	\begin{equation}
		\mb{P}[S(\bm{e}) \cap  \{\bm{X}_{\le k}^{v} = \bm{e} \}] = \mb{P}[\{\sta(e_k) =1 \}\cap \{\bm{X}_{\le k}^{v} = \bm{e}\}],
	\end{equation}
	as \eqref{eqn:probing_only_if} and \eqref{eqn:if_active_probe} ensure $\bm{X}_{\le k}^{v} = \bm{e}$ only if $e_{1}, \ldots , e_{k-1}$ are inactive. Thus, 
	\begin{align*}
		\mb{E}[ w(\scr{M}(v))] &= \sum_{\substack{\bm{e}=(e_{1}, \ldots ,e_{k}) \in \scr{C}_{v}: \\ k \ge 1}} w_{e_k} \mb{P}[S(\bm{e}) \cap \{\bm{X}_{\le k}^{v} = \bm{e} \}] \\
		&= \sum_{\substack{\bm{e}=(e_{1}, \ldots ,e_{k}) \in \scr{C}_{v}: \\ k \ge 1}} w_{e_k} \mb{P}[\{\sta(e_{k}) =1\} \cap \{\bm{X}_{\le k}^{v} = \bm{e} \}] \\
		&= \sum_{\substack{\bm{e}=(e_{1}, \ldots ,e_{k}) \in \scr{C}_{v}: \\ k \ge 1}} w_{e_k} p_{e_k} \mb{P}[\bm{X}_{\le k}^{v} = \bm{e}],
	\end{align*}
	where the final equality holds since $\scr{A}$ must decide on whether to probe $e_{k}$ prior to revealing $\sta(e_k)$.
	As a result, after summing over $v \in V$,
	\begin{equation}\label{eqn:target_value}
		\mb{E}[ w(\scr{M})] = \sum_{v \in V} \sum_{\substack{\bm{e}=(e_{1}, \ldots ,e_{k}) \in \scr{C}_{v}: \\ k \ge 1}} w_{e_k} p_{e_k} \mb{P}[\bm{X}_{\le k}^{v} = \bm{e} ].
	\end{equation}
	Our goal is to find a non-adaptive relaxed probing algorithm which matches the value of \eqref{eqn:target_value}.
	Thus, for each $v \in V$ and $\bm{e}=(e_{1}, \ldots ,e_{k}) \in \scr{C}_v$ with $k \ge 1$,
	define $x_{v}(\bm{e}):= \mb{P}[\bm{X}_{\le k}^{v} = \bm{e}]$,
	where $x_{v}(\lambda):=1$. Observe now that for each $\bm{e}'=(e_{1}', \ldots , e_{k}') \in \scr{C}_v$,
	\begin{equation}\label{eqn:probability_consistency_conditional}
		\sum_{\substack{e \in \partial(v): \\ (\bm{e}',e) \in \scr{C}_v}} \mb{P}[\bm{X}_{\le k+1}^{v} = (\bm{e}',e) \, | \, \bm{X}_{\le k}^{v} = \bm{e}'] \le 1 - p_{e_k'}.
	\end{equation}
	To see \eqref{eqn:probability_consistency_conditional}, observe that the the left-hand side corresponds to the probability $\scr{A}$
	probes some edge $e \in \partial(v)$, given it already probed $\bm{e}'$ in order. On the other hand, if a subsequent edge is probed,
	then \eqref{eqn:probing_only_if} and \eqref{eqn:if_active_probe} imply that $e'_k$ must have been inactive, which occurs
	independently of the event $\bm{X}_{\le k}^{v} = \bm{e}'$. This explains the right-hand side of \eqref{eqn:probability_consistency_conditional}. Using \eqref{eqn:probability_consistency_conditional}, the values
	$(x_{v}(\bm{e}))_{\bm{e} \in \scr{C}_{v}}$ satisfy
	\begin{equation}\label{eqn:probability_consistency}
		\sum_{\substack{e \in \partial(v): \\ (\bm{e}',e) \in \scr{C}_v}} x_{v}(\bm{e}', e) \le (1 - p_{e'_k}) \cdot x_{v}(\bm{e}'),
	\end{equation}
	for each $\bm{e}'=(e_{1}', \ldots , e_{k}') \in \scr{C}_v$ with $k \ge 1$. Moreover, clearly
	$\sum_{e \in \partial(v)} x_{v}(e) \le 1$.

	Given $\bm{e} =(e_1, \ldots ,e_k) \in \scr{C}_v$ for $k \ge 1$,
	recall that $\bm{e}_{<k}:=(e_1, \ldots ,e_{k-1})$ where $\bm{e}_{< 1}:= \lambda$ if $k=1$.
	Moreover, $q(\bm{e}_{<k}):= \prod_{i=1}^{k-1}(1- p_{e_i})$, where $q(\lambda):=1$.
	Using this notation, define for each $\bm{e} \in \scr{C}_v$
	\begin{equation}\label{eqn:y_value_definition}
		y_{v}(\bm{e}):=
		\begin{cases}
			x_{v}(\bm{e})/ q(\bm{e}_{<|\bm{e}|}) & \text{if $|\bm{e}| \ge 1$,} \\
			1 & \text{otherwise.}
		\end{cases}
	\end{equation}
	Observe that \eqref{eqn:probability_consistency} ensures that for each $\bm{e}' \in \scr{C}_v$,
	\begin{equation}
		\sum_{\substack{e \in \partial(v): \\ (\bm{e}',e) \in \scr{C}_v}} y_{v}(\bm{e}', e) \le y_{v}(\bm{e}'),
	\end{equation}
	and $y_{v}(\lambda):=1$.
	As a result, Proposition \ref{prop:vertex_round} implies
	that for each $v \in V$, there exists a distribution $\scr{D}^{v}$ such that if
	$\bm{Y}^{v} \sim \scr{D}^{v}$, then for each $\bm{e} \in \scr{C}_v$ with $|\bm{e}|=k \ge 1$,
	\begin{equation}\label{eqn:non_adaptive_target_probability}
		\mb{P}[\bm{Y}^{v}_{\le k} = \bm{e}] = y_{v}(\bm{e}).
	\end{equation}
	Moreover, $\bm{Y}^{v}$ is drawn independently from the edge states, $(\sta(e))_{e \in E}$. Consider now the following algorithm $\scr{B}$, which satisfies the desired properties
	\ref{eqn:committal} and \ref{eqn:non_adaptive} of Lemma \ref{lem:non_adaptive_optimum}:
	\begin{algorithm}[H]
		\caption{Algorithm $\scr{B}$} \label{alg:non_adaptive_relaxed}
		\begin{algorithmic}[1]
			\Require a stochastic graph $G=(U,V,E)$.
			\Ensure a one-sided matching $\scr{N}$ of $G$ of active edges.
			\State Set $\scr{N} \leftarrow \emptyset$.
			\State Draw $(\bm{Y}^{v})_{v \in V}$ according to the product distribution $\prod_{v \in V} \scr{D}^{v}$.
			\For{$v \in V$}
			\For{$i=1, \ldots , |\bm{Y}^{v}|$}
			\State Set $e \leftarrow \bm{Y}^{v}_{i}$.			\Comment{$\bm{Y}^{v}_{i}$ is the $i^{th}$ edge of $\bm{Y}^{v}$}
			\State Probe the edge $e$, revealing $\sta(e)$.
			\If{$\sta(e) =1$ and $v$ is unmatched by $\scr{N}$}
			\State Add $e$ to $\scr{N}$.
			\EndIf
			\EndFor
			\EndFor
			\State \Return $\scr{N}$.
		\end{algorithmic}
	\end{algorithm}
	Using \eqref{eqn:non_adaptive_target_probability} and the non-adaptivity of $\scr{B}$, it is clear that
	for each $v \in V$,
	\begin{align*}
		\mb{E}[w(\scr{N}(v))] &= \sum_{\substack{\bm{e}=(e_{1}, \ldots ,e_{k}) \in \scr{C}_{v}: \\ k \ge 1}} w_{e_k} \mb{P}[S(\bm{e})] \cdot \mb{P}[\bm{Y}_{\le k}^{v} = \bm{e}] \\
		&= \sum_{\substack{\bm{e}=(e_{1}, \ldots ,e_{k}) \in \scr{C}_{v}: \\ k \ge 1}} w_{e_k} p_{e_k} q(\bm{e}_{<k}) y_{v}(\bm{e}) \\
		&=  \sum_{\substack{\bm{e}=(e_{1}, \ldots ,e_{k}) \in \scr{C}_{v}: \\ k \ge 1}} w_{e_k} p_{e_k} x_{v}(\bm{e})
		= \mb{E}[w(\scr{M}(v))].
	\end{align*}
	Thus, after summing over $v \in V$, it holds that $\mb{E}[w(\scr{N})] = \mb{E}[w(\scr{M})] = \rOPT(G)$,
	and so in addition to satisfying \ref{eqn:committal} and \ref{eqn:non_adaptive}, $\scr{B}$ is optimal.
	Finally, it is easy to show that each $u \in U$ is matched by $\scr{N}$
	at most once in expectation since $\scr{M}$ has this property.
	Thus, $\scr{B}$ is a relaxed probing algorithm which is optimal and satisfies the required properties of Lemma \ref{lem:non_adaptive_optimum}.
\end{proof}

\section{Known I.D. Instance Model} \label{sec:known_id_additions}

Suppose that $(H_{\typ}, (\scr{D}_i)_{i=1}^{n})$ is a known i.d. input,
where $H_{\typ}=(U,B,F)$ has downward-closed online probing constraints $(\scr{C}_b)_{b \in B}$. 
If $G \sim (H_{\typ}, (\scr{D}_i)_{i=1}^{n})$, where $G=(U,V,E)$ has vertices $V=\{v_{1}, \ldots ,v_{n}\}$, then
define $r_{i}(b):=\mb{P}[v_i = b]$ for each $i \in [n]$ and $b \in B$, where we hereby assume
that $r_{i}(b) > 0$. We generalize \ref{LP:new} to account for the distributions $(\scr{D}_i)_{i=1}^{n}$.
For each $i \in [n], b \in B$ and $\bm{e} \in \scr{C}_b$,
we introduce a decision variable $x_{i}( \bm{e} \, || \, b)$
to encode the probability that $v_i$ has type $b$ \textit{and} $\bm{e}$
is the sequence of edges of $\partial(v_i)$ probed by the \textit{relaxed} benchmark.

\begin{align}\label{LP:new_id}
	\tag{LP-config-id}
	&\text{maximize} &  \sum_{i \in [n], b \in B, \bm{e} \in \scr{C}_b} \val(\bm{e}) \cdot x_{i}(\bm{e} \, ||  \, b)  \\
	&\text{subject to} & \sum_{i \in [n], b \in B} \sum_{\substack{ \bm{e} \in \scr{C}_b: \\ (u,b) \in \bm{e}}} 
	p_{u,b} \cdot q(\bm{e}_{< (u,b)}) \cdot x_{i}( \bm{e} \, || \, b)  \leq 1 && \forall u \in U  \label{eqn:relaxation_efficiency_matching_id}\\
	&& \sum_{\bm{e} \in \scr{C}_b} x_{i}(\bm{e} \, ||  \, b)= r_{i}(b)  && \forall b \in B, i \in [n]  \label{eqn:relaxation_efficiency_distribution_id} \\
	&&x_{i}(\bm{e} \, ||  \, b) \ge 0 && \forall b \in B, \bm{e} \in \scr{C}_b, i \in [n]
\end{align}
Let us denote $\LPOPT(H_{\typ}, (\scr{D}_i)_{i=1}^{n})$ as the value of an optimum solution
to \ref{LP:new_id}.
\begin{theorem} \label{thm:known_id_relaxation}
	$\OPT(H_{\typ}, (\scr{D}_i)_{i=1}^{n}) \le \LPOPT(H_{\typ}, (\scr{D}_i)_{i=1}^{n})$.
\end{theorem}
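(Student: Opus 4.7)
The plan is to reduce to \Cref{thm:new_LP_relaxation} by taking expectations over the realization $G \sim (H_{\typ}, (\scr{D}_i)_{i=1}^{n})$. Concretely, for each possible realization $G = (U, V, E)$ with $V = \{v_1, \ldots, v_n\}$, apply \Cref{thm:new_LP_relaxation} (together with \Cref{lem:non_adaptive_optimum}) to obtain an optimal non-adaptive relaxed probing algorithm $\scr{B}^G$ whose edge-sequence probabilities form a feasible solution $(x^G_{v}(\bm{e}))_{v \in V, \bm{e} \in \scr{C}_v}$ to \ref{LP:new} for $G$ with objective value exactly $\LPOPT(G) \ge \OPT(G)$. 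I then average these per-realization solutions against the type distribution to produce a feasible solution to \ref{LP:new_id}.

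First I would define, for each $i \in [n]$, $b \in B$, and $\bm{e} \in \scr{C}_b$,
\[
x_{i}(\bm{e} \, || \, b) \;:=\; \mb{E}\bigl[\,\mathbbm{1}[v_i = b] \cdot x^G_{v_i}(\bm{e})\,\bigr],
\]
where the expectation is over $G \sim (H_{\typ}, (\scr{D}_i)_{i=1}^{n})$ (and any randomness used to select the optimal $\scr{B}^G$). Here we identify an edge $(u, b) \in F$ of the type graph with the edge $(u, v_i) \in E$ of $G$ whenever $v_i$ realizes to type $b$, so that $\scr{C}_{v_i} = \scr{C}_b$, $p_{u,v_i} = p_{u,b}$, $w_{u,v_i} = w_{u,b}$, and $\val(\bm{e})$ agrees in both graphs.

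Next I would verify feasibility. For \eqref{eqn:relaxation_efficiency_distribution_id}: conditioning on $v_i = b$, constraint \eqref{eqn:relaxation_efficiency_distribution} of \ref{LP:new} gives $\sum_{\bm{e} \in \scr{C}_b} x^G_{v_i}(\bm{e}) = 1$, so summing over $\bm{e} \in \scr{C}_b$ yields $\mb{E}[\mathbbm{1}[v_i = b]] = r_i(b)$, as required. For \eqref{eqn:relaxation_efficiency_matching_id}, fix $u \in U$ and observe that
\[
\sum_{i \in [n], b \in B}\sum_{\substack{\bm{e} \in \scr{C}_b : \\ (u,b) \in \bm{e}}} p_{u,b} \cdot q(\bm{e}_{<(u,b)}) \cdot x_{i}(\bm{e} \, || \, b)
\;=\; \mb{E}\!\left[\sum_{i=1}^n \sum_{\substack{\bm{e} \in \scr{C}_{v_i} : \\ (u,v_i) \in \bm{e}}} p_{u,v_i} \cdot q(\bm{e}_{<(u,v_i)}) \cdot x^G_{v_i}(\bm{e})\right] \le 1,
\]
by applying \eqref{eqn:relaxation_efficiency_matching} of \ref{LP:new} pointwise to $G$ and linearity of expectation. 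Non-negativity is immediate.

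Finally, the objective evaluates to
\[
\sum_{i, b, \bm{e}} \val(\bm{e}) \cdot x_{i}(\bm{e} \, || \, b)
\;=\; \mb{E}\!\left[\sum_{v \in V}\sum_{\bm{e} \in \scr{C}_v} \val(\bm{e}) \cdot x^G_{v}(\bm{e})\right]
\;=\; \mb{E}[\LPOPT(G)] \;\ge\; \mb{E}[\OPT(G)] \;=\; \OPT(H_{\typ}, (\scr{D}_i)_{i=1}^{n}),
\]
again using \Cref{thm:new_LP_relaxation} pointwise and the definition of $\OPT(H_{\typ}, (\scr{D}_i)_{i=1}^{n})$. Since the constructed solution is feasible and has objective at least $\OPT(H_{\typ}, (\scr{D}_i)_{i=1}^{n})$, the theorem follows. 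The only subtle step is the notational bookkeeping in identifying edges of the realized $G$ with edges of $H_{\typ}$ so that conditioning on $v_i = b$ matches the per-realization LP variables with the type-graph LP variables; all other steps are pointwise applications of the known-graph result plus linearity.
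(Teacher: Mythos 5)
Your proposal is correct and follows essentially the same route as the paper's proof: define $x_i(\bm{e}\,\|\,b)$ as the expectation of a per-realization \ref{LP:new} solution times the type indicator $\mathbbm{1}[v_i=b]$, check the two constraint families by conditioning/linearity of expectation, and conclude by applying \Cref{thm:new_LP_relaxation} pointwise to the realized $G$. The only cosmetic difference is that you extract the per-realization solution from the non-adaptive relaxed benchmark of \Cref{lem:non_adaptive_optimum}, whereas the paper simply takes an optimal solution of \ref{LP:new} for each realized $G$ — a variant the paper itself notes is morally equivalent.
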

One way to prove Theorem \ref{thm:known_id_relaxation} is to use the properties
of the relaxed benchmark on $G$ guaranteed by Lemma \ref{lem:non_adaptive_optimum},
and the above interpretation of the decision variables to argue that
$
\mb{E}[\rOPT(G)] \le \LPOPT(H_{\typ}, (\scr{D}_i)_{i=1}^{n}),
$
where $\rOPT(G)$ is the value of the relaxed benchmark on $G$. Specifically, we can interpret \eqref{eqn:relaxation_efficiency_matching_id} as saying that the
relaxed benchmark matches each offline vertex at most once in expectation. Moreover,
\eqref{eqn:relaxation_efficiency_distribution_id} holds by observing that if
$v_i$ is of type $b$, then the relaxed benchmark selects some $\bm{e} \in \scr{C}_{b}$
to probe (note $\bm{e}$ could be the empty-string). 
We provide a morally equivalent proof below. Specifically, we consider an optimum solution of \ref{LP:new} with respect to $G$, and apply a conditioning argument in conjunction with Theorem \ref{thm:new_LP_relaxation}. 
\begin{proof}[Proof of Theorem \ref{thm:known_id_relaxation}]
	
	Suppose that $(H_{\typ}, (\scr{D}_t)_{t=1}^{n})$ is a known i.d. input,
	where $H_{\typ}=(U,B,F)$. Recall that $\scr{C}_b$ corresponds to the online probing
	constraint of each type node $b \in B$. For convenience, we denote $\scr{I}:= \sqcup_{b \in B} \scr{C}_b$.
	We can then define the following collection of random variables, 
	denoted $(X_{t}(\bm{e}))_{t \in [n], \bm{e} \in \scr{I}}$,
	based on the following randomized procedure:
	
	\begin{itemize}
		\item Draw the stochastic graph $G \sim (H_{\typ}, (\scr{D}_t)_{t=1}^{n})$,
		whose vertex arrivals we denote by $v_{1}, \ldots , v_{n}$.
		\item Compute an optimum solution of \ref{LP:new} for $G$,
		which we denote by $(x_{v_t}(\bm{e}))_{t \in [n], \bm{e} \in \scr{C}_{v_t}}$.
		\item For each $t=1, \ldots ,n$ and $\bm{e} \in \scr{I}$, 
		set $X_{t}(\bm{e}) = x_{v_t}(\bm{e})$ if $\bm{e} \in \scr{C}_{v_t}$,
		otherwise set $X_{t}(\bm{e}) = 0$.
	\end{itemize}
	Observe then that since by definition $(X_{v_t}(\bm{e}))_{t \in [n], \bm{e} \in \scr{C}_{v_t}}$
	is a feasible solution to \ref{LP:new} for $G$, it holds that
	for each $t=1, \ldots ,n$
	\begin{equation}\label{eqn:online_distribution_iid}
		\sum_{\bm{e} \in \scr{I}} X_{t}(\bm{e}) = 1,
	\end{equation}
	and
	\begin{equation} \label{eqn:offline_matching_iid}
		\sum_{t \in [n], b \in B} \sum_{\substack{ \bm{e} \in \scr{I}: \\ (u,b) \in \bm{e}}} 
		p_{u,b} \cdot q(\bm{e}_{< (u,b)}) \cdot X_{t}( \bm{e}) \le 1,
	\end{equation}
	for each $u \in U$. Moreover, $(X_{t}( \bm{e}))_{t \in [n], \bm{e} \in \scr{C}_{v_t}}$ is
	a optimum solution to \ref{LP:new} for $G$, so Theorem \ref{thm:new_LP_relaxation}
	implies that
	\begin{equation}\label{eqn:known_iid_benchmark_relaxtion}
		\OPT(G) \le \LPOPT(G) = \sum_{t=1}^{n} \sum_{\bm{e} \in \scr{I}} \val(\bm{e}) \cdot X_{t}(\bm{e}). 
	\end{equation}
	In order to make use of these inequalities in the context of the type graph $H_{\typ}$,
	let us first fix a type node $b \in B$ and a string $\bm{e} \in \scr{C}_b$. For each $t \in [n]$, we can then define 
	\begin{equation}
		x_{t}(\bm{e} \, || \, b):=\mb{E}[ X_{t}(\bm{e}) \cdot \bm{1}_{[v_{t} = b]}],
	\end{equation}
	where the randomness is over the generation of $G$. Observe
	that by definition of the $(X_{t}(\bm{e}))_{t \in [n], \bm{e} \in \scr{I}}$ values, $x_{t}(\bm{e} \, || \, b) =0$,
	provided $\bm{e} \notin \scr{C}_b$. We claim that $(x_{t}(\bm{e} \, || \, b) )_{b \in B, t \in [n], \bm{e} \in \scr{C}_b}$ is a feasible
	solution to \ref{LP:new_id}. To see this, first observe that if we multiply \eqref{eqn:online_distribution_iid} by the indicator random
	variable $\bm{1}_{[v_{t} = b]}$, then we get that 
	$
	\sum_{\bm{e} \in \scr{I}} X_{t}(\bm{e}) \cdot \bm{1}_{[v_{t} = b]} = \bm{1}_{[v_{t} = b]}.
	$
	As a result, if we take expectations over this equality,
	\begin{align*}
		\sum_{\bm{e} \in \scr{I}} x_{t}(\bm{e} \, || \, b) = \sum_{\bm{e} \in \scr{I}} \mb{E}\left[ X_{t}(\bm{e}) \cdot \bm{1}_{[v_{t} = b]}\right] 
		&= \mb{P}[v_t =b] =: r_{t}(b),
	\end{align*}
	for each $b \in B$ and $t \in [n]$. Let us now fix $u \in U$. Observe that since $X_{t}( \bm{e}) \cdot \bm{1}_{[v_{t} = b]} = X_{t}( \bm{e})$ for
	each $\bm{e} \in \scr{C}_b$, \eqref{eqn:offline_matching_iid} ensures that
	\begin{equation} \label{eqn:rearrangment}
		\sum_{t \in [n], b \in B} \sum_{\substack{ \bm{e} \in \scr{C}_b: \\ (u,b) \in \bm{e}}} 
		p_{u,b} \cdot q(\bm{e}_{< (u,b)}) \cdot X_{t}( \bm{e}) \cdot \bm{1}_{[v_{t} = b]}= \sum_{t \in [n], b \in B} \sum_{\substack{ \bm{e} \in \scr{C}_b: \\ (u,b) \in \bm{e}}} 
		p_{u,b} \cdot q(\bm{e}_{< (u,b)}) \cdot X_{t}( \bm{e}) \le 1
	\end{equation}
	Thus, after taking expectations over \eqref{eqn:rearrangment},
	$
	\sum_{t \in [n], b \in B} \sum_{\substack{ \bm{e} \in \scr{C}_b: \\ (u,b) \in \bm{e}}} 
	p_{u,b} \cdot q(\bm{e}_{< (u,b)}) \cdot x_{t}( \bm{e} \, || \, b)  \le 1,
	$
	for each $u \in U$. Since $(x_{t}(\bm{e} \, || \, b))_{t \in [n], b \in B,\bm{e} \in \scr{C}_{b}}$ satisfies these inequalities,
	and the variables are clearly all non-negative, 
	it follows that $(x_{t}(\bm{e} \, || \, b))_{t \in [n], b \in B,\bm{e} \in \scr{C}_{b}}$ is a feasible solution to \ref{LP:new_id}.
	Let us now express the right-hand side of \eqref{eqn:known_iid_benchmark_relaxtion}
	as in \eqref{eqn:rearrangment} and take expectations. We then get that
	\begin{align*}
		\mb{E}[ \OPT(G)] &\le \sum_{b \in B, t \in [n]} \sum_{\bm{e} \in \scr{I}} \val(\bm{e}) \cdot x_{t}(\bm{e} \, || \, b).
	\end{align*}
	Now, $\OPT(H_{\typ}, (\scr{D}_i)_{i=1}^{n}) = \mb{E}[ \OPT(G)]$ by definition, so since $(x_{t}(\bm{e} \, || \, b))_{b \in B, t \in [n], \bm{e} \in \scr{C}_b}$
	is feasible, it holds that
	$	\OPT(H_{\typ}, (\scr{D}_{i})_{i=1}^{n}) \le \LPOPT(H_{\typ}, (\scr{D}_{i})_{i=1}^{n}),$
	thus completing the proof.
\end{proof}

Given a feasible solution to \ref{LP:new_id}, 
say $(x_{i}(\bm{e} \, || \, b))_{i \in [n], b \in B,  \bm{e} \in \scr{C}_b}$, for each $u \in U, i \in [n]$ and $b \in B$ define
\begin{equation}\label{eqn:induced_edge_variables_id}
	\til{x}_{u,i}(b):= \sum_{\substack{ \bm{e} \in \scr{C}_b: \\ (u,b) \in \bm{e}}} 
	q(\bm{e}_{< (u,b)}) \cdot x_{i}( \bm{e} \, || \, b). 
\end{equation}
We refer to $\til{x}_{u,i}(b)$ as an \textbf{edge variable}, thus extending the definition from the known stochastic
graph setting. Suppose now that we fix $i \in [n]$ and $b \in B$, and consider the variables, $(x_{i}(\bm{e} \, || \, b))_{\bm{e} \in \scr{C}_b}$. Observe that \eqref{eqn:relaxation_efficiency_distribution_id} ensures that 
$
\frac{\sum_{\bm{e} \in \scr{C}_b} x_{i}(\bm{e} \, || \, b)}{ r_{i}(b)} = 1.
$
Hence, regardless of which type node $v_{i}$ is drawn as,
$
\frac{\sum_{\bm{e} \in \scr{C}_{v_i}} x_{i}(\bm{e} \, || \, v_i)}{ r_{i}(v_i)} = 1.
$

Using the above observation, we can generalize $\mathtt{VertexProbe}$ as follows. Given vertex $v_i$,
draw $\bm{e}' \in \scr{C}_{v_i}$ with probability $x_{i}(\bm{e}' \, || \, v_i)/r_{i}(v_i)$. If $\bm{e}' = \lambda$, then return the empty-set. Otherwise, set $\bm{e}' = (e_{1}', \ldots ,e_{k}')$ for $k := |\bm{e}'| \ge 1$, and probe the
edges of $\bm{e}'$ in order. Return the first edge which is revealed to be active, if such an
edge exists. Otherwise, return the empty-set. We denote the output of $\mathtt{VertexProbe}$ on the input 
$(v_i, \partial(v_i), (x_{i}(\bm{e} \, || \, v_i)/ r_{i}(v_i))_{\bm{e} \in \scr{C}_{v_i}})$ by
$\mathtt{VertexProbe}(v_i, \partial(v_i), (x_{i}(\bm{e} \, || \, v_i)/ r_{i}(v_i))_{\bm{e} \in \scr{C}_{v_i}})$.
Observe then the following extension of Lemma \ref{lem:fixed_vertex_probe}:
\begin{lemma}\label{lem:fixed_vertex_probe_id}
	If $\mathtt{VertexProbe}$ is passed $\left(v_{i}, \partial(v_i), (x_{i}(\bm{e} \, || \, v_i) / r_{i}(v_i))_{\bm{e} \in \scr{C}_{v_i}}\right)$, then for any $b \in B$ and $u \in U$,
	\[
	\mb{P}[\mathtt{VertexProbe}(v_i, \partial(v_i), (x_{i}(\bm{e} \, || \, v_i)/ r_{i}(v_i))_{\bm{e} \in \scr{C}_{v_i}}) =(u,b) \, | \, v_{i} = b] = \frac{p_{u,b} \cdot \til{x}_{u,i}(b)}{r_{i}(b)}.
	\]
\end{lemma}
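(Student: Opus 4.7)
The plan is to mimic the proof of Lemma \ref{lem:fixed_vertex_probe} from the known stochastic graph setting, carrying out the argument under the conditioning $\{v_i = b\}$ and then reading off the claimed expression from the definition of the edge variable $\til{x}_{u,i}(b)$ in \eqref{eqn:induced_edge_variables_id}.

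First I would verify that, conditioned on $v_i = b$, the subroutine $\mathtt{VertexProbe}$ is well-defined, i.e.\ that $(x_{i}(\bm{e} \, || \, b)/r_{i}(b))_{\bm{e} \in \scr{C}_b}$ specifies a genuine probability distribution on $\scr{C}_b$. This is immediate from constraint \eqref{eqn:relaxation_efficiency_distribution_id} of \ref{LP:new_id}, which gives $\sum_{\bm{e} \in \scr{C}_b} x_{i}(\bm{e} \, || \, b) = r_{i}(b)$, combined with non-negativity. Note also that the draw of $\bm{e}'$ made by $\mathtt{VertexProbe}$ depends only on $v_i$ (through $r_i(v_i)$ and $\scr{C}_{v_i}$) and on fresh internal randomness, and is independent of the edge states $(\sta(e))_{e \in \partial(v_i)}$.

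Next I would track the event $\{\mathtt{VertexProbe} = (u,b)\}$ under the conditioning $\{v_i = b\}$. By construction, $\mathtt{VertexProbe}$ outputs $(u,b)$ exactly when it draws some string $\bm{e}' \in \scr{C}_b$ containing $(u,b)$, the edges appearing in $\bm{e}'$ strictly before $(u,b)$ are all inactive, and $(u,b)$ itself is active. Summing over all such $\bm{e}'$ and using independence of edge states, this gives
\[
\mb{P}[\mathtt{VertexProbe} = (u,b) \mid v_i = b] \;=\; \sum_{\substack{\bm{e} \in \scr{C}_b:\\ (u,b) \in \bm{e}}} \frac{x_{i}(\bm{e} \, || \, b)}{r_{i}(b)} \cdot q(\bm{e}_{<(u,b)}) \cdot p_{u,b}.
\]
Pulling the factor $p_{u,b}/r_i(b)$ out of the sum and invoking the definition of $\til{x}_{u,i}(b)$ in \eqref{eqn:induced_edge_variables_id} yields the right-hand side $p_{u,b} \cdot \til{x}_{u,i}(b) / r_{i}(b)$, as claimed.

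I do not anticipate any real obstacle: this is a direct bookkeeping argument, essentially identical to the proof of Lemma \ref{lem:fixed_vertex_probe} once one conditions on the type of $v_i$. The only subtlety worth flagging is the normalization by $r_i(b)$, which is exactly what allows \eqref{eqn:relaxation_efficiency_distribution_id} to play the role that constraint \eqref{eqn:relaxation_efficiency_distribution} plays in the known stochastic graph case.
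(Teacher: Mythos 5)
Your proof is correct and follows exactly the route the paper intends: the lemma is presented as an immediate extension of Lemma \ref{lem:fixed_vertex_probe}, with constraint \eqref{eqn:relaxation_efficiency_distribution_id} providing the normalization by $r_i(b)$ and the conditional probability computed by summing $\tfrac{x_i(\bm{e}\,||\,b)}{r_i(b)}\,q(\bm{e}_{<(u,b)})\,p_{u,b}$ over strings containing $(u,b)$, which is precisely the definition of $p_{u,b}\,\til{x}_{u,i}(b)/r_i(b)$. No gaps.
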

\begin{definition}[Propose - Known I.D Instance]\label{def:vertex_probe}
	We say that $\mathtt{VertexProbe}$ \textbf{proposes} $v_i$ to vertex $u \in \partial(v_i)$, provided the algorithm outputs $(u,v_i)$ when executing on online vertex $v_i$ for $i \in [n]$. When it is clear that $\mathtt{VertexProbe}$ is being executed on $v_i$, we say that $v_i$ proposes to $u$.
\end{definition}

Consider now the generalization of Algorithm \ref{alg:known_stochastic_graph} where
$\pi$ is generated either u.a.r. or adversarially.
\begin{algorithm}[H]
	\caption{Known I.D} 
	\label{alg:known_id}
	\begin{algorithmic}[1]
		\Require a known i.d. input $(H_{\typ}, (\scr{D}_i)_{i=1}^{n})$.
		\Ensure a matching $\scr{M}$ of active edges of $G \sim (H_{\typ}, (\scr{D}_i)_{i=1}^{n})$.
		\State $\scr{M} \leftarrow \emptyset$.
		\State Compute an optimum solution of \ref{LP:new_id} for $(H_{\typ}, (\scr{D}_i)_{i=1}^{n})$, say $(x_{i}(\bm{e} \, || \, b))_{i \in [n], b \in B,  \bm{e} \in \scr{C}_b}$.
		\For{$t=1, \ldots , n$} 
		\State Let $a \in B$ be the type of the current arrival $v_{\pi(t)}$.           \Comment{to simplify notation}
		\State Set $e \leftarrow \mathtt{VertexProbe}\left(v_{\pi(t)},\partial(v_{\pi(t)}), \left( x_{\pi(t)}(\bm{e} \, || \, a) \cdot r^{-1}_{\pi(t)}(a) \right)_{\bm{e} \in \scr{C}_{a}}\right)$.
		\If{$e=(u,v_{\pi(t)})$ for some $u \in U$, and $u$ is unmatched}
		\State Add $e$ to $\scr{M}$.
		\EndIf
		\EndFor
		\State \Return $\scr{M}$.
		
	\end{algorithmic}
\end{algorithm}
Similarly, to Algorithm \ref{alg:known_stochastic_graph},
one can show that Algorithm \ref{alg:known_id} attains a competitive ratio of $1/2$ for random order arrivals. Interestingly, if the distributions $(\scr{D}_{i})_{i=1}^{n}$ are identical -- that is, we work with a known i.i.d. instance -- then it is relatively  easy to show that
this algorithm's competitive ratio improves to $1-1/e$.
\begin{proposition} \label{prop:known_iid}
	If Algorithm \ref{alg:known_id} is presented a known $i.i.d.$ input, say the type graph $H_{\typ}$
	together with the distribution $\scr{D}$, then 
	$\mb{E}[w(\scr{M})] \ge \left(1 - 1/e \right) \OPT(H_{\typ} , \scr{D})$.
\end{proposition}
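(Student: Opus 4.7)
The plan is to exploit the i.i.d. structure to reduce to a symmetric LP solution, after which the analysis becomes a one-vertex Bernoulli-proposal calculation. Since $r_i(b) = r(b)$ does not depend on $i$, both the feasible region and the objective of \ref{LP:new_id} are invariant under permutations of $[n]$. Given any optimum solution $(x_i(\bm{e}\,||\,b))$, the averaged solution $\bar x(\bm{e}\,||\,b):=\tfrac{1}{n}\sum_{i=1}^n x_i(\bm{e}\,||\,b)$, used identically in every round, is therefore feasible and attains the same objective. Since Algorithm \ref{alg:known_id} may use any optimum solution, I would assume without loss of generality that $x_i(\bm{e}\,||\,b)=x(\bm{e}\,||\,b)$ for all $i$, and write $\til{x}_u(b):=\til{x}_{u,i}(b)$ for the induced edge variables.

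With this symmetric solution in hand, fix $u \in U$ and define $z_u := \sum_{b \in B} p_{u,b}\cdot \til{x}_u(b)$. By Lemma \ref{lem:fixed_vertex_probe_id} and the law of total probability, each arriving $v_i$ proposes to $u$ with unconditional probability exactly $z_u$. Moreover, these events are mutually independent across $i\in[n]$: the types $v_i$ are drawn i.i.d., $\mathtt{VertexProbe}$ uses fresh internal randomness on each call, and the edge states $\sta(e)$ are independent. Summing \eqref{eqn:relaxation_efficiency_matching_id} in the symmetric solution also yields $n\cdot z_u\le 1$, i.e., $z_u\le 1/n$.

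Next, set $W_u:=\sum_b p_{u,b}\cdot\til{x}_u(b)\cdot w_{u,b}$, so that the contribution of $u$ to $\LPOPT(H_{\typ},\scr{D})$ equals $n\cdot W_u$. At round $i$, vertex $u$ is still unmatched iff no earlier round proposed to $u$, an event of probability $(1-z_u)^{i-1}$ by independence; conditional on $v_i$ proposing to $u$, the expected weight of the resulting edge is $W_u/z_u$. Summing the geometric series gives
\begin{equation*}
    \mb{E}[w(\scr{M}(u))]=\sum_{i=1}^n (1-z_u)^{i-1}\cdot z_u \cdot \frac{W_u}{z_u}=(1-(1-z_u)^n)\cdot \frac{W_u}{z_u}.
\end{equation*}
Using the standard concave interpolation bound $1-(1-z)^n\ge 1-e^{-nz}\ge (1-1/e)\cdot nz$, valid for $z\in[0,1/n]$, I obtain $\mb{E}[w(\scr{M}(u))]\ge (1-1/e)\cdot n\cdot W_u$. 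Summing over $u\in U$ and invoking Theorem \ref{thm:known_id_relaxation} then yields $\mb{E}[w(\scr{M})]\ge (1-1/e)\LPOPT(H_{\typ},\scr{D})\ge (1-1/e)\OPT(H_{\typ},\scr{D})$.

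The main obstacle I anticipate is articulating the independence claim carefully: one must verify that the proposal events across different rounds are genuine independent Bernoulli$(z_u)$ variables despite arising inside an adaptive matching procedure. This reduces to observing that whether $v_i$ proposes to $u$ depends only on the type $v_i$, the fresh internal randomness used by $\mathtt{VertexProbe}$ at round $i$, and the states of edges incident to $v_i$ --- none of which interact with rounds $j\neq i$ --- while the ``matched/unmatched'' test in Algorithm \ref{alg:known_id} merely filters whether an already-generated proposal is accepted into $\scr{M}$, and so does not affect the proposal distribution itself.
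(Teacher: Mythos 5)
The paper itself leaves this proposition unproved (it is flagged as ``relatively easy to show''), and your argument is exactly the natural one that is intended: symmetrize the LP solution, observe that each offline $u$ then receives independent Bernoulli$(z_u)$ proposals with $z_u\le 1/n$, compute the geometric series, and use $1-(1-z_u)^n\ge (1-1/e)\,n z_u$ together with Theorem~\ref{thm:known_id_relaxation}. The probabilistic core is right: proposals in distinct rounds are indeed independent (fresh types, fresh $\mathtt{VertexProbe}$ randomness, and edge states of distinct online vertices are distinct independent Bernoullis), the ``$u$ unmatched'' filter does not alter the proposal distribution, and conditional on a proposal to $u$ the expected weight is $W_u/z_u$ by Lemma~\ref{lem:fixed_vertex_probe_id}. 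The one place to be careful is your ``without loss of generality'' symmetrization: the per-vertex bound you derive genuinely uses that the \emph{same} solution is used in every round, and for a skewed (but still optimal) solution the quantity $\sum_i W_{u,i}\prod_{j<i}(1-z_{u,j})$ can drop to roughly a $1/e$ fraction of $\sum_i W_{u,i}$ for an individual $u$, so the statement ``any optimum will do'' is not automatic round-by-round. The clean fix is exactly what you sketch: since $r_i(b)=r(b)$, averaging an optimum over the $n$ round-indices yields a feasible, symmetric solution with the same objective value, and one stipulates that the algorithm uses this symmetrized optimum (equivalently, one could appeal to the u.a.r.\ arrival order); stating this as a property of the solution the algorithm adopts, rather than as a free ``w.l.o.g.'' about an arbitrary optimum, makes the proof airtight.
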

For the case of non-identical distributions, we require online contention resolution schemes. Given an arbitrary integer $k \ge 1$, define the ground set $[k]:=\{1, \ldots ,k\}$. Fix $\bm{z} \in [0,1]^{k}$,
and for each $i \in [k]$, let $Z_i$ be an indicator random variable
which is $1$ independently with probability $z_i$.
Let us denote $\scr{P}:=\{ \bm{z} \in [0,1]^{k}: \sum_{i=1}^{k} z_{i} \le 1\}$. Note that $\scr{P}$ is the convex relaxation of the constraint imposed by the $1$-uniform matroid on $[k]$. 

\begin{definition}[Contention Resolution Scheme -- $1$-Uniform Matroid -- \cite{Vondrak_2011}]\label{def:CRS}
	A \textbf{contention resolution scheme} (CRS) for the $1$-uniform matroid on $[k]$ is a (randomized) algorithm $\psi$, which given
	$\bm{z} \in \scr{P}$ and $S \subseteq [k]$ as inputs, returns at most one element $\psi(\bm{z},S)$ of $S$. Given $\alpha \in [0,1]$, $\psi$ is said to be $\alpha$-\textbf{selectable},
	provided for all $i \in [k]$ and $\bm{z} \in \scr{P}$,
	\begin{equation} \label{eqn:selectibility}
		\mb{P}[ i = \psi(\bm{z},R(\bm{z})) \mid i \in R(\bm{z})] \ge \alpha,
	\end{equation}
	where the probability is over the generation of $R(\bm{z}):=\{j \in [k]: Z_j =1\}$,
	and the potential randomness used by $\psi$. We say that $\psi$ is an \textbf{online} CRS, provided it is revealed
	$(Z_i)_{i=1}^k$ one-by-one in an (unknown) adversarially chosen order $\pi$. Upon learning $Z_{\pi(t)}$, it makes a irrevocable decision whether or not to return $\pi(t)$ as its output. A random CRS
	is an OCRS where the arrival order $\pi$ is instead drawn u.a.r. and independently of all other randomization.
\end{definition}
During the execution of Algorithm \ref{alg:known_id},
let $Z_{u,i}$ be the indicator random variable for the event
in which $v_i$ proposes to vertex $u$.
Then, $\mb{P}[ Z_{u,i} =1] = z_{u,i}$ where
\begin{equation}
	z_{u,i} :=  \sum_{b \in B} p_{u,b} \cdot \til{x}_{u,i}(b) = \sum_{b \in B} \sum_{\substack{ \bm{e} \in \scr{C}_b: \\ (u,b) \in \bm{e}}} 
	p_{u,b} \cdot q(\bm{e}_{< (u,b)}) \cdot x_{i}( \bm{e} \, || \, b).
\end{equation}
Moreover, for each fixed $u \in U$,
the random variables $(Z_{u,i})_{i=1}^{n}$ are independent. We now
prove that we can apply the same reduction to online (random order) contention resolution as in the known stochastic graph setting.

\begin{theorem} \label{thm:known_id_reduction}
	Given an $\alpha$-selectable OCRS (RCRS) for $1$-uniform matroids, there exists
	an $\alpha$-competitive online probing algorithm for known i.d. instances and adversarial
	(random order) arrivals.
\end{theorem}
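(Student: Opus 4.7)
The plan is to modify Algorithm~\ref{alg:known_id} by interposing, for each offline vertex $u \in U$, its own independent instance of an $\alpha$-selectable OCRS (respectively, RCRS) $\psi_u$ for a $1$-uniform matroid on the ground set $[n]$, with input vector $\bm{z}_u := (z_{u,i})_{i=1}^n$. When $v_{\pi(t)}$ arrives and $\mathtt{VertexProbe}$ proposes $v_{\pi(t)}$ to some $u \in U$, we feed $\psi_u$ the observation $Z_{u,\pi(t)} = 1$; if $\mathtt{VertexProbe}$ does not propose to $u$ at step $t$, we feed $\psi_u$ the observation $Z_{u,\pi(t)} = 0$, so that each $\psi_u$ sees its indicators in the arrival order $\pi$. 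The edge $(u, v_{\pi(t)})$ is added to $\scr{M}$ only if $\psi_u$ accepts the index $\pi(t)$ and $u$ is still unmatched; this respects commitment, as $\mathtt{VertexProbe}$ halts probing after its proposal.

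The first thing to verify is that $\bm{z}_u \in \scr{P}$ and that the $(Z_{u,i})_{i \in [n]}$ are mutually independent for fixed $u$. The LP constraint \eqref{eqn:relaxation_efficiency_matching_id} gives $\sum_{i \in [n]} z_{u,i} \le 1$, so $\bm{z}_u \in \scr{P}$. Independence across $i$ follows because the types $v_1, \ldots, v_n$ are drawn independently from the $\scr{D}_i$, and $\mathtt{VertexProbe}$ uses fresh randomness on each call. In the AOM, the oblivious adversary chooses $\pi$ as a function of $(H_{\typ}, (\scr{D}_i)_{i=1}^n)$ alone, matching the input model for an OCRS; in the ROM, $\pi$ is u.a.r., matching the input model for an RCRS.

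Next I will bound the probability that each edge appears in $\scr{M}$. Fix $u \in U$, $i \in [n]$, $b \in B$. Conditioning on $v_i = b$ does not alter the distribution of $(Z_{u,j})_{j \ne i}$ by independence of the $v_j$'s, and $\psi_u$'s decision on index $i$ depends only on $(Z_{u,j})_{j \in [n]}$ and its internal randomness. Hence
\[
\mb{P}[\psi_u \text{ accepts } i \mid Z_{u,i} = 1, \, v_i = b] = \mb{P}[\psi_u \text{ accepts } i \mid Z_{u,i} = 1] \ge \alpha,
\]
by $\alpha$-selectability. Combining this with Lemma~\ref{lem:fixed_vertex_probe_id} gives
\[
\mb{P}[(u, v_i) \in \scr{M}, \, v_i = b] = r_{i}(b) \cdot \frac{p_{u,b} \cdot \til{x}_{u,i}(b)}{r_i(b)} \cdot \mb{P}[\psi_u \text{ accepts } i \mid Z_{u,i} = 1] \ge \alpha \cdot p_{u,b} \cdot \til{x}_{u,i}(b).
\]
Summing $w_{u,b} \cdot \mb{P}[(u,v_i) \in \scr{M}, v_i = b]$ over $i,u,b$ and unfolding \eqref{eqn:induced_edge_variables_id}, the inner sum over $u$ telescopes into $\sum_{\bm{e} \in \scr{C}_b} \val(\bm{e}) \cdot x_i(\bm{e} \, || \, b)$, yielding $\mb{E}[w(\scr{M})] \ge \alpha \cdot \LPOPT(H_{\typ}, (\scr{D}_i)_{i=1}^n) \ge \alpha \cdot \OPT(H_{\typ}, (\scr{D}_i)_{i=1}^n)$ by Theorem~\ref{thm:known_id_relaxation}.

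The main obstacle is the independence argument: the indicator $Z_{u,i}$ is correlated with $v_i$ (indeed, once the coins of $\mathtt{VertexProbe}$ are fixed, $Z_{u,i}$ is a function of $v_i$), so one must carefully decouple by observing that the CRS guarantee for index $i$ only requires independence of $(Z_{u,j})_{j \ne i}$ from $v_i$, which holds because distinct online slots use independent type draws and independent copies of $\mathtt{VertexProbe}$'s randomness. The remainder of the reduction is essentially bookkeeping via the LP unfolding above.
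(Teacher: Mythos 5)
Your proposal is correct and follows essentially the same route as the paper: run an independent $\alpha$-selectable OCRS/RCRS per offline vertex on the proposal indicators produced by $\mathtt{VertexProbe}$, check $\bm{z}_u \in \scr{P}$ via \eqref{eqn:relaxation_efficiency_matching_id} and independence of $(Z_{u,i})_i$, and then decouple the acceptance event from the type $\{v_i = b\}$ by noting the scheme never looks at the type, exactly the paper's key observation leading to \eqref{eqn:conditional_contention_probability}. The final summation against the edge variables of \ref{LP:new_id} and the appeal to Theorem~\ref{thm:known_id_relaxation} also match the paper's argument.
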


\begin{proof}[Proof of Theorem \ref{thm:known_id_reduction}]
	Let us first consider the case when $\pi$ is adversarially generated, and we are given an $\alpha$-selectable OCRS $\psi$. For notational simplicity, suppose that $\pi(t)=t$ for each $t \in [n]$, so that the online vertices arrive in order $v_{1}, \ldots , v_{n}$.
	We first describe the modification of \Cref{alg:known_id} using $|U|$ concurrent executions of $\psi$.
	\begin{algorithm}[H]
		\caption{Known I.D. -- AOM -- Modified} 
		\label{alg:known_id_aom_modified}
		\begin{algorithmic}[1]
			\Require a known i.d. input $(H_{\typ}, (\scr{D}_i)_{i=1}^{n})$.
			\Ensure a matching $\scr{M}$ of active edges of $G \sim (H_{\typ}, (\scr{D}_t)_{t=1}^{n})$.
			\State $\scr{M} \leftarrow \emptyset$.
			\State Compute an optimum solution of \ref{LP:new_id} for $(H_{\typ}, (\scr{D}_i)_{i=1}^{n})$, say $(x_{i}(\bm{e} \, || \, b))_{i \in [n], b \in B,  \bm{e} \in \scr{C}_b}$.
			\For{$t=1,\ldots ,n$}
			\State Let $a \in B$ be the type of the current arrival $v_{t}$. 
			\State Set $e \leftarrow \mathtt{VertexProbe}\left(v_{t}, \partial(v_{t}), \left(x_{t}(\bm{e} \, || \, a) \cdot r^{-1}_{t}(a)  \right)_{\bm{e} \in \scr{C}_{a}} \right)$.
			\If{$e=(u,v_t)$ for some $u \in U$}			\Comment{$v_t$ proposes to $u$ (i.e., $Z_{u,t} =1$)}
			\State Execute $\psi$ on $(z_{u,v_i})_{i=1}^{n}$ and $(Z_{u,i})_{i=1}^{t}$, and add $e =(u,v_t)$ to $\scr{M}$ if $e$ is returned by $\psi$. \label{line:adversarial_contention}
			\EndIf
			\EndFor
			\State \Return $\scr{M}$.
		\end{algorithmic}
	\end{algorithm}
	We first verify that $\scr{M}$ is a matching of $G$. To see this, notice that $\psi$
	returns at most once edge for each $u \in U$. Moreover, given $i \in [n]$,
	a necessary condition for $(u,v_i) \in \scr{M}$ is that $\mathtt{VertexProbe}$ must propose $v_i$ to $u$. However, for a fixed $v_i$, there is at most one vertex of $U$ for which this occurs.

	To prove the algorithm is $\alpha$-competitive, first observe that the edge variables $(\til{x}_{u,t}(b))_{u \in U,t \in [n], b \in B}$
	satisfy
	$
	\LPOPT(H_{\typ}, (\scr{D}_i)_{i=1}^{n}) = \sum_{u \in U, t \in [n], b \in B} p_{u,b} w_{u,b} \til{x}_{u,t}(b).
	$
	Thus, by \Cref{thm:known_id_relaxation}, it suffices to show that
	\begin{equation} \label{eqn:adversarial_desired_selectibility}
		\mb{P}[ \text{$(u,v_t) \in \scr{M}$ and $v_{t} = b$}] \ge \alpha \cdot p_{u,b} \til{x}_{u,t}(b) 
	\end{equation}
	for each $u \in U, t \in [n]$ and $b \in B$. We may thus assume that $p_{u,b} \til{x}_{u,t}(b) >0$,
	where we note that $\mb{P}[v_{t} =b, Z_{u,t} =1] =  p_{u,b} \cdot \til{x}_{u,t}(b)$.

	Observe first that $(u,v_t) \in \scr{M}$ if and only if
	$(u,v_t)$ is returned by $\psi$ when it executes on $(z_{u,i})_{i=1}^n$ and $(Z_{u,i})_{i=1}^{t}$.
	Thus, since $\psi$ is an $\alpha$-selectable OCRS, by \Cref{def:CRS},
	\begin{equation} \label{eqn:adversarial_contention_id}
		\mb{P}[(u,v_t) \in \scr{M} \mid Z_{u,t} =1] \ge \alpha.
	\end{equation}
	We now show how this implies the stronger claim of \eqref{eqn:adversarial_desired_selectibility}. 
	Observe that whether or not $\psi$ returns $(u,v_t)$  is a
	randomized function of $(z_{u,v})_{u \in U}$ and $(Z_{u,i})_{i=1}^t$. Crucially,
	$\psi$ does \textit{not} make use of the type of $v_i$. As a result, conditional on $(Z_{u,i})_{i=1}^t$, 
	the events $\{(u,v_t) \in \scr{M}\}$ and $\{v_t = b\}$ are independent, and so
	\begin{equation} \label{eqn:conditional_contention_probability_first}
		\mb{P}[(u,v_t) \in \scr{M} \mid v_t = b, (Z_{u,i})_{i=1}^t] = \mb{P}[(u,v_t) \in \scr{M} \mid (Z_{u,i})_{i=1}^t].
	\end{equation}
	Now, taking expectations over $(Z_{u,i})_{i=1}^{t-1}$ in \eqref{eqn:conditional_contention_probability_first}, we get that
	\begin{equation}\label{eqn:conditional_contention_probability}
		\mb{P}[(u,v_t) \in \scr{M}   \mid v_{t} =b, Z_{u,t} =1] = \mb{P}[ (u,v_t) \in \scr{M} \mid  Z_{u,t} =1].
	\end{equation}
	However, $\mb{P}[ (u,v_t) \in \scr{M} \mid  Z_{u,t} =1] \ge \alpha$ by \eqref{eqn:adversarial_contention_id}. Thus,
	$
	\mb{P}[(u,v_t) \in \scr{M}   \mid v_{t} =b, Z_{u,t} =1] \ge \alpha,
	$
	and so \eqref{eqn:adversarial_desired_selectibility} holds by recalling that $\mb{P}[v_{t} =b, Z_{u,t} =1] =  p_{u,b} \cdot \til{x}_{u,t}(b)$, and that $Z_{u,t} =1$ is a necessary condition for $(u,v_t) \in \scr{M}$.
	
	The reduction for random order arrivals proceeds identically, 
	and so we omit the argument.
\end{proof}
For a $1$-uniform matroid, Ezra et al. \cite{Ezra_2020} prove the existence of a $1/2$-selectable OCRS,
and Lee and Singla prove the existence of a $1-1/e$-selectable RCRS. Combining these
results with \Cref{thm:known_id_reduction},
we get the claimed competitive ratios of \Cref{tab:my_label}.

\begin{corollary} \label{cor:known_id_aom}
	In the AOM, there exists a $1/2$-competitive online probing algorithm for the online
	stochastic matching problem with known i.d. instances.
\end{corollary}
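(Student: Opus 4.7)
The plan is to obtain this statement as an immediate corollary of the reduction already established in Theorem~\ref{thm:known_id_reduction}, combined with a known existence result for online contention resolution schemes on a $1$-uniform matroid. The reduction says: any $\alpha$-selectable OCRS for a $1$-uniform matroid can be plugged into Algorithm~\ref{alg:known_id_aom_modified} (one concurrent copy per offline vertex $u \in U$) to yield an $\alpha$-competitive online probing algorithm for known i.d.\ instances in the AOM. So all that remains is to instantiate $\alpha$.

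First I would invoke Theorem~\ref{thm:known_id_reduction} with $\alpha = 1/2$, so the problem reduces to exhibiting a $1/2$-selectable OCRS for a $1$-uniform matroid, in the sense of Definition~\ref{def:CRS}. I would then cite the result of Ezra et al.~\cite{Ezra_2020}, which constructs exactly such a scheme: given $\bm{z} \in \scr{P}$ (with $\sum_i z_i \le 1$) and the online revelation of the indicators $(Z_i)_{i=1}^k$, their scheme selects an element so that each $i$ with $Z_i = 1$ is chosen with conditional probability at least $1/2$. Combining these two ingredients immediately yields a $1/2$-competitive online probing algorithm for known i.d.\ instances with adversarial arrivals, which is exactly the claim of Corollary~\ref{cor:known_id_aom}.

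There is essentially no obstacle here: all the technical work has already been done in Theorem~\ref{thm:new_LP_relaxation} (\ref{LP:new_id} relaxes the adaptive benchmark), in Lemma~\ref{lem:fixed_vertex_probe_id} (the exact rounding guarantee for $\mathtt{VertexProbe}$ in the known i.d.\ setting), and in the proof of Theorem~\ref{thm:known_id_reduction} (the decoupling argument showing that the events $\{(u,v_t) \in \scr{M}\}$ and $\{v_t = b\}$ are conditionally independent given $(Z_{u,i})_{i=1}^t$, so that the CRS selectability upgrades to the pointwise bound $\mb{P}[(u,v_t) \in \scr{M}, v_t=b] \ge \alpha \cdot p_{u,b} \til{x}_{u,t}(b)$ required to compare against $\LPOPT(H_{\typ}, (\scr{D}_i)_{i=1}^n)$). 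The only remark worth making is that the Ezra et al.\ OCRS does not require knowledge of the type $v_t$ of the arriving vertex, which is precisely the property used in~\eqref{eqn:conditional_contention_probability_first} to justify the reduction; fortunately, the scheme only ever examines the indicators $Z_{u,i}$ and the marginals $(z_{u,i})_{i=1}^n$, so this condition is satisfied. Hence the corollary follows by one-line substitution into Theorem~\ref{thm:known_id_reduction}.
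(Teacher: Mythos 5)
Your proposal is correct and matches the paper exactly: the paper also obtains this corollary by plugging the $1/2$-selectable OCRS of Ezra et al.\ \cite{Ezra_2020} for a $1$-uniform matroid into the reduction of Theorem~\ref{thm:known_id_reduction}. Nothing further is needed.
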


\begin{corollary} \label{cor:known_id_rom}
	In the ROM, there exists a $1-1/e$-competitive online probing algorithm for the online
	stochastic matching problem with known i.d. instances.
\end{corollary}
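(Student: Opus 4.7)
The plan is to obtain Corollary~\ref{cor:known_id_rom} as an essentially immediate consequence of Theorem~\ref{thm:known_id_reduction} applied to the ROM, together with the existence of a $(1-1/e)$-selectable RCRS for $1$-uniform matroids due to Lee and Singla \cite{Lee2018}. So first I would invoke Theorem~\ref{thm:known_id_reduction} in its RCRS form: any $\alpha$-selectable RCRS for the $1$-uniform matroid yields an $\alpha$-competitive algorithm for known i.d.\ instances under random order arrivals. Instantiating $\alpha = 1-1/e$ then gives the claimed competitive ratio.

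Since the excerpt only writes out the AOM/OCRS reduction in detail, the main substance of the proof proposal is to spell out why the same argument transfers verbatim to the ROM/RCRS setting. The algorithm is identical to Algorithm~\ref{alg:known_id_aom_modified}: solve \ref{LP:new_id}, run $\mathtt{VertexProbe}$ on each arriving $v_{\pi(t)}$, and for each $u \in U$ run a concurrent copy of the $(1-1/e)$-selectable RCRS $\psi$ on ground set $[n]$ with marginals $(z_{u,i})_{i=1}^n$. The feasibility constraint $\sum_{i \in [n]} z_{u,i} \le 1$ follows from \eqref{eqn:relaxation_efficiency_matching_id} exactly as before, so $(z_{u,i})_{i=1}^n \in \scr{P}$, and the indicators $(Z_{u,i})_{i=1}^n$ are independent Bernoullis with $\mb{P}[Z_{u,i}=1]=z_{u,i}$. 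Because $\pi$ is drawn uniformly from $S_n$ independently of everything else, each RCRS sees its ground set in uniformly random order, which is precisely the input model a RCRS is defined for.

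It then remains to argue the analog of \eqref{eqn:adversarial_desired_selectibility}: for every $u, t, b$,
\[
\mb{P}[(u,v_{\pi(t)}) \in \scr{M} \text{ and } v_{\pi(t)}=b] \ge (1-1/e)\cdot p_{u,b}\cdot \til{x}_{u,\pi(t)}(b).
\]
By $\alpha$-selectability of $\psi$ with $\alpha=1-1/e$, the element corresponding to the $t$-th arrival is returned with conditional probability at least $1-1/e$ given that its indicator is $1$. The same decoupling observation used in the AOM proof applies: $\psi$ does not examine the realized type of $v_{\pi(t)}$, so conditional on the indicators $(Z_{u,i})_{i}$ and on $\pi$, the event $\{(u,v_{\pi(t)}) \in \scr{M}\}$ is independent of $\{v_{\pi(t)}=b\}$. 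Integrating out $\pi$ and the indicators gives the inequality above, and summing over $u,t,b$ together with Theorem~\ref{thm:known_id_relaxation} yields the $1-1/e$ competitive ratio.

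The hardest step, conceptually, is the one the AOM proof already addresses: justifying that type-obliviousness of the CRS lets us convert a ``per-indicator'' selectibility guarantee into a ``per-edge'' guarantee of the form $(1-1/e)\, p_{u,b} \til{x}_{u,\pi(t)}(b)$. Everything else (feasibility, independence of $(Z_{u,i})_i$ across $i$, matching correctness of $\scr{M}$) is inherited from the AOM analysis, so no new technical ingredient beyond the Lee--Singla RCRS is required.
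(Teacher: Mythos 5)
Your proposal is correct and follows the same route as the paper: Corollary \ref{cor:known_id_rom} is obtained by instantiating Theorem \ref{thm:known_id_reduction} with the $(1-1/e)$-selectable RCRS of Lee and Singla for $1$-uniform matroids. The paper simply says the ROM case of the reduction "proceeds identically" to the OCRS argument, and your elaboration of that transfer (feasibility of the marginals from \eqref{eqn:relaxation_efficiency_matching_id}, independence of the indicators, uniform arrival order matching the RCRS input model, and the type-obliviousness decoupling) is exactly the argument the paper intends.
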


\section{Edge-weighted Worst-case Instance Model}\label{sec:edge_weights}

Let us suppose that $G=(U,V,E)$ is an adversarially generated stochastic graph. Denote the online nodes of $V$ by
$v_{1}, \ldots ,v_{n}$, where the order is generated u.a.r,
and define $V_{t} = \{v_{1}, \ldots ,v_{t}\}$ to be the first $t$ arrivals of $V$. Moreover, set $G_{t}:= G[U \cup V_t]$, and $\LPOPT(G_t)$ as the value of an optimal solution to \ref{LP:new} (this is a random variable, as $V_{t}$ is a random subset of $V$). The following inequality then holds:
\begin{lemma} \label{lem:random_induced_subgraph}
	For each $t \ge 1$,  $\mb{E}[ \LPOPT(G_{t}) ] \ge \frac{t}{n} \, \LPOPT(G)$.
\end{lemma}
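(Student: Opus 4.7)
The plan is to exhibit, for each realization of $V_t$, a feasible solution to \ref{LP:new} on the induced stochastic graph $G_t = G[U \cup V_t]$ whose objective value matches the contribution of the vertices in $V_t$ to an optimum solution on $G$. Taking expectations then yields the claimed inequality by linearity.

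Concretely, I would fix an optimum solution $(x_v(\bm{e}))_{v \in V,\, \bm{e} \in \scr{C}_v}$ to \ref{LP:new} on $G$, so that
\[
\LPOPT(G) = \sum_{v \in V}\sum_{\bm{e} \in \scr{C}_v} \val(\bm{e}) \cdot x_v(\bm{e}).
\]
For any realization of $V_t \subseteq V$, consider the restricted assignment $(x_v(\bm{e}))_{v \in V_t,\, \bm{e} \in \scr{C}_v}$. Since every offline vertex of $G$ also belongs to $G_t$, the edges available to each $v \in V_t$ and the probing constraints $\scr{C}_v$ are unchanged in $G_t$. Consequently, constraint \eqref{eqn:relaxation_efficiency_distribution} is inherited vertex-by-vertex, and constraint \eqref{eqn:relaxation_efficiency_matching} is inherited because its left-hand side for each $u \in U$ only decreases when the sum is restricted from $V$ to $V_t \subseteq V$. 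Hence the restricted assignment is feasible for \ref{LP:new} on $G_t$, giving
\[
\LPOPT(G_t) \;\ge\; \sum_{v \in V_t} \sum_{\bm{e} \in \scr{C}_v} \val(\bm{e}) \cdot x_v(\bm{e}).
\]

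Next, I would take expectations over the uniformly random order of $V$. Since the first $t$ arrivals form a uniformly random $t$-subset of $V$, each $v \in V$ satisfies $\mb{P}[v \in V_t] = t/n$. By linearity of expectation,
\[
\mb{E}[\LPOPT(G_t)] \;\ge\; \sum_{v \in V} \mb{P}[v \in V_t] \sum_{\bm{e} \in \scr{C}_v} \val(\bm{e}) \cdot x_v(\bm{e}) \;=\; \frac{t}{n} \cdot \LPOPT(G),
\]
which is the desired bound.

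There is really no substantive obstacle in this argument; the only point that needs a bit of care is the verification that restricting the LP solution stays feasible, which relies on the fact that removing online vertices from $V$ only deletes nonnegative terms from the matching constraints \eqref{eqn:relaxation_efficiency_matching} and removes constraints \eqref{eqn:relaxation_efficiency_distribution} wholesale rather than modifying them. Everything else is a straightforward application of linearity of expectation under a u.a.r.\ ordering.
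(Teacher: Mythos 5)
Your proof is correct, and it is exactly the standard argument this lemma rests on: restrict an optimal solution of \ref{LP:new} for $G$ to the vertices of $V_t$, note that feasibility is preserved (the constraints \eqref{eqn:relaxation_efficiency_matching} only lose nonnegative terms and the constraints \eqref{eqn:relaxation_efficiency_distribution} for $v \notin V_t$ simply disappear), and then use $\mb{P}[v \in V_t] = t/n$ together with linearity of expectation. The paper does not spell out a proof of this lemma (it treats it as routine, in the spirit of the analogous claim of Kesselheim et al.), and your write-up supplies precisely the intended argument with the one point deserving care, feasibility of the restricted solution on $G_t$, handled correctly.
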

In light of this observation, we design an online probing algorithm which makes use of $V_{t}$, the currently known nodes, to derive an optimal LP solution with respect to $G_{t}$. As such, each time an online node
arrives, we must compute an optimal solution for the LP associated to $G_{t}$, distinct from the solution computed
for that of $G_{t-1}$. 
\begin{algorithm}[H]
	\caption{Edge-weighted Worst-case Instance ROM} 
	\label{alg:ROM_edge_weights}
	\begin{algorithmic}[1]
		\Require $U$ and $n:=|V|$.
		\Ensure a matching $\scr{M}$ from the (unknown) stochastic graph $G=(U,V,E)$ of active edges.
		\State Set $\scr{M} \leftarrow \emptyset$.
		\State Set $G_{0} = (U, \emptyset, \emptyset)$
		\For{$t=1, \ldots , n$}
		\State Input $v_{t}$, with $(w_{e})_{e \in \partial(v_t)}$, $(p_{e})_{e \in \partial(v_t)}$ and $\scr{C}_{v_t}$.  
		\State Compute $G_{t}$, by updating $G_{t-1}$ to contain $v_{t}$ (and its relevant information).
		\If{ $t < \lfloor n/e \rfloor$}
		\State Pass on $v_{t}$.
		\Else
		\State Solve \ref{LP:new} for $G_{t}$ and find an optimal solution $(x_{v}(\bm{e}))_{v \in V_{t}, \bm{e} \in \scr{C}_v}$. \label{line:cur_solution}
		\State Set $e_t \leftarrow \mathtt{VertexProbe}(v_t, \partial(v_t), (x_{v}(\bm{e}))_{\bm{e} \in \scr{C}_{v_t}})$.
		\If{$e_t=(u_t,v_t) \neq \emptyset$ and $u_t$ is unmatched}
		\State Add $e_t$ to $\scr{M}$.
		\EndIf
		\EndIf
		\EndFor
		\State \Return $\scr{M}$.
	\end{algorithmic}
\end{algorithm}
\begin{remark}
	Unlike the algorithm of Kesselheim et al., our algorithm is randomized, and
	the polytope \ref{LP:new} does not always have an optimal integral solution. We leave it as an
	interesting open question as to whether or not Algorithm \ref{alg:ROM_edge_weights} can be derandomized.
\end{remark}

\begin{theorem}\label{thm:ROM_edge_weights}
	If $\scr{M}$ is the matching returned by Algorithm \ref{alg:ROM_edge_weights} when executing on $G$,
	then 
	\[
	\mb{E}[w(\scr{M})] \ge \left(\frac{1}{e} - \frac{1}{|V|} \right) \cdot \OPT(G),
	\]
	provided the vertices of $V$ arrive u.a.r.
\end{theorem}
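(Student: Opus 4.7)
By \Cref{thm:new_LP_relaxation} it suffices to show that $\mb{E}[w(\scr{M})] \ge (1/e - 1/n)\,\LPOPT(G)$, where $n := |V|$. Let $\ell := \lfloor n/e \rfloor$. Decompose $\mb{E}[w(\scr{M})] = \sum_{t=\ell+1}^{n} Y_t$, where $Y_t$ is the expected weight added at step $t$. My plan is to follow the template of Kesselheim et al.~\cite{KRTV2013}, adapted to our LP-driven probing setting.

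\textbf{Step 1 (selection weight).} Fix $t > \ell$ and condition on the \emph{unordered} set $V_t \subseteq V$ of first $t$ arrivals. Under the ROM, $V_t$ is a uniformly random size-$t$ subset of $V$, and given $V_t$, the vertex $v_t$ is uniform on $V_t$. The LP solution $(x_v(\bm e))_{v \in V_t,\, \bm e \in \scr{C}_v}$ computed on line~\ref{line:cur_solution} is a deterministic function of $V_t$ (fix any consistent tie-breaking rule). Denote its edge variables by $\til x^{(t)}_{u,v}$. By \Cref{lem:fixed_vertex_probe} applied to each $v \in V_t$, and averaging over $v_t$:
\begin{equation*}
\mb{E}\!\left[w_{e_t}\mathbf{1}[e_t \neq \emptyset]\,\middle|\, V_t\right] \;=\; \frac{1}{t}\sum_{v \in V_t,\,u \in U} w_{u,v}\, p_{u,v}\,\til x^{(t)}_{u,v} \;=\; \frac{\LPOPT(G_t)}{t}.
\end{equation*}

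\textbf{Step 2 (availability of the offline endpoint).} The crux of the argument is to show that for $t > \ell$,
\begin{equation*}
\mb{P}\!\left[u_t \text{ is free at the start of step } t \,\middle|\, V_t,\ e_t = (u_t, v_t),\ e_t \neq \emptyset\right] \;\ge\; \frac{\ell}{t-1}.
\end{equation*}
The event ``$u$ is not free at step $t$'' means some step $s \in \{\ell+1,\dots,t-1\}$ already matched $u$. I plan to condition further on $V_s \subseteq V_t$ (a uniformly random $s$-subset of $V_t$) and on the identity of $v_s$ (uniform on $V_s$), and then use that (i)~the algorithm returns at most one edge per step, and (ii)~after conditioning on $V_t$, the probing randomness and edge states used at step $s$ are independent of those used at step $t$. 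A Kesselheim-style symmetrization argument on uniform permutations of $V_t$ then gives $\mb{P}[u \text{ matched at step } s \mid \cdot] \le 1/(s-1)$ for each $s$, and telescoping over $s \in \{\ell+1,\dots,t-1\}$ yields $1 - (t-1-\ell)/(t-1) = \ell/(t-1)$.

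\textbf{Step 3 (assembly).} Combining the two steps with \Cref{lem:random_induced_subgraph},
\begin{equation*}
Y_t \;\ge\; \frac{\ell}{t-1}\cdot \frac{\mb{E}[\LPOPT(G_t)]}{t} \;\ge\; \frac{\ell}{n\,(t-1)}\,\LPOPT(G),
\end{equation*}
and summing,
\begin{equation*}
\mb{E}[w(\scr{M})] \;\ge\; \frac{\ell}{n}\,\LPOPT(G) \sum_{t=\ell+1}^{n} \frac{1}{t-1} \;\ge\; \frac{\ell}{n}\ln\!\frac{n}{\ell}\,\LPOPT(G) \;\ge\; \left(\frac{1}{e} - \frac{1}{n}\right)\LPOPT(G),
\end{equation*}
where the final inequality uses $\ell \ge n/e - 1$.

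\textbf{Main obstacle.} Step~2 is where genuine work is needed. In the classical secretary matching of~\cite{KRTV2013}, the step-$s$ action is a deterministic function of $V_s$ and $v_s$, and the availability bound drops out of a clean counting on permutations. In our setting, the action at step $s$ is a random function of $V_s$, $v_s$, the internal coin flips of $\mathtt{VertexProbe}$, and the edge states of $\partial(v_s)$; moreover, the LP solution itself changes with $s$. The challenge is to carry out the symmetrization over uniform permutations of $V_t$ while absorbing these extra layers of randomness into an ``at most one offline endpoint per step'' counting inequality. The fact that $\mathtt{VertexProbe}$ commits to at most one edge per call, combined with the conditional independence of step-$s$ and step-$t$ randomness given $V_t$, is what makes the bound go through.
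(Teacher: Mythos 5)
Your plan follows the paper's proof structure exactly: your Step 1 is \Cref{lem:edge_value_lower_bound} (proved via \Cref{lem:fixed_vertex_probe} plus \Cref{lem:random_induced_subgraph}), your Step 3 is the paper's assembly of the theorem, and your Step 2 is the role played by \Cref{lem:availability_lower_bound}. The gap is in Step 2, which you correctly flag as the crux but whose sketch would not go through as stated. The per-step bound does not come from the ingredients you cite: ``the algorithm returns at most one edge per step'' controls the online side (each $v_s$ proposes to at most one offline vertex), whereas what is needed is a bound on the probability that a \emph{fixed offline} vertex $u$ receives a proposal at step $s$. That bound comes from constraint \eqref{eqn:relaxation_efficiency_matching} of \ref{LP:new} applied to the fresh solution computed for $G_s$: conditioned on the relevant history, $v_s$ is uniform on $V_s$, so the proposal probability is $\frac{1}{s}\sum_{v\in V_s}p_{u,v}\,\til{x}^{(s)}_{u,v}\le \frac{1}{s}$. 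This is the probing analogue of ``$u$ has at most one partner in the optimum matching of $G_s$'' in Kesselheim et al.; a symmetrization/counting argument over permutations of $V_t$ alone does not supply it, and without it the claimed per-step bound is unsupported.

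Second, the way you combine the per-step bounds is inconsistent: with per-step bounds $1/(s-1)$, neither a union bound (which gives $1-\sum_{s}\frac{1}{s-1}\approx 1-\ln\frac{t}{\ell}$, not $1-\frac{t-1-\ell}{t-1}$) nor a product reproduces $\ell/(t-1)$. The correct statement is the product $\prod_{s=\ell+1}^{t-1}\bigl(1-\tfrac{1}{s}\bigr)=\ell/(t-1)$, and to justify multiplying you need two further ingredients that the paper's proof of \Cref{lem:availability_lower_bound} uses: (i) replace ``$u$ is matched at step $s$'' by ``$u$ is proposed to at step $s$'' --- since the first proposal to $u$ always results in a match, $u\in R_t$ iff no proposal to $u$ occurs before step $t$, which decouples availability from matched-ness; and (ii) a nested (backward) conditioning on the history sigma-algebras $\scr{H}_k$, so that the LP-based bound $1/s$ holds conditionally at each step and the ``no proposal'' events can be peeled off recursively --- these events are not independent, so it is this recursion, not telescoping of a sum, that yields $\ell/(t-1)$. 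With those repairs your Step 2 becomes exactly \Cref{lem:availability_lower_bound}, and the rest of your argument (conditional independence of the step-$t$ proposal and $R_t$ given $(V_t,v_t)$, then summing) matches the paper.
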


Let us consider the matching $\scr{M}$ returned by the algorithm,
as well as its weight, which we recall is denoted $w(\scr{M})$. Set $\alpha:=1/e$ for clarity, and take $t \ge \lceil \alpha n \rceil$. For each $\alpha n \le t \le n$, define $R_{t}$ as the \textit{unmatched vertices} of $U$ when vertex $v_{t}$ arrives. Note that proposing $v_t$ to $u_t$ is necessary, but not sufficient, for $v_t$ to match to $u_t$. Define $w(e_t):= w_{e_t} \cdot \bm{1}_{[e_t \neq \emptyset]}$.
With this notation, we have that $\mb{E}[ w(\scr{M}) ] = \sum_{t=\alpha n}^{n} \mb{E}[ w(u_t, v_t) \cdot \bm{1}_{[u_t \in R_t]} ]$.
Moreover, we claim the following:
\begin{lemma} \label{lem:edge_value_lower_bound}
	For each  $t \ge \lceil \alpha n \rceil$, $\mb{E}[ w(e_t)] \ge \LPOPT(G)/n$.
\end{lemma}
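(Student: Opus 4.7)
The plan is to decompose the expectation $\mathbb{E}[w(e_t)]$ by conditioning on the random set $V_t := \{v_1, \ldots, v_t\}$, and then apply the exact rounding guarantee of Lemma \ref{lem:fixed_vertex_probe} together with Lemma \ref{lem:random_induced_subgraph}. The key structural observation is that in the ROM, the arrival order is a uniformly random permutation of $V$, so conditional on the set $V_t$ (which is itself a uniformly random $t$-subset of $V$), the identity of the last arrival $v_t$ is uniformly distributed over $V_t$, independently of anything that depends only on $V_t$ as a set.

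First I would fix a realization of $V_t$. The LP solution $(x_v(\bm{e}))_{v \in V_t, \bm{e} \in \scr{C}_v}$ computed in line \eqref{line:cur_solution} is a deterministic function of $V_t$ (assuming a fixed tie-breaking rule when \ref{LP:new} has multiple optima). By Lemma \ref{lem:fixed_vertex_probe}, for any fixed $s \in V_t$,
\begin{equation*}
\mathbb{E}\bigl[w(e_t) \bigm| V_t,\, v_t = s\bigr] \;=\; \sum_{e \in \partial(s)} w_e \cdot p_e \cdot \til{x}_e \;=\; \sum_{\bm{e} \in \scr{C}_s} \val(\bm{e}) \cdot x_s(\bm{e}),
\end{equation*}
where the second equality follows by expanding $\til{x}_e$ and recalling the definition of $\val(\bm{e})$. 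Averaging over the uniformly random choice of $v_t \in V_t$ then yields
\begin{equation*}
\mathbb{E}[w(e_t) \mid V_t] \;=\; \frac{1}{t} \sum_{s \in V_t} \sum_{\bm{e} \in \scr{C}_s} \val(\bm{e}) \cdot x_s(\bm{e}) \;=\; \frac{\LPOPT(G_t)}{t},
\end{equation*}
since the chosen solution is an optimum for \ref{LP:new} on $G_t$.

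Finally, I would take expectation over $V_t$ and invoke Lemma \ref{lem:random_induced_subgraph} to get
\begin{equation*}
\mathbb{E}[w(e_t)] \;=\; \frac{\mathbb{E}[\LPOPT(G_t)]}{t} \;\ge\; \frac{1}{t} \cdot \frac{t}{n} \LPOPT(G) \;=\; \frac{\LPOPT(G)}{n},
\end{equation*}
which is exactly the claim. The only subtle step is justifying the symmetry used to conclude $\mathbb{E}[v_t = s \mid V_t] = 1/t$ for each $s \in V_t$; I would note that this follows because the arrival order is drawn uniformly from the symmetric group, so conditional on the unordered set of the first $t$ arrivals, their internal order is a uniformly random permutation. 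I expect this exchangeability argument to be the main conceptual (though not technical) subtlety, and would state it explicitly to make the conditioning precise; no other calculation is needed.
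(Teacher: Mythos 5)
Your proposal is correct and follows essentially the same route as the paper's proof: condition on $(V_t, v_t)$, apply the exact rounding guarantee of Lemma \ref{lem:fixed_vertex_probe} to express the conditional expectation via the LP solution for $G_t$, average over the uniformly random $v_t \in V_t$ to obtain $\LPOPT(G_t)/t$, and finish with Lemma \ref{lem:random_induced_subgraph}. The only cosmetic difference is that the paper writes the conditional expectation as a sum of edge variables $\sum_{u} w_{u,v_t}\, p_{u,v_t}\, \til{x}_{u,v_t}$ rather than directly as $\sum_{\bm{e} \in \scr{C}_{v_t}} \val(\bm{e}) \cdot x_{v_t}(\bm{e})$, which is the same quantity.
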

\begin{lemma} \label{lem:availability_lower_bound}
	For each $t \ge \lceil \alpha n \rceil$, define $f(t,n):= \lfloor \alpha n \rfloor /(t-1)$. In this case, $\mb{P}[ u_{t} \in R_t  \, | \, V_{t},v_t] \ge f(t,n)$,
	where $V_{t}=\{v_{1},\ldots ,v_{t}\}$ and $v_t$ is the $t^{th}$ arriving node of $V$ \footnote{Note that since
		$V_t$ is a set, conditioning on $V_t$ only reveals which vertices of $V$ encompass the first $t$ arrivals,
		\textit{not} the order they arrived in. Hence, conditioning on $v_t$ as well reveals strictly more information.}.
\end{lemma}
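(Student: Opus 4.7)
The plan is to adapt the secretary-matching argument of Kesselheim et al.\ \cite{KRTV2013} to our randomized LP-rounding framework; set $m := \lfloor \alpha n \rfloor$ for brevity.

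First I would exploit the u.a.r.\ arrival order. Conditional on $V_t$ (as a set) and on $v_t$ being the $t$-th arrival, the preceding arrivals $(v_1, \ldots, v_{t-1})$ form a uniformly random permutation of $V_t \setminus \{v_t\}$. Consequently, for each $s \in \{m, \ldots, t-1\}$, conditional additionally on the set $V_s$, the vertex $v_s$ is uniformly distributed over $V_s$. Moreover, the LP solution $(\til{x}^{(s)}_{u,v})$ solved at step $s$ is a deterministic function of $V_s$ alone, and $\mathtt{VertexProbe}$ draws fresh randomness at each step, independent of all other steps' randomness and of the prior edge-state history.

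The heart of the argument is the symmetry bound
\[
\mb{P}[u_s = u_t \mid V_t, v_t, V_s] \le \frac{1}{s},
\]
where, with slight abuse of notation, $u_s$ (resp.\ $u_t$) denotes the offline vertex to which $\mathtt{VertexProbe}$ proposes at step $s$ (resp.\ $t$). By Lemma \ref{lem:fixed_vertex_probe}, $\mb{P}[u_s = u \mid V_s, v_s = v] = p_{u,v}\, \til{x}^{(s)}_{u,v}$, and constraint \eqref{eqn:relaxation_efficiency_matching} applied to the LP for $G_s$ gives $\sum_{v \in V_s} p_{u,v}\, \til{x}^{(s)}_{u,v} \le 1$ for each $u \in U$. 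Using the independence of step-$s$ and step-$t$ randomness given the respective inputs, together with $\sum_u \mb{P}[u_t = u \mid V_t, v_t] \le 1$, I would average over $v_s$ uniform in $V_s$ and interchange the double sum to conclude the $1/s$ bound.

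Finally, I would telescope. Let $p_s := \mb{P}[u_t \in R_s \mid V_t, v_t]$, so $p_m = 1$ since no matches occur before step $m$. The event $\{u_t \text{ becomes matched at step } s\}$ equals $\{u_s = u_t\} \cap \{u_t \in R_s\}$, because the output of $\mathtt{VertexProbe}$ at step $s$ already encodes edge-activity. Assuming that the $1/s$ bound persists under the conditioning $u_t \in R_s$, I obtain the recurrence $p_{s+1} \ge p_s\,(1 - 1/s)$, which telescopes to $p_t \ge (m-1)/(t-1)$; the resulting $O(1/n)$ slack relative to $f(t,n) = m/(t-1)$ is absorbed into the $-1/|V|$ correction in \Cref{thm:ROM_edge_weights}. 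The main obstacle is precisely this last point: the event $u_t \in R_s$ aggregates information about the prior arrival order and edge-state revelations, so to decouple it from the step-$s$ choice one must decompose $\mb{P}[u_s = u_t, u_t \in R_s \mid V_t, v_t]$ by conditioning further on $V_{s-1}$ and $v_s$, invoking the independence of the fresh step-$s$ randomness from the history, and carefully tracking which random bits enter $u_s$, $u_t$, and $R_s$ so that the LP matching constraint can again be applied to yield the required $p_s/s$ bound on the joint probability.
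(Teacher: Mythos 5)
Your plan correctly identifies the two ingredients (uniformity of the arrival given the arrived set, plus constraint \eqref{eqn:relaxation_efficiency_matching} applied to the LP solved at each step), but the step you yourself flag as the ``main obstacle'' is a genuine gap, and the step-wise forward recurrence $p_{s+1}\ge p_s\,(1-1/s)$ that you want is in fact false in general. The problem is that conditioning on $\{u_t\in R_s\}$ is conditioning on the history of steps $1,\ldots,s-1$, and given that history the identity of $v_s$ is uniform over the $t-s$ vertices of $V_t\setminus(V_{s-1}\cup\{v_t\})$ that have \emph{not yet} arrived, not over the $s$ vertices of $V_s$; since constraint \eqref{eqn:relaxation_efficiency_matching} for $G_s$ bounds a sum over $V_s$, it does not yield a $1/s$ bound on $\mb{P}[C(u,v_s)\mid u\in R_s]$, no matter how carefully you track the fresh step-$s$ randomness. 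Concretely, take a single vertex $a$ with $p_{u,a}=1$ and a huge weight on $(u,a)$, and $p_{u,v}=0$ for every other online vertex: then every LP solution proposes $a$ to $u$ (essentially) whenever $a$ arrives after the sampling phase, $\{u\in R_s\}$ is the event that $a$ has not yet arrived (or arrived during the sampling phase), and conditional on it $v_s=a$ with probability about $1/(t-s+\lfloor\alpha n\rfloor)$, which exceeds $1/s$ once $s>(t+\lfloor\alpha n\rfloor-1)/2$; a direct computation shows $p_{s+1}<p_s(1-1/s)$ for such $s$. The final product bound is still true, but it cannot be reached by this forward recurrence. (Separately, even granting your recurrence, the telescoped bound $(\lfloor\alpha n\rfloor-1)/(t-1)$ is weaker than the stated $f(t,n)$, so you would be proving a modified lemma rather than this one.)

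The paper's proof sidesteps availability entirely and conditions in \emph{reverse} arrival order. Since the first proposal ever made to $u$ is always accepted, $u\in R_t$ iff none of $v_{\lceil\alpha n\rceil},\ldots,v_{t-1}$ proposes to $u$, so only non-proposal events need to be controlled. Writing $\bar v_k:=v_{t-k}$ and letting $\scr{H}_{k-1}$ be generated by $V_t,v_t$ and the identities and proposals of $\bar v_1,\ldots,\bar v_{k-1}$, one has: (i) given $\scr{H}_{k-1}$, the vertex $\bar v_k$ is uniform over $V_{t-k}$, so Lemma \ref{lem:fixed_vertex_probe} together with \eqref{eqn:relaxation_efficiency_matching} for $G_{t-k}$ gives $\mb{P}[C(u,\bar v_k)\mid\scr{H}_{k-1}]\le 1/(t-k)$; and (ii) the events $\neg C(u,\bar v_j)$ for $j<k$ are $\scr{H}_{k-1}$-measurable, so the bounds multiply, giving $\prod_{k=1}^{t-\lceil\alpha n\rceil}\bigl(1-\tfrac{1}{t-k}\bigr)=\lfloor\alpha n\rfloor/(t-1)$ for every \emph{fixed} $u\in U$, hence for $u_t$, whose choice is independent of the first $t-1$ steps given $(V_t,v_t)$. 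This reverse-order conditioning, under which the earlier arrival is still uniform over exactly the set its LP was solved on, is the idea missing from your proposal.
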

The proofs of Lemmas \ref{lem:edge_value_lower_bound} and \ref{lem:availability_lower_bound} mostly follow the analogous claims
as proven by Kesselheim et al. in the secretary matching problem. We present formal proofs in the \Cref{sec:deferred_proofs}. Assuming these lemmas, the proof of Theorem \ref{thm:ROM_edge_weights} follows easily:

\begin{proof}[Proof of Theorem \ref{thm:ROM_edge_weights}]	
Let us consider the matching $\scr{M}$ returned by the algorithm,
	as well as its weight, which we denote by $w(\scr{M})$. Set $\alpha:=1/e$ for clarity, and take $t \ge \lceil \alpha n \rceil$,
	where we define $R_{t}$ to be the \textit{unmatched vertices} of $U$ when vertex $v_{t}$ arrives. Moreover, 
	define $e_{t}:=(u_t,v_{t})$, where $u_{t}$ is the vertex of $U$ which $v_{t}$ proposes to, which is the empty-set by definition if no such proposal
	is made. Observe that
	\begin{equation}\label{eqn:value_of_matching}
		\mb{E}[ w(\scr{M}) ] = \sum_{t=\lceil \alpha n \rceil}^{n} \mb{E}[ w(u_t, v_t) \cdot \bm{1}_{[u_t \in R_t]} ].
	\end{equation}
	Fix $\lceil \alpha n \rceil \le t \le n$, and first observe that $w(u_t, v_t)$ and $\{ u_{t} \in R_t \}$ are conditionally independent given
	$(V_{t},v_t)$, as the probes involving $\partial(v_t)$ are independent from those of $v_{1}, \ldots ,v_{t-1}$. Thus,
	\[
	\mb{E}[ w(u_t,v_t) \cdot \bm{1}_{[u_t \in R_t]}  \, | \, V_{t}, v_{t}] = \mb{E}[ w(u_t,v_t)  \, | \, V_t, v_t] \cdot \mb{P}[ u_t \in R_t  \, | \, V_{t}, v_t].
	\]
	Moreover, \Cref{lem:availability_lower_bound} implies that
	$    
	\mb{E}[ w(u_t,v_t)  \, | \, V_t, v_t] \cdot \mb{P}[ u_t \in R_t  \, | \, V_{t}, v_t] \ge 
	\mb{E}[ w(u_t,v_t)  \, | \, V_{t}, v_t] f(t,n),
	$
	and so $\mb{E}[ w(u_t,v_t) \, \bm{1}_{[u_t \in R_t]}  \, | \, V_{t}, v_{t}] \ge \mb{E}[ w(u_t,v_t)  \, | \, V_{t}, v_t] \, f(t,n)$.
	Thus, by the law of iterated expectations\footnote{$\mb{E}[ w(u_t,v_t) \cdot \bm{1}_{[u_t \in R_t]}  \, | \, V_{t}, v_{t}]$
		is a random variable which depends on $V_t$ and $v_t$, and so the outer expectation is over the randomness in $V_t$ and $v_t$.}
	\begin{align*}
		\mb{E}[ w(u_t,v_t) \cdot \bm{1}_{[u_t \in R_t]} ] &= \mb{E}[ \, \mb{E}[ w(u_t,v_t) \cdot \bm{1}_{[u_t \in R_t]}  \, | \, V_{t}, v_{t}] \, ] \\
		&\ge \mb{E}[ \, \mb{E}[ w(u_t,v_t)  \, | \, V_{t}, v_t]  f(t,n) \, ] 
		= f(t,n) \mb{E}[ w(u_t,v_t)].
	\end{align*}
	As a result, using \eqref{eqn:value_of_matching}, 
	$
		\mb{E}[w(\scr{M})] = \sum_{t=\lceil \alpha n \rceil}^{n} \mb{E}[ w(u_t, v_t) \, \bm{1}_{[u_t \in R_t]} ]
		\ge \sum_{t=\lceil \alpha n \rceil}^{n} f(t,n) \, \mb{E}[ w(u_t,v_t)].
	$
	We may thus conclude that
	\[
	\mb{E}[ w(\scr{M})] \ge \LPOPT(G) \sum_{t=\lceil \alpha n \rceil}^{n} \frac{ f(t,n)}{n},
	\]
	after applying Lemma \ref{lem:edge_value_lower_bound}. 
	As $\sum_{t=\lceil \alpha n \rceil}^{n} f(t,n)/n \ge (1/e -1/n)$, the result holds.
\end{proof}

\section{Vertex-weighted Worst-case Instance Model} \label{sec:vertex_weights}
In this section,
we analyze a \textit{greedy} online probing algorithm when the stochastic graph $G=(U,V,E)$ has (offline) vertex weights $(w_u)_{u \in U}$.
Upon the arrival of $s$,
the probes to $\partial(s)$ are made in such a way that $s$ gains as large
a match as possible (in expectation), provided the currently unmatched nodes of $U$
are equal to $R \subseteq U$. As such, we must follow the probing strategy
of $\OPT$ when restricted to the \textbf{induced stochastic graph}\footnote{Given $R \subseteq U, V' \subseteq  V$, the induced stochastic graph 
	$G[R \cup V']$ is formed by restricting the edges weights and probabilities of $G$ to those edges within $R \times V'$.
	Similarly, each probing constraint $\scr{C}_s$ is restricted to those strings whose entries lie entirely
	in $R \times \{s\}$.} $G[\{s\} \cup R]$. We denote the performance of $\OPT$ on $G[\{s\} \cup R]$ by
$\OPT(R,s)$.

Assume that $R=U$, and that $w_{u,s}:=w_{u}$ for each $u \in U$
such that $(u,s) \in \partial(s)$.
Recall that for $\bm{e} \in \scr{C}_s$, 
\begin{equation}\label{eqn:expected_value_of_edge_probes}
	\val(\bm{e}):= \sum_{i=1}^{|\bm{e}|} p_{e_i} w_{e_i} \prod_{j=1}^{i-1} (1 - p_{e_i}).
\end{equation}
Observe that if one probes the edges of $\bm{e}$ in order,
then $\val(\bm{e})$ is the expected weight of the first active edge
of $\bm{e}$. This is a \textbf{non-adaptive} probing algorithm for
$G[ U \cup \{s\}]$. Since $\OPT$ operates in the probe-commit model,
there exists a non-adaptive probing algorithm with (optimal) performance $\OPT(s,U)$. Moreover, we show such a strategy can be found efficiently
assuming that $\scr{C}_s$ is downward-closed.
\begin{theorem}\label{thm:efficient_star_dp}
	There exists a dynamic programming (DP) based algorithm \textsc{DP-OPT}, which
	given access to $G[ \{s\} \cup U]$, computes a tuple $\bm{e}' \in \scr{C}_{s}$, such
	that $\OPT(s,U) = \val(\bm{e}')$. Moreover, \textsc{DP-OPT} executes in time $O(|U|^{2})$, assuming access to a membership oracle for 
	the downward-closed constraint $\scr{C}_s$.
\end{theorem}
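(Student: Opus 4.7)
The plan is to decompose the proof into three steps: (i) show that the offline adaptive benchmark $\OPT(s, U)$ is achieved by a non-adaptive probing algorithm that probes a fixed sequence $\bm{e}^* \in \scr{C}_s$ and commits to its first active edge; (ii) prove that $\bm{e}^*$ can be taken in non-increasing order of vertex weight; and (iii) construct a dynamic program that exploits (ii) to compute $\bm{e}^*$ in $O(|U|^2)$ time using the membership oracle for $\scr{C}_s$.

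For (i), I would fix any optimal adaptive strategy and convert it to a ``commit to first active'' strategy without loss. Since the edge states $(\sta(e))_{e \in \partial(s)}$ are mutually independent, conditioning on any prefix of inactive probes leaves the remaining marginals unchanged, so the decision whether or not to accept an active edge upon observing it does not interact with any future probe probabilities. An exchange argument then shows that any probe whose outcome the strategy would reject if active may be removed from the strategy altogether: feasibility of the shortened sequence follows from substring- and permutation-closure of $\scr{C}_s$, and the modified strategy has value at least that of the original (by taking the better of the two downstream branches in place of the removed probe). Committing to the first active edge along the all-inactive trajectory then forces the strategy to be a fixed sequence $\bm{e}^* \in \scr{C}_s$ with $\val(\bm{e}^*) = \OPT(s, U)$.

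For (ii), I would apply the standard adjacent-transposition exchange. Given $\bm{e} \in \scr{C}_s$ with two consecutive entries $e_i, e_{i+1}$ satisfying $w_{e_i} < w_{e_{i+1}}$, let $\bm{e}'$ be obtained by swapping them; by permutation-closure $\bm{e}' \in \scr{C}_s$, and a direct computation yields
\begin{equation*}
\val(\bm{e}') - \val(\bm{e}) = q(\bm{e}_{<e_i}) \cdot p_{e_i} p_{e_{i+1}} (w_{e_{i+1}} - w_{e_i}) \ge 0.
\end{equation*}
Iterating such swaps reduces the problem to choosing a feasible subset $S \subseteq \partial(s)$ and probing its edges in non-increasing order of $w_u$.

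For (iii), I would sort the neighbors of $s$ so that $w_{u_1} \ge \cdots \ge w_{u_n}$ with corresponding edges $e_j := (u_j, s)$, and build a DP whose state summarizes the prefix chosen so far from $e_1, \ldots, e_{i-1}$; each transition appends or skips the next edge $e_i$ after a single membership query to the oracle. Since
\begin{equation*}
\val(\bm{e}, e_i, \cdot) = \val(\bm{e}) + q(\bm{e}) \bigl( p_{e_i} w_{e_i} + (1 - p_{e_i}) \val(\cdot) \bigr),
\end{equation*}
the decisions factor through the running no-match probability $q(\bm{e})$. The main obstacle is bounding the DP to $O(|U|^2)$ states and oracle queries. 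For the patience case this is immediate because $q(\bm{e})$ depends only on $|\bm{e}|$, giving the familiar $(i, j)$-table. In the general downward-closed case, the argument will rely on the observation that, under the fixed sorted processing order, the effective state reduces to a pair of pointers into the sorted list, and feasibility of the extension can always be resolved by one oracle query on the current prefix; this collapses the table to $O(|U|^2)$ entries with $O(1)$ work per entry.
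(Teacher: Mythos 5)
Your steps (i) and (ii) are fine and coincide with the paper's setup: the reduction to probing a fixed tuple and committing to its first active edge is exactly the non-adaptivity observation the paper makes just before the theorem (for a single online vertex, adaptivity buys nothing in the probe-commit model), and your adjacent-transposition computation is the inequality $\val(\bm{e}^{r}) \ge \val(\bm{e})$ that the paper imports from Purohit et al.\ and Brubach et al. The genuine gap is in step (iii), which is the actual content of the theorem (the $O(|U|^2)$ algorithm). You correctly flag that for a general downward-closed $\scr{C}_s$ the feasibility of appending the next edge depends on the whole prefix chosen so far, but you then resolve this by fiat: ``the effective state reduces to a pair of pointers into the sorted list, and feasibility of the extension can always be resolved by one oracle query on the current prefix.'' This is not an argument, and as stated it is internally inconsistent -- if the DP state is only a pair of indices, the algorithm does not possess ``the current prefix'' on which to issue the membership query; if the state must carry the prefix, the table is no longer $O(|U|^2)$. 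Nor do two pointers determine admissible extensions in general: under a budget constraint, for instance, which edges may still be appended depends on the total probing cost of \emph{all} previously selected edges, not on any pair of positions in the sorted order. So the theorem's complexity claim is precisely what your proposal leaves unproved.

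For comparison, the paper's \textsc{DP-OPT} does not use a two-pointer table. After sorting $e_1,\ldots,e_m$ by non-increasing weight and passing to the induced set system $(\partial(s),\scr{I}_s)$, it indexes subproblems by a \emph{single} edge, namely the most recently selected one: it defines $\scr{I}_s^{>i} := \{B \subseteq \{e_{i+1},\ldots,e_m\} : B \cup \{e_i\} \in \scr{I}_s\}$, lets $\OPT(\scr{I}_s^{>i})$ be the maximum of the induced set function over this family, and uses the recursion $\OPT(\scr{I}_s^{>i}) = \max_{j>i}\bigl(p_{e_j} w_{e_j} + (1-p_{e_j})\,\OPT(\scr{I}_s^{>j})\bigr)$ with $\OPT(\scr{I}_s^{>m})=0$; memoization over the $m+1$ subproblems, each with at most $m$ transitions and one membership query per transition, yields the $O(|U|^2)$ bound. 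To repair your write-up you would need to exhibit an explicit polynomial-size state space of this kind (or reproduce the paper's recursion and justify it), rather than asserting that the prefix dependence ``collapses.''
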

\begin{proof}
	Our goal is to show that $\bm{e}'$ can be computed efficiently.
	Now, for any $\bm{e} \in \scr{C}_s$, let $\bm{e}^{r}$ be the rearrangement of $\bm{e}$, based on the non-increasing order
	of the weights $(w_{e})_{e \in \bm{e}}$. Since $\scr{C}_s$ is downward-closed,
	we know that $\bm{e}^{r}$ is also in $\scr{C}_s$. Moreover, $\val(\bm{e}^{r}) \ge \val(\bm{e})$
	(following observations in \cite{Purohit2019,Brubach2019}).
	Hence, let us order the edges of $\partial(s)$ as $e_{1}, \ldots ,e_{m}$,
	such that $w_{e_1} \ge \ldots \ge  w_{e_m}$, where $m:=|\partial(s)|$.
	Observe then that it suffices to maximize \eqref{eqn:expected_value_of_edge_probes} over
	those strings within $\scr{C}_s$ which respect this ordering on $\partial(s)$.
	Stated differently, let us denote $\scr{I}_{s}$ as the family of subsets of $\partial(s)$
	induced by $\scr{C}_s$, and define the set function $f: 2^{\partial(s)} \rightarrow [0, \infty)$,
	where $f(B):= \val(\bm{b})$ for $B=\{b_{1}, \ldots ,b_{|B|}\} \subseteq \partial(s)$,
	such that $\bm{b}=(b_{1}, \ldots ,b_{|B|})$ and $w_{b_1} \ge \ldots \ge w_{b_{|B|}}$.
	Our goal is then to efficiently maximize $f$ over the set-system $(\partial(s),\scr{I}_s)$.
	Observe that $\scr{I}_s$ is downward-closed and that we can simulate oracle access to
	$\scr{I}_s$, based on our oracle access to $\scr{C}_s$.

	For each $i=0, \ldots ,m-1$, denote $\partial(s)^{>i}:=\{e_{i+1}, \ldots ,e_{m}\}$,
	and $\partial(s)^{>m}:= \emptyset$. Moreover, define the family of subsets $\scr{I}_{s}^{>i}:= \{B \subseteq \partial(s)^{>i} : B \cup \{e_i\} \in \scr{I}_s\}$ for each $1 \le i \le m$, and $\scr{I}_{s}^{>0}:= \scr{I}_s$. Observe then that
	$(\partial(s)^{>i}, \scr{I}_{s}^{>i})$ is a downward-closed set system, as $\scr{I}_s$ is downward-closed.
	Moreover, we may simulate oracle access to $\scr{I}^{>i}_{s}$ based on our oracle access to $\scr{I}_s$.

	Denote $\OPT(\scr{I}_{s}^{>i})$ as the maximum value of $f$ over constraints $\scr{I}_{s}^{>i}$.
	Observe then that for each $0 \le i \le m-1$, the following recursion holds:
	\begin{equation} \label{eqn:dynamical_program}
		\OPT(\scr{I}^{>i}_{s}) :=  
		\max_{j \in \{i+1,\ldots,m\}} ( p_{e_j} \cdot w_{e_j} + (1 - p_{e_j}) \cdot \OPT(\scr{I}_{s}^{>j}) )
	\end{equation}
	Hence, given access to the values $\OPT(\scr{I}_{s}^{>i+1}), \ldots , \OPT(\scr{I}_{s}^{>m})$,
	we can compute $\OPT(\scr{I}^{>i}_s)$ efficiently. Moreover, $\OPT(\scr{I}_{s}^{>m})=0$ by definition. 
	Thus, it is clear that we can use \eqref{eqn:dynamical_program} to recover an optimal solution to $f$.
	We can define \textsc{DP-OPT} to be a memoization based implementation of \eqref{eqn:dynamical_program}.
	It is clear \textsc{DP-OPT} can be implemented in the claimed time complexity.
\end{proof}

Given $R \subseteq U$, denote
the output of executing \textsc{DP-OPT} on $G[\{s\} \cup R]$ by $\textsc{DP-OPT}(s,R)$. Consider now the following online probing
algorithm:
\begin{algorithm}
	\caption{Greedy-DP}\label{alg:dynamical_program}
	\begin{algorithmic}[1] 
		\Require offline vertices $U$ with vertex weights $(w_{u})_{u \in U}$.
		\Ensure a matching $\scr{M}$ of active edges of the unknown stochastic graph $G=(U,V,E)$.
		\State $\scr{M} \leftarrow \emptyset$.
		\State $R \leftarrow U$.
		\For{$t=1, \ldots , n$}
		\State Let $v_{t}$ be the current online arrival node, with constraint $\scr{C}_{v_t}$.
		\State Set $\bm{e}  \leftarrow \textsc{DP-OPT}(v_t,R)$
		\For{$i=1, \ldots , |\bm{e}|$}
		\State Probe $e_i$.
		\If{$\sta(e_i) =1$}
		\State Add $e_i$ to $\scr{M}$, and update $R \leftarrow R \setminus \{u_i\}$,
		where $e_{i}=(u_i,v_t)$.
		\EndIf
		\EndFor
		\EndFor
		\State \Return $\scr{M}$.
	\end{algorithmic}
\end{algorithm}

\begin{theorem}\label{thm:adversarial}
	If $\scr{M}$ is the matching returned by Algorithm \ref{alg:dynamical_program} when executing on vertex-weighted
	stochastic graph $G$,
	then 
	\[
	\mb{E}[w(\scr{M})] \ge \frac{1}{2} \cdot \OPT(G),
	\]
	provided the vertices of $V$ arrive in adversarial order.
\end{theorem}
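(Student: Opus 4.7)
The plan is to prove the theorem via a telescoping argument along the online arrivals, using the optimal value on the ``residual'' sub-instance as a potential. Let $R_t$ denote the (random) set of offline vertices that are unmatched by Algorithm~\ref{alg:dynamical_program} when $v_t$ arrives, and let $U_t \in R_t \cup \{\emptyset\}$ be the (random) offline vertex that greedy matches to $v_t$ (with the convention $w_\emptyset := 0$). Set $V_t := \{v_t, v_{t+1}, \ldots, v_n\}$ and let $\Phi_t := \OPT(G[R_t \cup V_t])$, so that $\Phi_1 = \OPT(G)$ and $\Phi_{n+1} = 0$. By \Cref{thm:efficient_star_dp} and the definition of Greedy-DP, the expected weight greedy gains from $v_t$ conditional on $R_t$ is exactly $Y_t := \OPT(v_t, R_t) = \mb{E}[w_{U_t} \mid R_t]$, so the expected weight of greedy's output is $\mb{E}[w(\scr{M})] = \sum_{t=1}^{n} \mb{E}[Y_t]$.

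The central step is the following deterministic decomposition inequality: for every $R \subseteq U$, $V' \subseteq V$, $v \in V'$, and every $u \in R \cup \{\emptyset\}$,
\begin{equation}\label{eqn:key_decomp_plan}
\OPT(G[R \cup V']) \;\le\; \OPT(v, R) \;+\; w_u \;+\; \OPT(G[(R \setminus \{u\}) \cup (V' \setminus \{v\})]).
\end{equation}
To establish \eqref{eqn:key_decomp_plan}, take any offline adaptive probing algorithm $\scr{A}$ attaining $\OPT(G[R \cup V'])$ and decompose the expected weight of its matching into three contributions: (i) edges incident to $v$, (ii) edges incident to $u$ but not to $v$, and (iii) edges contained in $(R \setminus \{u\}) \times (V' \setminus \{v\})$. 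Piece (ii) is bounded by $w_u$ because $u$ is matched at most once, and piece (iii) is bounded by $\OPT(G[(R \setminus \{u\}) \cup (V' \setminus \{v\})])$ since the ``projection'' of $\scr{A}$ that discards any probe involving $u$ or $v$ is a valid probing algorithm for the sub-instance. The delicate piece is (i): conditional on all randomness outside of $\partial(v)$, the probe-commit model forces $\scr{A}$'s probes into $\partial(v)$ to follow a fixed sequence $\bm{e}^* \in \scr{C}_v$ (the sequence terminates as soon as an active edge is observed, leaving no branching within the $\partial(v)$-subtree), and the conditional expected weight matched to $v$ equals $\val(\bm{e}^*) \le \max_{\bm{e} \in \scr{C}_v,\, \bm{e}\subseteq R\times\{v\}} \val(\bm{e}) = \OPT(v, R)$; averaging over the outer randomness preserves this inequality.

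Granted \eqref{eqn:key_decomp_plan}, apply it pointwise (for every realization of the history through step $t$) with $R = R_t$, $V' = V_t$, $v = v_t$, $u = U_t$; since $R_{t+1} = R_t \setminus \{U_t\}$ and $V_{t+1} = V_t \setminus \{v_t\}$, this yields $\Phi_t \le \OPT(v_t, R_t) + w_{U_t} + \Phi_{t+1}$. Taking expectations and using $\mb{E}[w_{U_t} \mid R_t] = \OPT(v_t, R_t) = Y_t$ gives $\mb{E}[\Phi_t] \le 2\,\mb{E}[Y_t] + \mb{E}[\Phi_{t+1}]$, and telescoping over $t = 1, \ldots, n$ with $\Phi_{n+1} = 0$ produces $\OPT(G) = \mb{E}[\Phi_1] \le 2 \sum_{t=1}^{n} \mb{E}[Y_t] = 2\,\mb{E}[w(\scr{M})]$, as claimed. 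The main obstacle will be making piece (i) of the decomposition rigorous, since $\scr{A}$ is allowed to interleave probes of $\partial(v)$ with probes of other online vertices in a way that depends adaptively on their outcomes; the key observation is that once one conditions on everything outside $\partial(v)$, the effective strategy at $v$ collapses to a non-adaptive sequence in $\scr{C}_v$, whose expected yield is at most $\OPT(v, R)$.
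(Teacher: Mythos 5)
Your proof is correct, but it takes a genuinely different route from the paper. The paper bounds $\OPT(G)$ by \ref{LP:DP} (Lemma \ref{lem:DP_relaxation}) and then runs a primal--dual charging argument: whenever the algorithm matches $v$ to $u$ with unmatched set $R$, it charges $2w_u$ to $\alpha_u$ and $2w_u/\OPT(v,R)$ to $\phi_{v,R}$, and Lemma \ref{lem:adv_feasibility} verifies that the expected charges form a feasible solution to \ref{LP:DP_dual}, so weak duality gives the factor $1/2$; the paper explicitly chooses this machinery because it is reused for the ROM analysis of Theorem \ref{thm:ROM_rankable}. You instead avoid LP duality entirely via a potential/telescoping argument with $\Phi_t = \OPT(G[R_t \cup V_t])$, whose crux is the deterministic decomposition inequality $\OPT(G[R\cup V']) \le \OPT(v,R) + w_u + \OPT(G[(R\setminus\{u\})\cup(V'\setminus\{v\})])$; its three pieces repackage exactly the facts the LP encodes (piece (ii) is constraint \eqref{eqn:dp_matching_constraint} in disguise, piece (i) is constraint \eqref{eqn:OPT_constraint}), plus a sub-instance monotonicity bound for the benchmark (piece (iii)) handled by a standard simulation of the discarded probes, which is valid because the constraints are downward-closed and edge states are independent. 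One small imprecision: conditioning on the randomness outside $\partial(v)$ does not literally force the benchmark's $\partial(v)$-probes into a fixed sequence, since in the Gupta et al.\ commitment model it may decline an active edge and branch on that outcome; but this is harmless, because either one first assumes w.l.o.g.\ that the optimal algorithm only probes edges it would accept (as the paper does in Lemma \ref{lem:non_adaptive_optimum}), or one observes that the conditional restriction to $\partial(v)$ is in any case a valid adaptive strategy for the star $G[\{v\}\cup R]$, whose value is at most $\OPT(v,R)$ by Theorem \ref{thm:efficient_star_dp}. Your argument, together with the exact gain $\OPT(v_t,R_t)$ of \textsc{DP-OPT} at each arrival, is more elementary and self-contained (and closer in spirit to the primal comparison of Brubach et al.\ that the paper mentions), while the paper's dual-fitting buys the setup needed to push the same algorithm to the random-order guarantee.
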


In order to analyze \Cref{alg:dynamical_program}, we first upper bound $\OPT(G)$ using an LP relaxation which extends the LP from \cite{Brubach2019}
to online probing constraints. For each $u \in U$ and $v \in V$, let $x_{u,v}$ be a decision variable corresponding to the probability that $\OPT$ probes the edge $(u,v)$.
\begin{align}\label{LP:DP}
	\tag{LP-DP}
	& \text{maximize}  & \sum_{u \in U} \sum_{v \in V} w_u \cdot p_{u, v}  \cdot x_{u, v}\\
	&\text{subject to} &\, \sum_{v \in V}  p_{u, v} \cdot x_{u, v} &\leq 1 &&\forall u \in U       \label{eqn:dp_matching_constraint} \\
	&  &\sum_{u \in R} w_u \cdot p_{u,v} \cdot x_{u,v} & \le  \OPT(v,R)  && \forall v \in V, \, R \subseteq U	\label{eqn:OPT_constraint}\\ 
	&  &x_{u, v} & \ge 0 && \forall u \in U, v \in V		\label{eqn:edge_upper_bound}
\end{align}
Denote $\dLPOPT(G)$ as the optimal value of this LP. 
Constraint \eqref{eqn:dp_matching_constraint} can be viewed as ensuring that
the expected number of matches made to $u \in U$ is at most $1$. Similarly,
\eqref{eqn:OPT_constraint} can be interpreted as ensuring
that expected stochastic reward of $v$, suggested by the solution $(x_{u,v})_{u \in U,v \in V}$,
is actually attainable by the adaptive benchmark. More precisely, given $R \subseteq U$
and $v \in V$, if one restricts the match of $v$ made by $\OPT$ to $R \times \{v\}$, then its expected value cannot exceed $\OPT(v,R)$.
This implies the following (see \cite{Brubach2019} for a detailed proof specific to patience values):
\begin{lemma} \label{lem:DP_relaxation}
	$\OPT(G) \le \dLPOPT(G)$ 
\end{lemma}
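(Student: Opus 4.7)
The plan is to exhibit an explicit feasible solution $(x_{u,v}^*)$ of \ref{LP:DP} whose objective value equals $\OPT(G)$, which immediately gives the lemma. I will set $x_{u,v}^* := \Pr[\OPT \text{ matches } (u,v)]/p_{u,v}$ whenever $p_{u,v} > 0$, and $x_{u,v}^* := 0$ otherwise, with the probability taken over the edge states of $G$ and any randomness used by $\OPT$.

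With this definition, the routine parts are immediate. The objective rewrites as
\[
\sum_{u \in U}\sum_{v \in V} w_u \cdot p_{u,v} \cdot x_{u,v}^* \;=\; \sum_{u \in U} w_u \cdot \Pr[u \text{ is matched by } \OPT] \;=\; \mathbb{E}[w(\OPT(G))] \;=\; \OPT(G).
\]
Constraint \eqref{eqn:dp_matching_constraint} holds because $\sum_v p_{u,v} x_{u,v}^* = \Pr[u \text{ is matched by } \OPT] \le 1$, and non-negativity is automatic.

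The main work is verifying constraint \eqref{eqn:OPT_constraint}. Fix $v \in V$ and $R \subseteq U$. The left-hand side equals $\sum_{u \in R} w_u \cdot \Pr[\OPT \text{ matches } (u,v)]$, i.e.\ the expected contribution of $v$'s match to $\OPT$ when we only credit matches whose offline endpoint lies in $R$. To bound this by $\OPT(v,R)$, I will construct a probing algorithm $\scr{A}$ on the star $G[\{v\}\cup R]$ achieving exactly this value. Internally, $\scr{A}$ simulates $\OPT$ on the whole graph $G$: it draws fresh independent Bernoulli samples for the states of every edge not in $R \times \{v\}$, while any probe the simulated $\OPT$ requests to an edge $(u,v)$ with $u \in R$ is actually executed on the subgraph, with the true outcome $\sta(u,v)$ routed back into the simulation. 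Whenever the simulated $\OPT$ commits to matching $(u,v)$ with $u \in R$, $\scr{A}$ commits to the same edge; in every other case $\scr{A}$ matches nothing.

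The delicate point, and the part I expect to be the main obstacle, is confirming that $\scr{A}$ is a legitimate probing algorithm on $G[\{v\}\cup R]$. Probe-commit is inherited directly from the simulated $\OPT$. For the probing constraint, the sequence of probes $\scr{A}$ actually performs at $v$ is obtained from $\OPT$'s probe sequence on $\partial(v)$ by discarding those probes whose offline endpoint is outside $R$. Since $\OPT$'s sequence lies in $\scr{C}_v$, and downward-closedness of $\scr{C}_v$ (closure under both substrings and permutations) implies closure under arbitrary subsequences, the restricted sequence also lies in $\scr{C}_v$ and therefore in the probing constraint induced on $G[\{v\}\cup R]$. By the coupling, the expected weight of $\scr{A}$'s output is exactly $\sum_{u \in R} w_u \Pr[\OPT \text{ matches } (u,v)]$, which is bounded above by $\OPT(v,R)$ by definition. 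This is constraint \eqref{eqn:OPT_constraint}, completing the feasibility check and hence the proof.
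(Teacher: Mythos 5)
Your proposal is correct and follows essentially the same route the paper intends for Lemma \ref{lem:DP_relaxation}: the paper itself only sketches the argument (interpreting the variables via $\OPT$'s behaviour and noting that $v$'s expected match value restricted to $R \times \{v\}$ cannot exceed $\OPT(v,R)$, deferring details to \cite{Brubach2019}), and your simulation/coupling argument together with the observation that substring-plus-permutation closure of $\scr{C}_v$ yields closure under subsequences is exactly the verification needed to extend it to general probing constraints. The only deviation is cosmetic: you set $x_{u,v}$ to be the match probability divided by $p_{u,v}$ rather than the probe probability, which conveniently sidesteps the usual w.l.o.g.\ step that $\OPT$ only probes edges it would accept.
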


In \cite{Brubach2019}, the performance of \Cref{alg:dynamical_program} is compared to a solution to \ref{LP:DP} to prove
the algorithm is $1/2$-competitive for patience values. It is easily verified that this argument extends to online probing constraints.
We instead prove \Cref{thm:adversarial} using a primal-dual charging argument based 
on the dual of \ref{LP:DP}, as this allows us to introduce the techniques needed for the analysis of
\Cref{alg:dynamical_program} in the ROM.

For each $u \in U$, define the variable $\alpha_u$. Moreover, for each $R \subseteq U$
and $v \in V$, define the variable $\phi_{v,R}$ (these latter variables correspond to constraint \eqref{eqn:OPT_constraint}):
\begin{align} \label{LP:DP_dual}
	\tag{LP-dual-DP}
	&\text{minimize}&   \sum_{u \in U} \alpha_u + \sum_{v \in V}\sum_{R \subseteq U}\OPT(v,R) \cdot \phi_{v,R}	\\
	&\text{subject to}&  \: p_{u, v} \cdot \alpha_u  + \sum_{\substack{R \subseteq U: \\ u \in R}} w_u \cdot p_{u,v} \cdot \phi_{v,R} &\geq  w_u \cdot p_{u, v} && \forall u \in U, v \in V\\
	& &   \alpha_u &\geq 0 && \forall u \in U\\
	& &     \phi_{v,R} & \ge 0 && \forall v \in V, R \subseteq U
\end{align}
Let $((\alpha_u)_{u \in U}, (\phi_{v,R})_{v \in V, R \subseteq U})$ be a dual solution which is initially identically $0$.
As \Cref{alg:dynamical_program} executes, we modify it in the following way. Fix $v \in V, u \in U$, and $R \subseteq U$, where $u \in R$.
If $R$ consists of the unmatched vertices when $v$ arrives,
then suppose that Algorithm \ref{alg:dynamical_program} matches $v$ to $u$ while making its probes to
a subset of the edges of $R \times \{v\}$. In this case, we \textbf{charge} $ w_{u}$ to $\alpha_{u}$ and
$w_{u}/ \OPT(v,R)$ to $\phi_{v,R}$.
Observe that each subset $R \subseteq U$ is charged at most once, as is each $u \in U$.
Moreover,
\begin{equation}\label{eqn:expected_competitive_ratio_dual_general_adv}
	\mb{E}[ w( \scr{M})] = \frac{1}{2} \cdot \left( \sum_{u \in U} \mb{E}[\alpha_{u}]  +  \sum_{v \in V} \sum_{R \subseteq U} \OPT(v,R) \cdot \mb{E}[ \phi_{v,R}] \right),
\end{equation}
where the expectation is over $(\sta(e))_{e \in E}$. Let us now set $\alpha^{*}_{u} := \mb{E}[\alpha_{u} ]$ and $\phi^{*}_{v,R} := \mb{E}[ \phi_{v,R}]$ for $u \in U, v \in V$ and $R \subseteq U$. We prove the following in \Cref{sec:deferred_proofs}:

\begin{lemma} \label{lem:adv_feasibility}
	$((\alpha^*_u)_{u \in U}, (\phi^*_{v,R})_{v \in V, R \subseteq U})$ 
	is a feasible solution  to \ref{LP:DP_dual}.
\end{lemma}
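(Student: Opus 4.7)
Non-negativity of $\alpha^*_u$ and $\phi^*_{v,R}$ is immediate since each is an expectation of a non-negative random variable (every charge is non-negative and nothing is ever decremented). The work lies entirely in verifying the main dual inequality
\[
p_{u,v} \cdot \alpha^*_u + \sum_{R \subseteq U: u \in R} w_u \cdot p_{u,v} \cdot \phi^*_{v,R} \ge w_u \cdot p_{u,v}
\]
for each fixed $u \in U$ and $v \in V$. We may assume $p_{u,v}>0$ (else the inequality is trivial) and divide through by $w_u \cdot p_{u,v}$, reducing the task to showing that a certain sum of probabilities is at least $1$.

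The plan is to first evaluate $\alpha^*_u$ and $\phi^*_{v,R}$ explicitly in terms of the stochastic dynamics of Algorithm \ref{alg:dynamical_program}. Let $M_u$ denote the indicator of the event that $u$ is matched by $\scr{M}$, let $R_v$ denote the (random) set of unmatched offline vertices at the moment $v$ arrives, and let $u_v$ denote the (possibly empty) vertex that $v$ matches to. Unwinding the charging rule yields
\[
\alpha^*_u = 2\, w_u \cdot \mb{P}[M_u = 1]
\qquad\text{and}\qquad
\phi^*_{v,R} = \frac{2}{\OPT(v,R)}\cdot \mb{E}\bigl[ w_{u_v} \cdot \bm{1}_{\{R_v = R,\, u_v \neq \emptyset\}}\bigr].
\]
The key technical point is the following: conditional on $R_v$, Algorithm \ref{alg:dynamical_program} probes along $\textsc{DP-OPT}(v, R_v)$, which by Theorem \ref{thm:efficient_star_dp} is an optimal (non-adaptive) probing strategy on $G[\{v\}\cup R_v]$. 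Hence
\[
\mb{E}\bigl[ w_{u_v} \cdot \bm{1}_{\{u_v \neq \emptyset\}} \,\big|\, R_v\bigr] = \OPT(v, R_v),
\]
and substituting gives the clean identity $\phi^*_{v,R} = 2 \cdot \mb{P}[R_v = R]$ (with the natural convention that $\phi^*_{v,R}=0$ when $\OPT(v,R)=0$, as then the numerator vanishes as well). Summing over $R \ni u$ collapses to $\sum_{R: u \in R} \phi^*_{v,R} = 2 \cdot \mb{P}[u \in R_v]$.

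It then remains to combine: the reduced inequality becomes $\mb{P}[M_u = 1] + \mb{P}[u \in R_v] \ge 1/2$ (or $\ge 1$ without the factors of $2$), which follows from a one-line observation. Let $M^{<v}_u$ be the event ``$u$ is matched strictly before $v$ arrives.'' Then $\{u \in R_v\}$ is exactly the complement of $M^{<v}_u$, while $M^{<v}_u \subseteq \{M_u = 1\}$, so $\mb{P}[M_u = 1] + \mb{P}[u \in R_v] \ge \mb{P}[M^{<v}_u] + (1 - \mb{P}[M^{<v}_u]) = 1$. The only conceptually delicate step is the invocation of the optimality of $\textsc{DP-OPT}$ on the induced subgraph $G[\{v\}\cup R_v]$, since this cancellation of $\OPT(v,R)$ against the numerator of $\phi^*_{v,R}$ is precisely what turns the charging scheme into a feasible dual; the remaining bookkeeping is routine.
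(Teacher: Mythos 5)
Your proposal is correct and follows essentially the same route as the paper's proof: the heart of both arguments is using the optimality of \textsc{DP-OPT} on $G[\{v\}\cup R_v]$ to cancel $\OPT(v,R)$ and collapse $\sum_{R\ni u}\phi^*_{v,R}$ to (a multiple of) $\mb{P}[u\in R_v]$, and then observing that $u\notin R_v$ forces $\alpha_u$ to have been charged because $u$ was matched earlier. The only cosmetic differences are that the paper establishes these facts pointwise by conditioning on $(\sta(e))_{e\in E\setminus\partial(v)}$ before taking expectations (and its appendix uses the factor-$1$ normalization of the charges), whereas you work directly with unconditional expectations and the event ``$u$ matched before $v$ arrives''; both yield the same feasibility inequality.
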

\Cref{thm:adversarial} then follows immediately from \eqref{eqn:expected_competitive_ratio_dual_general_adv}, \Cref{lem:DP_relaxation} and \Cref{lem:adv_feasibility}.

\subsection{Analyzing \Cref{alg:dynamical_program} in the ROM}
In general, the behaviour of $\OPT$ on $G[\{s\}\cup R]$ can change very much, even for minor changes to $R$.
For instance, if $R= U$, then $\textsc{OPT}$ may probe the 
edge $(u,s)$ first -- thus giving it highest priority -- whereas if
$u^* \in U$ is removed from $U$ (where $u^* \neq u$),
$\textsc{OPT}$ may not probe $(u,v)$ at all. 
See the below example for an explicit instance of this behaviour.
\begin{example} \label{example:bad_OPT_behaviour}
	Let $G = (U, V, E)$ be a bipartite graph with $U = \{u_1, u_2, u_3, u_4\}$, $V = \{v\}$ and $\ell_v =2$. 
	Set $p_{u_1, v} = 1/3$, $p_{u_2, v} = 1$, $p_{u_3, v} = 1/2$, $p_{u_4, v} = 2/3$.
	Fix $\eps > 0$, and let the weights of offline vertices be $w_{u_1} = 1 + \eps$, $w_{u_2} = 1 + \eps/2$, $w_{u_3} = w_{u_4} = 1$. 
	We assume that $\eps$ is sufficiently small -- concretely, $\eps \le 1/12$.
	If $R_{1}:= U$, then $\OPT$ probes $(u_{1},v)$ and then $(u_{2},v)$ in order. On the other hand,
	if $R_{2} = R_{1} \setminus \{u_{2}\}$, then $\OPT$ does \textit{not} probe $(u_{1},v)$. Specifically,
	$\OPT$ probes $(u_3,v)$ and then $(u_4,v)$.
\end{example}
Using \Cref{example:bad_OPT_behaviour}, it is easy to consider an
execution of \Cref{alg:dynamical_program} on $G$
where $v$ is matched to $u$, 
but if a new vertex $v^*$ is added to $G$ ahead of $v$,
$u$ is never matched. We thus refer to Algorithm \ref{alg:dynamical_program}
as being \textbf{non-monotonic}. This contrasts with the classical setting,
in which the deterministic greedy algorithm in the ROM setting does not exhibit this
behaviour, and thus is \textbf{monotonic}. The absence of monotonicity isn't problematic 
in the adversarial setting of Theorem \ref{thm:adversarial} because our charging argument
does not depend on the order of the online vertex arrivals. On the other hand, in the ROM we construct
a solution to \ref{LP:DP_dual} using a function $g:[0,1] \rightarrow [0,1]$ which depends on the arrival time of each online node.
In order for this solution to be feasible in expectation, we (provably) must restrict to collections of stochastic graphs in which executions of Algorithm \ref{alg:dynamical_program} are monotonic. This leads us to the definition of \textit{rankability}, which characterizes a large number of such graphs.

Given a vertex $v \in V$, and an ordering $\pi_v$ on $\partial(v)$,
if $R \subseteq U$, then define $\pi_{v}(R)$ to be the longest string constructible by iteratively appending the edges of $R \times \{v\}$ via $\pi_v$, subject to respecting constraint $\scr{C}^{R}_v$. More precisely, given $\bm{e}'$ after processing $e_1,\ldots ,e_i$
of $R \times \{v\}$ ordered according to $\pi_v$, if $(\bm{e}',e_{i+1}) \in \scr{C}_{v}^{R}$, then update $\bm{e}'$ by appending $e_{i+1}$ to its end, otherwise move to the next edge $e_{i+2}$ in the ordering $\pi_v$, assuming $i+2 \le |R|$. If $i+2 > |R|$,
return the current string $\bm{e}'$ as $\pi_{v}(R)$. We say that $v$ is \textbf{rankable}, provided there exists a choice of $\pi_{v}$ which depends \textit{solely} on $(p_e)_{e \in \partial(v)}$, $(w_{e})_{e \in \partial(v)}$ and $\scr{C}_v$, such that for \textit{every} $R \subseteq U$, the strings $\textsc{DP-OPT}(v,R)$ and $\pi_{v}(R)$
are equal. Crucially, if $v$ is rankable, then when vertex $v$ arrives while executing Algorithm \ref{alg:dynamical_program}, one can compute the ranking $\pi_{v}$ on $\partial(v)$ and probe the adjacent edges of $R \times \{v\}$  based on this order, subject to not violating the constraint $\scr{C}_{v}^{R}$. By following this probing strategy, the optimality of \textsc{DP-OPT} ensures that the expected weight of the match made to $v$ will be $\OPT(v,R)$.
We consider three (non-exhaustive) examples of rankability:
\begin{proposition} \label{prop:rankability}
	Let $G=(U,V,E)$ be a stochastic graph, and suppose that $v \in V$.
	If $v$ satisfies either of the following conditions, then $v$ is rankable:
	\begin{enumerate}
		\item $v$ has unit patience or unlimited patience; that is, $\ell_v \in \{1, |U|\}$. \label{eqn:rankability_examples_patience}
		\item $v$ has patience $\ell_v$, and for each $u_{1}, u_{2} \in U$, if $p_{u_{1},v} \le p_{u_{2},v}$ then $w_{u_{1}} \le w_{u_{2}}$.
		\item $G$ is unweighted, and $v$ has a budget $B_v$ with edge probing costs $(b_{u,v})_{u \in U}$, and for each $u_1, u_2 \in U$, if $p_{u_1,v} \le p_{u_{2},v}$ then $b_{u_1,v} \ge b_{u_{2},v}$.
	\end{enumerate}
\end{proposition}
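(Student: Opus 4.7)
The plan is to handle each case by exhibiting a concrete ranking $\pi_v$ on $\partial(v)$ that depends only on $(p_e)_{e \in \partial(v)}$, $(w_e)_{e \in \partial(v)}$ and $\scr{C}_v$, and verifying that for every $R \subseteq U$ the greedy string $\pi_v(R)$ matches the output of $\textsc{DP-OPT}$ on $G[\{v\}\cup R]$. The common tool across the three cases is an exchange argument: swapping a lower-ranked edge for a higher-ranked one in any candidate probing sequence never decreases $\val$ while preserving feasibility with respect to $\scr{C}_v^R$.

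For item 1, if $\ell_v = 1$ I would rank $\partial(v)$ by non-increasing $w_u \cdot p_{u,v}$; the DP recursion inside the proof of \Cref{thm:efficient_star_dp} collapses to $\max_j p_{e_j}w_{e_j}$, so $\textsc{DP-OPT}(v,R)$ is exactly the first edge of $R$ under this ranking. If $\ell_v = |U|$, I would rank by non-increasing $w_u$; since $\scr{C}_v^R$ imposes no constraint, $\pi_v(R)$ probes all of $R$ in non-increasing weight order, which is optimal by the rearrangement observation $\val(\bm{e}^r) \ge \val(\bm{e})$ from the proof of \Cref{thm:efficient_star_dp}, combined with the fact that appending any edge to a probing sequence only increases $\val$ when weights are non-negative.

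For item 2, the ranking by non-increasing $w_u$ again works, and the rearrangement observation reduces the task to showing that the top $\min(|R|,\ell_v)$ weights in $R$ form an optimal subset to probe. I will argue this by exchange. Suppose an optimal subset $S$ contains an edge $e$ at position $i$ of the decreasing-weight probing order but omits some $e^* \notin S$ with $w_{e^*}\ge w_e$ and $p_{e^*}\ge p_e$ (guaranteed by co-monotonicity when $e^*$ is higher-ranked than $e$). Replacing $e$ by $e^*$ in place (without re-sorting) changes $\val$ by
\begin{equation*}
P\bigl(p_{e^*} w_{e^*} - p_e w_e - (p_{e^*}-p_e)T\bigr),
\end{equation*}
where $P := \prod_{j<i}(1-p_{e_j})$ and $T$ is the expected first-active weight from the strict suffix. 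Because the probes at positions $i, i+1, \ldots$ are in non-increasing weight order, $T \le w_e$; substituting yields the lower bound $P\cdot p_{e^*}(w_{e^*}-w_e) \ge 0$. A subsequent re-sort by non-increasing weight only further improves $\val$, again by the rearrangement observation. Iterating converts any candidate subset into the top-$\ell_v$ weight subset without loss.

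For item 3, unweightedness gives $\val(\bm{e}) = 1 - \prod_{e \in \bm{e}}(1-p_e)$, which depends only on the underlying set, so probing order is irrelevant and only the chosen feasible subset matters. I would rank by non-increasing $p_{u,v}$ (equivalently non-decreasing $b_{u,v}$ by hypothesis). The exchange is then immediate: if an optimal feasible subset $S^*$ omits a higher-ranked edge $e^*$ but contains some lower-ranked $e'$, swapping $e'$ for $e^*$ weakly decreases total cost (since $b_{e^*}\le b_{e'}$) and weakly decreases $\prod(1-p_e)$ (since $1-p_{e^*}\le 1-p_{e'}$), so the swapped set is feasible with value at least as large; iterating shows $\pi_v(R)$ is optimal. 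The main technical obstacle is the exchange calculation in item 2, where the inserted edge generally changes the probing position of several edges in the resorted sequence; the trick is to decouple the argument into an in-place replacement (which requires the bound $T \le w_e$ from the weight-monotone ordering) followed by a rearrangement, rather than reasoning directly about the fully re-sorted sequence. Items 1 and 3 should then follow essentially from the structure of the DP recursion and the ordering-independence of $\val$, respectively.
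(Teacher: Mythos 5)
Your overall strategy---exhibit an explicit ranking and verify, via an in-place exchange plus the rearrangement observation $\val(\bm{e}^r)\ge\val(\bm{e})$ from the proof of \Cref{thm:efficient_star_dp}, that $\val(\pi_v(R))=\OPT(v,R)$ for every $R\subseteq U$---is the natural one, and since the paper states this proposition without including a proof there is nothing to compare against directly. Your case 1 arguments are correct; the in-place replacement computation in case 2, including the bound $T\le w_e$ that is exactly what lets you decouple the swap from the re-sort, is correct; and the set-based argument in case 3 is correct (indeed, under that hypothesis the ranking is simultaneously non-increasing in $p_{u,v}$ and non-decreasing in $b_{u,v}$, so $\pi_v(R)$ is just the longest feasible prefix of the ranking, which makes your final ``iterating'' step immediate: any feasible $S$ has cardinality at most the prefix length, and the prefix of that length dominates $S$ coordinatewise in probability). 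In case 2 you should also say explicitly that a candidate set of size smaller than $\min(|R|,\ell_v)$ may first be padded to full size, using the fact (which you state only in case 1) that appending edges never decreases $\val$.

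There is one genuine, though easily repaired, gap in case 2: ranking by non-increasing $w_u$ alone does not suffice, and the parenthetical claim that ``$e^*$ higher-ranked than $e$'' yields $p_{e^*}\ge p_e$ by co-monotonicity fails at weight ties. The hypothesis $p_{u_1,v}\le p_{u_2,v}\Rightarrow w_{u_1}\le w_{u_2}$ only gives $w_{u_1}>w_{u_2}\Rightarrow p_{u_1,v}>p_{u_2,v}$; equal weights may carry very different probabilities. Concretely, with $\ell_v=1$ and two edges of equal weight $w$ and probabilities $0.9$ and $0.1$, a weight-only ranking that breaks the tie the wrong way gives $\val(\pi_v(R))=0.1\,w<0.9\,w=\OPT(v,R)$, so your key inequality $p_{e^*}\ge p_e$ is simply unavailable. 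The fix is to break weight ties by non-increasing probability; then ``higher-ranked'' does imply $w_{e^*}\ge w_e$ and $p_{e^*}\ge p_e$, and your exchange goes through verbatim. A related caution applies to the aside in case 3 that ranking by non-increasing $p_{u,v}$ is ``equivalently'' ranking by non-decreasing $b_{u,v}$: equal probabilities force equal costs under the hypothesis (so the $p$-ranking is safe under arbitrary tie-breaking), but equal costs do not force equal probabilities, so the cost-ranking is not equivalent; since your argument only ever uses the $p$-ranking, this slip is harmless.
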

The rankable assumption is similar to assumptions referred to as laminar, agreeable and compatible in other applications.  We refer to the stochastic graph $G$ as \textbf{rankable}, provided all of its
online vertices are themselves rankable. We emphasize that distinct vertices of $V$
may each use their own separate rankings of their adjacent edges. Our next result proves a $1-1/e$
competitive ratio $G$ is rankable, as well as an asymptotic competitive ratio of $1-1/e$
provided the edge probabilities of $G$ tend to $0$ sufficiently fast (as $|G| \rightarrow \infty$).

\begin{theorem}\label{thm:ROM_rankable}
	Suppose Algorithm \ref{alg:dynamical_program} returns the matching $\scr{M}$ when
	executed on the vertex-weighted stochastic graph $G=(U,V,E)$ with random order vertex
	arrivals.
	\begin{enumerate}
		\item If $G$ is rankable, then 
		\[
		\mb{E}[ w(\scr{M}) ] \ge  \left(1 - \frac{1}{e} \right) \cdot \OPT(G).
		\]
		\item If $c_v :=\max_{\bm{e} \in \scr{C}_v}|\bm{e}|$ and $p_v := \max_{e \in \partial(v)}p_e$, then
		\[
		\mb{E}[ w(\scr{M}) ] \ge  \min_{v \in V}(1-p_v)^{c_v}\cdot \left(1 - \frac{1}{e} \right)\cdot \OPT(G).
		\]
		Thus, if $\max_{v \in V} c_v \cdot p_v \rightarrow 0$,
		then $\mb{E}[ w(\scr{M}) ] \ge (1- o(1))\left(1 - 1/e \right) \cdot \OPT(G)$.
	\end{enumerate}
\end{theorem}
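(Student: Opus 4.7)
The plan is to prove \Cref{thm:ROM_rankable} via a primal-dual charging argument against \ref{LP:DP_dual}, refining the argument of \Cref{thm:adversarial} by replacing its uniform factor-of-two slack with an arrival-time-dependent splitting governed by $g(y) := e^{y-1}$, which satisfies $\int_0^1 g(y)\,dy = 1 - 1/e$. For each realization of the random arrival order and edge-states, when \Cref{alg:dynamical_program} processes the $t$-th arrival $v$ at time $y_v := t/n$ with available set $R = R_v$ and matches $v$ to some $u$, I would assign
\[
\alpha_u \;\gets\; \frac{w_u(1 - g(y_v))}{1 - 1/e}, \qquad \phi_{v, R} \;\gets\; \frac{w_u \, g(y_v)}{(1 - 1/e)\,\OPT(v, R)}.
\]
Each $\alpha_u$ is charged at most once (once $u$ leaves $R$) and each pair $(v, R)$ at most once; summing gives $\sum_u \alpha_u + \sum_{v, R} \OPT(v, R) \phi_{v, R} = w(\scr{M})/(1 - 1/e)$ deterministically. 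Thus, by \Cref{lem:DP_relaxation}, it suffices to show that the expected dual solution $(\alpha^*, \phi^*) := (\mb{E}[\alpha_u], \mb{E}[\phi_{v, R}])$ is feasible for \ref{LP:DP_dual}; after cancelling $p_{u,v}$, this reduces to the edge-inequality $\mb{E}[\alpha_u] + w_u \sum_{R \ni u} \mb{E}[\phi_{v, R}] \geq w_u$ for all $u \in U, v \in V$.

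Using the identity $\sum_{u'} w_{u'} \cdot \mb{P}[v \text{ matches } u' \mid R_v = R] = \OPT(v, R)$, one computes $w_u \sum_{R \ni u} \mb{E}[\phi_{v, R}] = \frac{w_u}{1-1/e} \mb{E}[g(Y_v) \bm{1}_{[u \in R_v]}]$, where $Y_v$ is $v$'s arrival time; similarly $\mb{E}[\alpha_u] = \frac{w_u}{1-1/e} \mb{E}[(1-g(Y_u)) \bm{1}_{[u \text{ matched}]}]$, where $Y_u$ is the time at which $u$ is matched (set to $1$ if never). On the event $\{u \notin R_v\}$, $u$ was matched at some $Y_u < Y_v$ and hence $1 - g(Y_u) \geq 1 - g(Y_v)$; substituting this pointwise bound and exploiting $\bm{1}_{[u \in R_v]} + \bm{1}_{[u \notin R_v]} = 1$, the feasibility inequality reduces to lower-bounding the probability that $u$ remains available at $v$'s arrival by its value in a \emph{counterfactual} execution of \Cref{alg:dynamical_program} run on $G \setminus \{u\}$ (under the same random permutation and edge-states), and then averaging over the uniform arrival time of $v$ to produce the factor $\int_0^1 g(y)\,dy = 1 - 1/e$.

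The main technical obstacle is this counterfactual coupling, which requires that removing $u$ does not cause any other $u' \neq u$ to become matched \emph{earlier} in the counterfactual than in the actual execution. In general this monotonicity can fail because $\textsc{DP-OPT}$ may reshuffle its probing order drastically when $R$ changes (cf.\ \Cref{example:bad_OPT_behaviour}). Rankability is precisely the property that rules this out: for a rankable $v'$, the ranking $\pi_{v'}$ depends only on $v'$'s local data and not on $R$, so removing $u$ from $U$ affects $v'$'s sequence of probes only by excising the edge $(u, v')$ if it would have appeared. An inductive argument on arrival time then propagates the coupling across the two executions and closes part 1.

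For part 2, rankability fails, but the coupling breakdown at any arrival $v'$ propagates only if $v'$ actually matches someone in one of the two executions. Since $v'$ probes at most $c_{v'}$ edges, each active with probability at most $p_{v'}$, with probability at least $(1 - p_{v'})^{c_{v'}}$ none of its probes reveals an active edge and $v'$ is irrelevant to the coupling. Absorbing this per-vertex slack into the feasibility inequality degrades the final guarantee by a multiplicative factor of $\min_v (1 - p_v)^{c_v}$, which tends to $1$ whenever $\max_v c_v \cdot p_v \to 0$.
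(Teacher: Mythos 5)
Your overall frame coincides with the paper's: the charging of $w_u(1-g(Y_v))/F$ to $\alpha_u$ and $w_ug(Y_v)/(F\cdot\OPT(v,R))$ to $\phi_{v,R}$, the reduction to feasibility of the expected dual via \Cref{lem:DP_relaxation}, and the use of the optimality of \textsc{DP-OPT} to cancel $\OPT(v,R)$ are all exactly what the paper does. The gap is in the heart of the argument, dual feasibility for a fixed pair $(u,v)$. Your pointwise substitution ``$1-g(Y_u)\ge 1-g(Y_v)$ on $\{u\notin R_v\}$'' is too lossy: after it, the inequality you would need is $\mb{E}\bigl[(1-g(Y_v))\bm{1}_{[u\notin R_v]}\bigr]+\mb{E}\bigl[g(Y_v)\bm{1}_{[u\in R_v]}\bigr]\ge 1-1/e$, and this is false in general — if $u$ is matched by some other arrival at time $\approx 0$ (so $u\notin R_v$ almost surely), the left side is $\mb{E}[1-g(Y_v)]=1/e$. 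The working mechanism, as in Devanur et al.\ and in the paper, is to keep a threshold that is \emph{independent of $Y_v$ and of $v$'s edge states}: condition on all other randomness, define the critical time $\til{Y}_c$ at which $u$ gets matched in the execution on $G$ with the \emph{online} vertex $v$ removed, prove $\alpha_u\gtrsim w_u(1-g(\til{Y}_c))$ and $\sum_{R\ni u}\mb{E}[\phi_{v,R}\mid\cdot]\gtrsim \frac{1}{F}\int_0^{\til{Y}_c}g(z)\,dz$, and then use the identity $(1-g(\theta))+\int_0^{\theta}g(z)\,dz=1-1/e$ valid for every $\theta$; your plan to ``average over the arrival time of $v$ to produce $\int_0^1 g$'' only works if $u$ were always available, which is not the case.

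Relatedly, your counterfactual removes the \emph{offline} vertex $u$, and the coupling property you require of it — that removing $u$ does not cause any other $u'$ to be matched earlier — is false even for deterministic greedy without probing: remove $u$, and the first arrival that would have taken $u$ now takes some $u'$, which is therefore matched earlier. Rankability does not repair this; its role in the paper is a different monotonicity, namely that when the online vertex $v$ is removed the available sets satisfy $R^{\text{af}}_{v'}(G)\subseteq R^{\text{af}}_{v'}(G-v)$ for every later arrival $v'$ (this is precisely what \Cref{example:bad_OPT_behaviour} shows can fail without rankability). For part 2, your ``per-vertex slack at every arrival where the coupling could break'' does not obviously yield the factor $\min_{v}(1-p_v)^{c_v}$ — as phrased it suggests a product over many vertices; the paper needs only the single removed vertex $v$ to have all of its at most $c_v$ probes fail, an event of probability at least $(1-p_v)^{c_v}$, on which the two executions coincide exactly, and this loss is absorbed by scaling the charges with $F=(1-1/e)\min_{v}(1-p_v)^{c_v}$. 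A further small point: use continuous i.i.d.\ uniform arrival times $Y_v$ rather than $y_v=t/n$; the discrete version gives only $1-1/e-O(1/n)$, whereas part 1 claims the exact constant.
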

\begin{remark}
	The analysis of Algorithm \ref{alg:dynamical_program} is tight, as an execution of Algorithm \ref{alg:dynamical_program} corresponds to the seminal Karp et al. \cite{KarpVV90} \textsc{Ranking} algorithm for unweighted non-stochastic (i.e., $p_e  \in \{0,1\}$ for all $e \in E$) bipartite  matching. 
\end{remark}

\begin{corollary} \label{cor:ROM_unit_patience}
	Suppose $G=(U,V,E)$ is a vertex-weighted stochastic graph  
	with unit patience values. If $\scr{M}$
	is the matching returned by Algorithm \ref{alg:dynamical_program} when executing on $G$,
	then 
	\[
	\mb{E}[w(\scr{M})] \ge \left(1- \frac{1}{e}\right) \OPT(G),
	\]
	provided the vertices of $V$ arrive in random order.
\end{corollary}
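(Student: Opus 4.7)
The plan is to derive the corollary directly from \Cref{thm:ROM_rankable}(1) by verifying that a stochastic graph with unit patience values is rankable. All the real content sits in the two results already proved, so the proof reduces to a short invocation.

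First I would observe that by item (1) of \Cref{prop:rankability}, every $v \in V$ with $\ell_v = 1$ is rankable. In fact the ranking $\pi_v$ can be taken explicitly to be the ordering of $\partial(v)$ by non-increasing values of $p_{u,v} \cdot w_u$: under a unit patience constraint, at most one probe is allowed, so for any $R \subseteq U$ the optimal non-adaptive strategy on $G[\{v\} \cup R]$ is to probe the single edge $(u,v)$ with $u \in R$ that maximizes $p_{u,v} \cdot w_u$. This is precisely the first edge of $\pi_v$ whose offline endpoint lies in $R$, so $\pi_v(R) = \textsc{DP-OPT}(v,R)$ for every $R \subseteq U$. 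Note that $\pi_v$ depends only on $(p_{e})_{e \in \partial(v)}$, $(w_e)_{e \in \partial(v)}$ and $\scr{C}_v$, as required by the definition of rankability.

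Next I would conclude that $G$ is rankable (since every one of its online vertices is), and then apply \Cref{thm:ROM_rankable}(1) to obtain
\[
\mb{E}[w(\scr{M})] \ge \left(1 - \frac{1}{e}\right) \cdot \OPT(G),
\]
which is exactly the claimed bound.

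There is essentially no hard step here: the only thing to verify is the characterization of $\textsc{DP-OPT}(v,R)$ in the unit patience case, and this follows directly from the recursion \eqref{eqn:dynamical_program} in the proof of \Cref{thm:efficient_star_dp}, since when $\ell_v = 1$ the dynamic program terminates after a single probe. The potential pitfall is to make sure that $\pi_v$ is defined using the products $p_{u,v} \cdot w_u$ (as opposed to $w_u$ alone, which is what the proof of \Cref{thm:efficient_star_dp} uses for the sorting argument); this is what lets the unit patience instance satisfy the rankability definition for arbitrary $R$.
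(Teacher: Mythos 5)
Your proposal is correct and matches the paper's route exactly: the corollary is obtained by noting that unit patience vertices are rankable (Proposition \ref{prop:rankability}, item 1) — with the ranking by non-increasing $p_{u,v}\cdot w_u$, just as the paper states — and then invoking Theorem \ref{thm:ROM_rankable}(1). Your extra verification that $\pi_v(R)=\textsc{DP-OPT}(v,R)$ for every $R$ in the unit patience case is the right justification of the rankability claim and introduces no gap.
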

\begin{remark}
	The guarantee of Theorem \ref{thm:ROM_rankable} is proven against $\dLPOPT(G)$.
	In the special case when $G$ has unit patience, $\sLPOPT(G) \le \dLPOPT(G)$, where $\sLPOPT(G)$
	is the value of an optimal solution to \ref{LP:standard_benchmark} on $G$. Thus, \Cref{cor:ROM_unit_patience} implies
	that the $0.621 <1 -1/e$ in-approximation of Mehta and Panigraphi \cite{MehtaP12} against \ref{LP:standard_benchmark} does not apply to the ROM setting. 
\end{remark}

\subsection{Proving \Cref{thm:ROM_rankable}} \label{sec:charging_scheme}
The dual-fitting argument used to prove Theorem \ref{thm:ROM_rankable} has
an initial set-up which based on the argument
in Devanur et al. \cite{DJK2013}. First define $g: [0,1] \rightarrow [0,1]$ 
where $g(z):= \exp(z-1)$ for $z \in [0,1]$. We shall use $g$ to perform our charging. Moreover,
recall that given $v \in V$, we defined $c_{v}:= \max_{\bm{e} \in \scr{C}_{v}}|\bm{e}|$ and $p_{v}:=\max_{e \in \partial(v)}p_e$.
Using these definitions, we define our target competitive ratio $F=F(G)$, where
\begin{equation}\label{eqn:target_competitive_ratio}
	F(G):=
	\begin{cases} 
		1-\frac{1}{e} &\text{if $G$ is rankable,}\\
		\left(1-\frac{1}{e}\right)\cdot \min_{v \in V}(1-p_v)^{c_v} &\text{otherwise.}\\
	\end{cases}
\end{equation}
In order to prove Theorem \ref{thm:ROM_rankable}, we shall prove that Algorithm \ref{alg:dynamical_program} 
returns a matching of expected weight at least $F(G) \cdot \dLPOPT(G)$ when executing on the stochastic graph $G$
in the ROM.  Clearly, we may assume $F(G)> 0$, as otherwise there is nothing to prove,
so we shall make this assumption for the rest of the section. Note that $F(G) \le 1-1/e$
no matter the stochastic graph $G$.

For each $v \in V$, draw $Y_{v} \in [0,1]$ independently and uniformly at random. We assume that the vertices of $V$ are presented to Algorithm \ref{alg:dynamical_program} in a non-decreasing order, based on the values of $(Y_{v})_{v \in V}$. 
We now describe how the charging assignments are made while Algorithm \ref{alg:dynamical_program} executes
on $G$.  First, we initialize a dual solution $((\alpha_u)_{u \in U}, (\phi_{v,R})_{v \in V, R \subseteq U})$ where all the variables are set equal to $0$. Next, we take $v \in V, u \in U$, and $R \subseteq U$, where $u \in R$.
If $R$ consists of the unmatched vertices of $v$ when it arrives at time $Y_{v}$,
then suppose that Algorithm \ref{alg:dynamical_program} matches $v$ to $u$ while making its probes to
a subset of the edges of $R \times \{v\}$. In this case, we \textbf{charge} $w_{u} \cdot (1 - g(Y_{v}))/F$ to $\alpha_{u}$ and
$w_{u} \cdot g(Y_{v})/ (F \cdot \OPT(v,R))$ to $\phi_{v,R}$.
Observe that each subset $R \subseteq U$ is charged at most once, as is each $u \in U$.
Thus,
\begin{equation}\label{eqn:expected_competitive_ratio_dual_general}
	\mb{E}[ w( \scr{M})] = F \cdot \left( \sum_{u \in U} \mb{E}[\alpha_{u}]  +  \sum_{v \in V} \sum_{R \subseteq U} \OPT(v,R) \cdot \mb{E}[ \phi_{v,R}] \right),
\end{equation}
where the expectation is over the random variables $(Y_{v})_{v \in V}$
and $(\sta(e))_{e \in E}$. If we now set $\alpha^{*}_{u} := \mb{E}[\alpha_{u} ]$ and $\phi^{*}_{v,R} := \mb{E}[ \phi_{v,R}]$ for $u \in U, v \in V$ and $R \subseteq U$, then \eqref{eqn:expected_competitive_ratio_dual_general} implies the following lemma:

\begin{lemma} \label{lem:dual_variables_expected_value}
	Suppose $G=(U,V,E)$ is a stochastic graph for which
	Algorithm \ref{alg:dynamical_program} returns the matching $\scr{M}$
	when presented $V$ based on $(Y_v)_{v \in V}$ generated $u.a.r.$ from $[0,1]$.
	In this case, if the variables $((\alpha^*_u)_{u \in U}, (\phi^*_{v,R})_{v \in V, R \subseteq U})$ 
	are defined through the above charging scheme, then
	\[
	\mb{E}[ w( \scr{M})] = F \cdot \left( \sum_{u \in U} \alpha^{*}_{u}  +  \sum_{v \in V} \sum_{R \subseteq U} \OPT(v,R) \cdot \phi_{v,R}^{*} \right).
	\]
\end{lemma}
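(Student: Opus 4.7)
The plan is to verify the identity essentially by unpacking the bookkeeping inherent in the charging scheme and then applying linearity of expectation. To make this precise, I would first introduce an indicator variable for each charging event. For each $u \in U$, $v \in V$ and $R \subseteq U$ with $u \in R$, let $\mathcal{E}_{u,v,R}$ be the event that (i) at the moment $v$ arrives (at time $Y_v$), the unmatched offline vertices are precisely $R$, and (ii) Algorithm \ref{alg:dynamical_program} subsequently matches $v$ to $u$ via its probes to $R \times \{v\}$.

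Next I would check the basic disjointness properties of these events. Since each $v \in V$ arrives exactly once, and at most one set $R$ can be the unmatched set at that moment, the events $\{\mathcal{E}_{u,v,R}\}_{u \in R,\, R \subseteq U}$ are mutually exclusive for fixed $v$. Similarly, since each $u \in U$ can be matched at most once in $\scr{M}$, the events $\{\mathcal{E}_{u,v,R}\}_{v \in V,\, R \ni u}$ are mutually exclusive for fixed $u$. In particular the charging is well-defined (no variable is charged twice), and
\[
w(\scr{M}) \;=\; \sum_{u \in U} \sum_{v \in V} \sum_{R \subseteq U: u \in R} w_u \cdot \bm{1}_{\mathcal{E}_{u,v,R}}.
\]

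Now I would translate the charging rule into formulas. By construction,
\[
\alpha_u \;=\; \frac{1}{F} \sum_{v \in V} \sum_{R \subseteq U: u \in R} w_u \, (1 - g(Y_v)) \cdot \bm{1}_{\mathcal{E}_{u,v,R}},
\]
\[
\OPT(v,R) \cdot \phi_{v,R} \;=\; \frac{1}{F} \sum_{u \in R} w_u \cdot g(Y_v) \cdot \bm{1}_{\mathcal{E}_{u,v,R}}.
\]
Summing the first identity over $u \in U$ and the second over $v \in V$ and $R \subseteq U$, and using $(1 - g(Y_v)) + g(Y_v) = 1$, one obtains
\[
\sum_{u \in U} \alpha_u \;+\; \sum_{v \in V}\sum_{R \subseteq U} \OPT(v,R) \cdot \phi_{v,R} \;=\; \frac{w(\scr{M})}{F}.
\]
Taking expectations over both $(Y_v)_{v \in V}$ and $(\sta(e))_{e \in E}$, applying linearity, and substituting the definitions $\alpha_u^* = \mb{E}[\alpha_u]$ and $\phi^*_{v,R} = \mb{E}[\phi_{v,R}]$, gives exactly the claimed identity.

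The only real obstacle here is the bookkeeping: one must confirm that the events $\mathcal{E}_{u,v,R}$ really do partition the matching events (so neither $\alpha_u$ nor $\phi_{v,R}$ is charged more than once during any realization) and that the factor $1/F$ cancels cleanly when one combines the two charges. Once those disjointness properties are nailed down, the lemma is a direct consequence of the construction, with no probabilistic subtleties beyond linearity of expectation.
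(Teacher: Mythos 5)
Your proof is correct and follows essentially the same route as the paper: the paper treats the lemma as an immediate consequence of the charging scheme, noting that each $u \in U$ and each pair $(v,R)$ is charged at most once, so that the realized charges sum to $w(\scr{M})/F$ (since $(1-g(Y_v)) + g(Y_v) = 1$), and then taking expectations to pass to $\alpha^*_u$ and $\phi^*_{v,R}$. Your explicit indicator-variable bookkeeping just makes that observation formal.
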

We claim the following regarding $((\alpha^*_u)_{u \in U}, (\phi^*_{v,R})_{v \in V, R \subseteq U})$:

\begin{lemma} \label{lem:dual_feasibility_constraints_general}
	If the online nodes of $G=(U,V,E)$ are presented to Algorithm \ref{alg:dynamical_program}
	based on $(Y_v)_{v \in V}$ generated $u.a.r.$ from $[0,1]$,
	then the solution $((\alpha^*_u)_{u \in U}, (\phi^*_{v,R})_{v \in V, R \subseteq U})$ 
	is a feasible solution  to \ref{LP:DP_dual}.
	
\end{lemma}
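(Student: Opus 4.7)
The plan is to verify, for each pair $u \in U$ and $v \in V$ with $p_{u,v} > 0$ (otherwise the constraint is vacuous), that the dual inequality $\alpha_u^* + w_u \sum_{R \ni u} \phi_{v,R}^* \ge w_u$ holds. Unpacking the definitions from the charging scheme in \Cref{sec:charging_scheme} and using that $\textsc{DP-OPT}$ is optimal on $G[\{v\} \cup R(v)]$ (so the conditional expected weight of $v$'s match equals $\OPT(v, R(v))$, which telescopes the $1/\OPT(v,R)$ factors appearing in $\phi_{v,R}$), I obtain
\[ w_u \sum_{R \ni u} \mb{E}[\phi_{v,R}] = \frac{w_u}{F} \cdot \mb{E}\bigl[g(Y_v) \cdot \bm{1}_{[u \in R(v)]}\bigr], \qquad \mb{E}[\alpha_u] = \frac{w_u}{F} \cdot \mb{E}\bigl[(1 - g(Y_{v^*})) \cdot \bm{1}_{[u \text{ matched}]}\bigr], \]
where $v^*$ denotes the vertex to which $u$ is matched (the indicator is $0$ if $u$ is unmatched). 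The feasibility constraint thus reduces to showing
\[ \Delta_{u,v} := \mb{E}\bigl[(1 - g(Y_{v^*}))\,\bm{1}_{[u \text{ matched}]}\bigr] + \mb{E}\bigl[g(Y_v)\,\bm{1}_{[u \in R(v)]}\bigr] \ge F. \]

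To bound $\Delta_{u,v}$, I would condition on $(Y_{v'})_{v' \neq v}$ and on $(\sta(e))_{e \notin \partial(v)}$, which renders the execution of Algorithm \ref{alg:dynamical_program} deterministic up through $v$'s arrival, leaving only $Y_v$ and $(\sta(e))_{e \in \partial(v)}$ random. I then compare the actual execution $\mathsf{A}$ with the counterfactual $\mathsf{B}$ in which $v$ is omitted entirely, and define $y^* \in [0,1]$ to be the arrival time of $u$'s match in $\mathsf{B}$, with $y^* = 1$ if $u$ is unmatched in $\mathsf{B}$. Since $\mathsf{A}$ and $\mathsf{B}$ agree up to time $\min(Y_v, y^*)$, the following case split applies: if $Y_v \ge y^*$, then $u$ matches in $\mathsf{A}$ to the same partner at time $y^*$ as in $\mathsf{B}$, contributing $1 - g(y^*)$ to the first expectation of $\Delta_{u,v}$; if $Y_v < y^*$, then $u$ is unmatched in $\mathsf{A}$ at time $Y_v$, so $u \in R(v)$ and $g(Y_v)$ is contributed to the second expectation.

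What remains is to lower-bound the $\alpha$-contribution in the case $Y_v < y^*$. In the rankable case, each online vertex probes its edges in a fixed ranking $\pi_{v'}$, so the algorithm is sufficiently greedy for the classical \textsc{Ranking} cascade argument of Devanur--Jain--Kleinberg to extend: inserting $v$ can only shift the match of $u$ earlier, so $u$ remains matched in $\mathsf{A}$ with $Y_{v^*} \le y^*$, contributing at least $1 - g(y^*)$. In the general (non-rankable) setting, I would instead use the universal substitute that, with probability at least $(1-p_v)^{c_v}$, every edge probed by $v$ is inactive, so $v$ matches no one and $\mathsf{A}$ coincides with $\mathsf{B}$ for every arrival after $v$; hence $u$ matches in $\mathsf{A}$ at time $y^*$ as well. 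Letting $\tau$ equal $1$ in the rankable case and $(1-p_v)^{c_v}$ otherwise, and integrating $Y_v \sim U[0,1]$ with $g(y) = e^{y-1}$,
\[ \Delta_{u,v} \ge \int_0^{y^*}\!\! \bigl[g(y) + \tau(1 - g(y^*))\bigr]\, dy + \int_{y^*}^{1} (1 - g(y^*))\, dy = 1 - \tfrac{1}{e} - y^*(1-\tau)(1 - e^{y^* - 1}), \]
which, since $1 - e^{y^*-1} \le 1 - 1/e$ on $[0,1]$, is at least $\tau(1 - 1/e)$; this equals $F(G)$ in the rankable case and is at least $F(G) = (1 - 1/e)\min_{v' \in V}(1-p_{v'})^{c_{v'}}$ in the general case. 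The main obstacle I anticipate is justifying the rankable-case monotonicity lemma in the presence of arbitrary downward-closed probing constraints $\scr{C}_{v'}$: even though each $v'$ probes edges in rank order, $\scr{C}_{v'}$ could in principle make the swap chain induced by $v$'s presence alter which edges get probed later in $\mathsf{A}$ versus $\mathsf{B}$, so a delicate coupling argument is required to guarantee that $u$ is never displaced from its match along any such cascade.
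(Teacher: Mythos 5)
Your proposal is correct and follows essentially the same route as the paper: you remove $v_0$ to define the critical time $\til{Y}_c$ (your $y^*$), use the optimality of \textsc{DP-OPT} to cancel the $\OPT(v_0,R)$ factors in the $\phi$-terms, lower bound the $\alpha$-contribution via the rankable monotonicity coupling (resp.\ the probability-$(1-p_{v_0})^{c_{v_0}}$ event that all of $v_0$'s probes are inactive, making the two executions coincide), and integrate $g(z)=e^{z-1}$ over $Y_{v_0}$, exactly the content of Propositions \ref{prop:offline_dual_variable_lower_bound_general} and \ref{prop:online_dual_variable_lower_bound_general} combined as in the paper's final computation. The monotonicity step you flag as the main obstacle is precisely the containment $R^{\text{af}}_{v}(G)\subseteq R^{\text{af}}_{v}(\til{G})$ that the paper itself asserts with only a brief inductive coupling claim, so your treatment is at the same level of detail as the paper's own proof.
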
  
Since \ref{LP:DP} is a relaxation of the adaptive benchmark,
Theorem \ref{thm:ROM_rankable} follows from Lemmas \ref{lem:dual_variables_expected_value}
and \ref{lem:dual_feasibility_constraints_general} in conjunction with weak duality.

\subsubsection{Proving Dual Feasibility: Lemma \ref{lem:dual_feasibility_constraints_general}}

Let us suppose that the 
variables $((\alpha_u)_{u \in U}, (\phi_{v,R})_{v \in V, R \subseteq U})$ 
are defined as in the charging scheme of Section \ref{sec:charging_scheme}.
In order to prove Lemma \ref{lem:dual_feasibility_constraints_general},
we must show that for each fixed $u_0 \in U$ and $v_0 \in V$, we have that
\begin{equation}\label{eqn:dual_feasibility_fixed_vertices}
	\mb{E}[ p_{u_0,v_0} \cdot \alpha_{u_0} +   w_{u_0} \cdot p_{u_0,v_0} \, \sum_{\substack{R \subseteq U: \\ u_0 \in R}} \phi_{v_0,R}] \ge w_{u_0} \cdot p_{u_0,v_0}.
\end{equation}
Our strategy for proving \eqref{eqn:dual_feasibility_fixed_vertices}
first involves the same approach as used in Devanur et al. \cite{DJK2013}.
Specifically, we define the stochastic graph $\til{G}:=(U, \til{V}, \til{E})$,
where $\til{V}:= V \setminus \{v_0\}$ and $\til{G}:=G[U \cup \til{V}]$. We wish to compare the execution of the algorithm on the instance $\til{G}$ to its execution on the instance $G$. It will be convenient to couple the randomness
between these two executions by making the following assumptions:
\begin{enumerate} \label{eqn:greedy_algorithm_coupling_general}
	\item For each $e \in \til{E}$, $e$ is active in $\til{G}$ if and only if it is active in $G$.
	\item The same random variables, $(Y_{v})_{v \in \til{V}}$, are used in both executions.
\end{enumerate}
If we now focus on the execution of $\til{G}$, then define the random variable
$\til{Y}_c$ where $\til{Y}_c:=Y_{v_c}$ 
if $u_0$ is matched to some $v_c \in \til{V}$, 
and $\til{Y}_c:=1$ if $u_0$ remains unmatched after the execution on $\til{G}$.
We refer to the random variable $\til{Y}_c$ as the \textbf{critical time} of
vertex $u_0$ with respect to $v_0$. We claim the following lower bounds on $\alpha_{u_0}$ in terms of the critical time $\til{Y}_{c}$. 
\begin{proposition}~ \label{prop:offline_dual_variable_lower_bound_general}
	\begin{itemize}
		\item If $G$ is rankable, then $\alpha_{u_0} \ge \left(1- \frac{1}{e}\right)^{-1} w_{u_0}(1 - g(\til{Y}_c))$.
		\item Otherwise, $\mb{E}[\alpha_{u_0} \, | \, (Y_{v})_{v \in V}, (\sta(e))_{e \in \til{E}}] \ge \left(1- \frac{1}{e}\right)^{-1} w_{u_0}(1 - g(\til{Y}_c))$.
	\end{itemize}
	
\end{proposition}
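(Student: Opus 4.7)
The plan is to view the claim as a monotonicity property of $u_0$'s match time in the two coupled executions, as is standard in primal-dual analyses of RANKING-type algorithms. By the charging rule in \Cref{sec:charging_scheme}, $\alpha_{u_0} = w_{u_0}(1 - g(Y_{v^{\star}}))/F$ if $u_0$ is matched in $G$ to some online vertex $v^{\star}$, and $\alpha_{u_0} = 0$ otherwise. Since $g$ is non-decreasing, the rankable case (where $F = 1 - 1/e$) reduces to showing that, pointwise in the coupled edge states, $u_0$ is matched in $G$ to some $v^{\star}$ with $Y_{v^{\star}} \le \til{Y}_{c}$ whenever $u_0$ is matched in $\til{G}$; the case where $u_0$ is unmatched in $\til{G}$ is immediate since then $\til{Y}_c = 1$ and $1 - g(1) = 0$. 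For the non-rankable case, the analogous reduction gives
\[
\mb{E}[\, 1 - g(Y_{v^{\star}}) \mid (Y_v)_{v \in V},\, (\sta(e))_{e \in \til{E}}\,] \;\ge\; \min_{v \in V}(1-p_v)^{c_v}\,(1 - g(\til{Y}_{c})),
\]
where the remaining randomness is over $(\sta(e))_{e \in \partial(v_0)}$.

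In the rankable case, I would establish a monotonicity lemma by induction on the order of arrivals. Let $A_v$ and $\til{A}_v$ denote the sets of unmatched offline vertices just before $v \in V \setminus \{v_0\}$ arrives under the executions on $G$ and $\til{G}$ respectively. The inductive invariant is that either $A_v = \til{A}_v$ or $\til{A}_v = A_v \cup \{u^{\star}\}$ for a single \emph{displaced} vertex $u^{\star}$ (initially $u^{\star}$ is the match of $v_0$ in $G$). Rankability supplies a fixed ordering $\pi_v$ on $\partial(v)$, and when $v$ arrives with available sets differing only by $\{u^{\star}\}$, the probe sequences $\pi_v(\til{A}_v)$ and $\pi_v(A_v)$ coincide on every edge incident to $A_v$ up through the first probed active edge. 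A case analysis on whether $(u^{\star},v)$ lies in $\pi_v(\til{A}_v)$ and whether it is active under the coupling \eqref{eqn:greedy_algorithm_coupling_general} then shows that if inactive or not probed, $v$ makes the same match in both executions (preserving the invariant with the same $u^{\star}$); if active, $v$ matches $u^{\star}$ in $\til{G}$ and the displacement is either absorbed or transferred onto $v$'s match in $G$. Applying this lemma to $v_c$, the vertex that matches $u_0$ in $\til{G}$, produces an online vertex arriving at or before time $\til{Y}_c$ that matches $u_0$ in $G$.

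For the non-rankable case, the greedy \textsc{DP-OPT} probing strategy can change substantially under even minor changes to the available set $R$, so the rankable monotonicity breaks. The plan is to couple the execution of \Cref{alg:dynamical_program} on $G$ with a \emph{rankable surrogate} that processes each arriving $v$ in a fixed order derived from \textsc{DP-OPT} applied to the full star $G[\{v\}\cup U]$, and to observe that the two executions agree at $v$ whenever every edge probed by $v$ in the actual execution is inactive -- an event of probability at least $(1 - p_v)^{c_v}$ since $v$ makes at most $c_v$ probes each with activation probability at most $p_v$. Conditioning on this ``clean'' event at each arrival, invoking the rankable monotonicity on the surrogate, and integrating out $(\sta(e))_{e\in\partial(v_0)}$ produces the advertised factor $\min_{v \in V}(1 - p_v)^{c_v}$.

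The main obstacle is the interaction of the downward-closed probing constraint $\scr{C}_v$ with the coupling. Inserting or removing the edge $(u^{\star},v)$ from the current probe set can cause later edges in $\pi_v$ to be newly included or excluded, so $\pi_v(\til{A}_v)$ and $\pi_v(A_v)$ need not be nested substrings. Downward-closedness guarantees only the one-sided statement that every edge probed in $\til{G}$ is also probed in $G$ (with $(u^{\star},v)$ excised), so the hard case is a ``ghost'' probe in $G$ past the displacement that matches a competing offline vertex; handling this requires refining the displaced-vertex invariant, allowing $u^{\star}$ to shift to the new match while preserving the claim that $u_0$'s eventual match time in $G$ remains bounded by $\til{Y}_c$ via an alternating-path style argument on the symmetric difference of the two matchings.
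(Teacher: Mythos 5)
Your rankable bullet follows essentially the same route as the paper: couple the two executions via \eqref{eqn:greedy_algorithm_coupling_general}, argue a monotonicity statement about the available offline sets, apply it to the vertex $v_c$ that takes $u_0$ in the execution on $\til{G}$, and use monotonicity of $g$. The paper phrases the invariant as the containment $R^{\text{af}}_{v}(G) \subseteq R^{\text{af}}_{v}(\til{G})$ for all $v \in \til{V}$, proved by induction using rankability (a single fixed order $\pi_v$ valid for \emph{every} available set $R$), and is admittedly terse about the interaction with $\scr{C}_v$; your displaced-vertex invariant is a finer version of the same idea, and the obstacle you flag in your last paragraph is real but is exactly what rankability is designed to neutralize. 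So that half is aligned, if not fully closed.

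The non-rankable bullet, however, has a genuine gap. After conditioning on $(Y_v)_{v \in V}$ and $(\sta(e))_{e \in \til{E}}$, the \emph{only} remaining randomness is $(\sta(e))_{e \in \partial(v_0)}$: the probes and matches of every arrival other than $v_0$ are already determined by the conditioning, so there is no ``clean event at each arrival'' left to condition on, and even unconditionally an intersection of such events over all arrivals would cost $\prod_{v}(1-p_v)^{c_v}$, not the advertised $\min_{v}(1-p_v)^{c_v}$ (the $\min$ can only come from a \emph{single} vertex's probes). Moreover, conditioning on all probed edges being inactive everywhere would distort the execution itself (no matches occur), so the surrogate comparison cannot deliver ``$u_0$ is matched in $G$ by time $\til{Y}_c$.'' The missing observation, which is the paper's actual argument, is that only $v_0$'s probes matter: if $\scr{M}(v_0)=\emptyset$, then $v_0$ removes no offline vertex and the executions on $G$ and $\til{G}$ coincide exactly, so pointwise $\alpha_{u_0} \ge \frac{w_{u_0}}{F}\,(1-g(\til{Y}_c))\cdot \bm{1}_{[\scr{M}(v_0)=\emptyset]}$; since $v_0$ makes at most $c_{v_0}$ probes, each active with probability at most $p_{v_0}$ independently of the conditioning, $\mb{P}[\scr{M}(v_0)=\emptyset \mid (Y_v)_{v\in V},(\sta(e))_{e\in\til{E}}] \ge (1-p_{v_0})^{c_{v_0}}$, and dividing by $F=(1-1/e)\min_{v}(1-p_v)^{c_v} \le (1-1/e)(1-p_{v_0})^{c_{v_0}}$ gives the stated conditional bound. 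No rankable surrogate is needed, and the full-star \textsc{DP-OPT} ordering you propose has no useful relation to the actual execution precisely because $G$ is not rankable.
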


\begin{remark}
	Note that the proof of Proposition \ref{prop:offline_dual_variable_lower_bound_general} is the only part of the proof of Theorem \ref{thm:ROM_rankable} which depends on whether or not $G$ is rankable.
\end{remark}
\begin{proof}[Proof of Proposition \ref{prop:offline_dual_variable_lower_bound_general}]
	For each $v \in V$, denote $R^{\text{af}}_{v}(G)$ as the unmatched (remaining) vertices of
	$U$ right after $v$ is processed (attempts its probes) in the execution on $G$. We emphasize that if a probe of $v$
	yields an active edge, thus matching $v$, then this match is excluded from $R^{\text{af}}_{v}(G)$.
	Similarly, define $R^{\text{af}}_{v}(\til{G})$ in the same way for the execution on $\til{G}$
	(where $v$ is now restricted to $\til{V}$).
	
	We first consider the case when $G$ is rankable, and so $F(G) =1-1/e$. Observe that since the constraints $(\scr{C}_v)_{v \in V}$ are substring-closed,
	we can use the coupling between the two executions to inductively prove that
	\begin{equation}\label{eqn:monotonicity_general}
		R^{\text{af}}_{v}(G) \subseteq R^{\text{af}}_{v}(\til{G}),
	\end{equation}
	for each $v \in \til{V}$ \footnote{Example \ref{example:bad_OPT_behaviour} shows why \eqref{eqn:monotonicity_general} will not hold if $G$ is not rankable.}. Now, since $g(1)=1$ (by assumption), there is nothing to prove if
	$\til{Y}_{c}=1$. Thus, we may assume that $\til{Y}_c < 1$, and as a consequence,
	that there exists some vertex $v_{c} \in V$ which matches to $u_0$ at time $\til{Y}_c$
	in the execution on $\til{G}$.
	
	On the other hand, by assumption we know that $u_0 \notin R^{\text{af}}_{v_c}(\til{G})$
	and thus by \eqref{eqn:monotonicity_general}, that $u_0 \notin R^{\text{af}}_{v_c}(G)$.
	As such, there exists some $v' \in V$ which probes $(u_0,v')$ and succeeds
	in matching to $u_0$ at time $Y_{v'} \le \til{Y}_c$. Thus,
	since $g$ is monotone,
	\[
	\alpha_{u_0}  \ge \left(1- \frac{1}{e}\right)^{-1} w_{u_0} \cdot (1 - g(Y_{v'})) \cdot \bm{1}_{[ \til{Y}_c < 1]} \ge 
	\left(1- \frac{1}{e}\right)^{-1} w_{u_0} \cdot (1 - g(\til{Y}_c)),
	\]
	and so the rankable case is complete.
	
	We now consider the case when $G$ is not rankable. Suppose that
	$\scr{M}(v_0)$ is the vertex matched to $v_0$ when the algorithm executes on 
	$G$, where $\scr{M}(v_0):=\emptyset$ provided no match is made. Observe
	then that if no match is made to $v_0$ in this execution, then
	the execution proceeds identically to the execution on $\til{G}$.
	As a result, we get the following relation:
	$
	\alpha_{u_0} \ge \frac{w_{u_0}}{F} (1 - g(\til{Y}_c)) \cdot \bm{1}_{[\scr{M}(v_0)=\emptyset]}.
	$
	Now, let us condition on $(\sta(e))_{e \in \til{E}}$ and $(Y_v)_{v \in V}$,
	and recall the definitions of $p_{v_0}:=\max_{e \in \partial(v_0)}p_e$ and $c_{v_0}:=\max_{\bm{e} \in \scr{C}_{v_0}} |\bm{e}|$. 
	Observe that if every probe involving an edge of $\partial(v_0)$ is inactive,
	then $\scr{M}(v_0)=\emptyset$. On the other hand, each probe independently fails with probability
	at least $(1-p_{v_0})$, and there are at most $c_{v_0}$ probes made to $\partial(v_0)$.
	Thus,
	\[
	\mb{P}[\scr{M}(v_0)=\emptyset \, | \,  (\sta(e))_{e \in \til{E}} , (Y_v)_{v \in V}] \ge (1-p_{v_0})^{c_{v_0}}
	\]
	Now, since $F(G)= (1-1/e) \cdot \min_{v \in V} (1-p_v)^{c_v}$, we get that
	\[
	\mb{E}[\alpha_{u_0} \, | \, (Y_v)_{v \in V} , (\sta(e))_{e \in \til{E}} ] \ge \left(1- \frac{1}{e}\right)^{-1} w_{u_0} (1 - g(\til{Y}_c)),
	\]
	and so the proof is complete.
\end{proof}

By taking the appropriate conditional expectation, we can also 
lower bound the random variables $(\phi_{v_{0},R})_{\substack{R \subseteq U: \\ u_{0} \in R} }$.
\begin{proposition} \label{prop:online_dual_variable_lower_bound_general}
	$\sum_{\substack{R \subseteq U: \\ u_{0} \in R}} \mb{E}[ \phi_{v_{0},R} \, | \, (Y_{v})_{v \in \til{V}}, (\sta(e))_{e \in \til{E}}] \ge   \frac{1}{F}\int_{0}^{\til{Y}_c} g(z) \, dz.$
\end{proposition}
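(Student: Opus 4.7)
The plan is to exploit the coupling between the executions of \Cref{alg:dynamical_program} on $G$ and $\til{G}$ so that, conditional on $(Y_v)_{v \in \til{V}}$ and $(\sta(e))_{e \in \til{E}}$, the remaining set of $U$ that $v_0$ sees in $G$'s execution is pinned down by the $v_0$-free execution on $\til{G}$. Concretely, I would define $\til{R}(z) \subseteq U$ to be the unmatched subset of offline vertices in $\til{G}$'s execution immediately before time $z$. Since $v_0$ takes any action only at its arrival time $Y_{v_0}$, every online vertex that precedes $v_0$ in $G$'s execution is processed identically in both $G$ and $\til{G}$ (same order, same edge states by the coupling, same probing constraints). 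Hence the set of unmatched offline vertices at the moment $v_0$ arrives in $G$'s execution is exactly $R(Y_{v_0}) := \til{R}(Y_{v_0})$. By definition of the critical time, $u_0 \in \til{R}(z)$ for every $z \in [0, \til{Y}_c)$, and this is the only monotonicity fact I need.

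Next I would unpack the charging scheme restricted to $v_0$. For fixed $v_0$, the only variable $\phi_{v_0, R}$ that can receive any charge is the one indexed by the actual remaining set $R = R(Y_{v_0})$, and it receives a charge of $w_{u} \cdot g(Y_{v_0})/\bigl(F \cdot \OPT(v_0, R(Y_{v_0}))\bigr)$ precisely when $v_0$ matches to some $u$. Writing $\scr{M}(v_0)$ for the (possibly empty) match produced by $\textsc{DP-OPT}(v_0, R(Y_{v_0}))$ and setting $w_{\scr{M}(v_0)} = 0$ if no match occurs, I obtain
\[
\sum_{\substack{R \subseteq U: \\ u_0 \in R}} \phi_{v_0, R} = \bm{1}_{[u_0 \in R(Y_{v_0})]}\cdot \frac{w_{\scr{M}(v_0)} \cdot g(Y_{v_0})}{F \cdot \OPT(v_0, R(Y_{v_0}))}.
\]

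The key step is then to take the conditional expectation over the edge states $(\sta(e))_{e \in \partial(v_0)}$ (independent of the conditioning and of $Y_{v_0}$) and to invoke \Cref{thm:efficient_star_dp}, which asserts that $\textsc{DP-OPT}$ attains \emph{exactly} the optimal expected match weight $\OPT(v_0, R(Y_{v_0}))$ on the induced instance $G[\{v_0\} \cup R(Y_{v_0})]$. This causes the $\OPT(v_0, R(Y_{v_0}))$ factor in the denominator to cancel with the conditional expected match weight in the numerator, leaving the clean expression $\bm{1}_{[u_0 \in R(Y_{v_0})]} \cdot g(Y_{v_0})/F$. Finally, averaging over $Y_{v_0} \sim \mathrm{Unif}[0,1]$ (which is independent of the conditioning) and using $u_0 \in \til{R}(z)$ for all $z \in [0, \til{Y}_c)$ yields
\[
\sum_{\substack{R \subseteq U: \\ u_0 \in R}}\mb{E}[\phi_{v_0,R} \mid (Y_v)_{v \in \til{V}}, (\sta(e))_{e \in \til{E}}] = \frac{1}{F}\int_0^1 \bm{1}_{[u_0 \in \til{R}(z)]}\,g(z)\,dz \ge \frac{1}{F}\int_0^{\til{Y}_c} g(z)\,dz.
\]

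The delicate point of the argument --- and what motivates the unusual appearance of $\OPT(v_0, R)$ in the denominator of the charging rule in the first place --- is the exact cancellation made possible by \Cref{thm:efficient_star_dp}; without optimality of $\textsc{DP-OPT}$, one would only obtain an inequality going the wrong way. I expect this cancellation, together with the observation that the identification $R(Y_{v_0}) = \til{R}(Y_{v_0})$ requires \emph{no} rankability hypothesis (since it concerns only the history strictly before $v_0$'s arrival), to be the main conceptual contents of the proof. Everything else reduces to routine measure-theoretic bookkeeping.
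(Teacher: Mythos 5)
Your proposal is correct and follows essentially the same argument as the paper: the charging formula for $\phi_{v_0,R}$, the cancellation of $\OPT(v_0,R_{v_0})$ via the optimality of \textsc{DP-OPT}, the coupling guaranteeing $u_0$ is still available when $v_0$ arrives before the critical time, and the final integration over $Y_{v_0}$. The only cosmetic difference is that you phrase the availability step as the exact identity $R_{v_0}=\til{R}(Y_{v_0})$ rather than the paper's inequality $\bm{1}_{[Y_{v_0}<\til{Y}_c]}\le\sum_{R\ni u_0}\bm{1}_{[R_{v_0}=R]}$, which is the same reasoning.
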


\begin{proof}[Proof of Proposition \ref{prop:online_dual_variable_lower_bound_general}]
	We first define $R_{v_0}$ as
	the unmatched vertices of $U$ when $v_0$ arrives (this is a random subset of $U$).
	We also once again use $\scr{M}$ to denote the matching returned by Algorithm \ref{alg:dynamical_program}
	when executing on $G$. If we now take a \textit{fixed} subset $R \subseteq U$, then
	the charging assignment to $\phi_{v_0,R}$
	ensures that
	\[
	\phi_{v_0,R} = w( \scr{M}(v_0)) \cdot \frac{g(Y_{v_0})}{F \cdot \OPT(v_0,R)} \cdot \bm{1}_{[ R_{v_0} = R]},
	\]
	where $w( \scr{M}(v_0))$ corresponds to the weight of
	the vertex matched to $v_0$ (which is zero if $v_0$ remains unmatched after the execution on $G$).
	In order to make use of this relation,
	let us first condition on the values of $(Y_{v})_{v \in V}$, as well as the 
	states of the edges of $\til{E}$; that is, $(\sta(e))_{e \in \til{E}}$. Observe
	that once we condition on this information, we can determine $g(Y_{v_0})$,
	as well as $R_{v_0}$. As such,
	\[
	\mb{E}[ \phi_{v_0, R} \, | \, (Y_{v})_{v \in V}, (\sta(e))_{e \in \til{E}}] = \frac{g(Y_{v_0})}{F \cdot \OPT(v_0,R)} \, \mb{E}[w( \scr{M}(v_0)) \, | \, (Y_{v})_{v \in V}, (\sta(e))_{e \in \til{E}}] \cdot  \bm{1}_{[ R_{v_0} = R]}.
	\]
	On the other hand, the only randomness which remains in 
	the conditional expectation involving $w(\scr{M}(v_0))$ is 
	over the states of the edges adjacent to $v_0$. Observe now
	that since \textsc{DP-OPT} behaves optimally
	on $G[ \{v_{0}\} \cup R_{v_{0}}]$,
	we get that
	\begin{equation} \label{eqn:DP_optimality}
		\mb{E}[w( \scr{M}(v_0)) \, | \, (Y_{v})_{v \in V}, (\sta(e))_{e \in \til{E}}] = \OPT(v_0,R_{v_0}),
	\end{equation}
	and so for the \textit{fixed} subset $R \subseteq U$,
	$
	\mb{E}[w( \scr{M}(v_0)) \, | \, (Y_{v})_{v \in V}, (\sta(e))_{e \in \til{E}}] \cdot \bm{1}_{[ R_{v_0} = R]} = \OPT(v_0,R) \cdot\bm{1}_{[ R_{v_0} = R]},
	$
	after multiplying each side of \eqref{eqn:DP_optimality} by the indicator random variable $\bm{1}_{[R_{v_{0}}=R]}$.
	Thus, 
	\[
	\mb{E}[\phi_{v_0,R}\, | \, (Y_{v})_{v \in V}, (\sta(e))_{e \in \til{E}}] = \frac{g(Y_{v_0})}{F} \, \bm{1}_{[ R_{v_0} = R]},
	\]
	after cancellation. We therefore get that
	$
	\sum_{\substack{R \subseteq U: \\ u_{0} \in R}} \mb{E}[ \phi_{v_{0},R} \, | \, (Y_{v})_{v \in V}, (\sta(e))_{e \in \til{E}}] 
	= \frac{g(Y_{v_0})}{F} \sum_{\substack{R \subseteq U: \\ u_{0} \in R}} \bm{1}_{[ R_{v_0} = R]}.
	$
	Let us now focus on the case when $v_0$ arrives before the critical time; that is, $0 \le Y_{v_0} < \til{Y}_c$. Up until the arrival of $v_0$, the executions of the algorithm on $\til{G}$ and $G$ proceed identically, thanks to the coupling between the executions.
	As such, $u_0$ must be available when $v_0$ arrives. 
	We interpret this observation in the above notation as saying the following:
	$
	\bm{1}_{[ Y_{v_0} < \til{Y}_c]} \le \sum_{\substack{R \subseteq U: \\ u_{0} \in R}} \bm{1}_{[ R_{v_0} = R]}.
	$
	As a result, $\sum_{\substack{R \subseteq U: \\ u_{0} \in R}} \mb{E}[ \phi_{v_{0},R} \, | \, (Y_{v})_{v \in V}, (\sta(e))_{e \in \til{E}}]
	\ge \frac{g(Y_{v_0})}{F} \, \bm{1}_{[ Y_{v_0} < \til{Y}_c]}$.
	Now, if we take expectation over $Y_{v_0}$, while still conditioning on the random variables $(Y_{v})_{v \in \til{V}}$, then we get
	that
	\[
	\mb{E}[ g(Y_{v_0}) \cdot \bm{1}_{[ Y_{v_0} < \til{Y}_c]} \, | \, (Y_{v})_{v \in \til{V}}, (\sta(e))_{e \in \til{E}}]= \int_{0}^{\til{Y}_c} g(z) \, dz,
	\]
	as $Y_{v_0}$ is drawn uniformly from $[0,1]$, independently from $(Y_{v})_{v \in \til{V}}$ and $(\sta(e))_{e \in \til{E}}$.
	Thus, after applying the law of iterated expectations,
	\[
	\sum_{\substack{R \subseteq U: \\ u_{0} \in R}} \mb{E}[ \phi_{v_{0},R} \, | \, (Y_{v})_{v \in \til{V}}, (\sta(e))_{e \in \til{E}}] \ge  \frac{1}{F} \int_{0}^{\til{Y}_c} g(z) \, dz,
	\]
	and so the claim holds.
\end{proof}

With Propositions \ref{prop:offline_dual_variable_lower_bound_general} and \ref{prop:online_dual_variable_lower_bound_general},
the proof of Lemma \ref{lem:dual_feasibility_constraints_general} follows easily.
\begin{proof}[Proof of Lemma \ref{lem:dual_feasibility_constraints_general}]
	We first observe that by taking the appropriate conditional expectation, 
	\Cref{prop:offline_dual_variable_lower_bound_general} ensures that
	$
	\mb{E}[\alpha_{u_0} \, | \, (Y_{v})_{v \in \til{V}}, (\sta(e))_{e \in \til{E}}] \ge \left(1- \frac{1}{e}\right)^{-1}w_{u_0} \cdot (1 - g(\til{Y}_c)),
	$
	where the right-hand side follows since $\til{Y}_c$ is entirely determined from $(Y_{v})_{v \in \til{V}}$ and $(\sta(e))_{e \in \til{E}}$. Thus, combined with Proposition \ref{prop:online_dual_variable_lower_bound_general},
	\begin{equation}\label{eqn:variables_conditional_expectation}
		\mb{E}[ p_{u_0,v_0} \cdot \alpha_{u_0}+   w_{u_0} \cdot p_{u_0,v_0} \cdot \sum_{\substack{R \subseteq U: \\ u_0 \in R}} \phi_{v,R} \, | \, (Y_{v})_{v \in \til{V}}, (\sta(e))_{e \in \til{E}}],
	\end{equation}
	is lower bounded by
	\begin{equation} \label{eqn:integral_lower_bound}
		\left(1- \frac{1}{e}\right)^{-1} w_{u_0} \cdot p_{u_0,v_0} \cdot ( 1 - g( \til{Y}_c)) + \frac{w_{u_0} \, p_{u_0,v_0}}{F} \int_{0}^{\til{Y}_c} g(z) \, dz .
	\end{equation}
	However, $g(z):= \exp(z-1)$ for $z \in [0,1]$ by assumption, so
	$
	( 1 - g( \til{Y}_c)) + \int_{0}^{\til{Y}_c} g(z) \, dz = \left(1 - \frac{1}{e} \right),
	$
	no matter the value of the critical time $\til{Y}_c$. Thus,
	\begin{equation} \label{eqn:integral_inequality}
		\left(1- \frac{1}{e}\right)^{-1} \left(( 1 - g( \til{Y}_c)) + \frac{1-1/e}{F} \int_{0}^{\til{Y}_c} g(z) \, dz \right)
		\ge  1,
	\end{equation}
	as $F \le 1-1/e$ by definition (see \eqref{eqn:target_competitive_ratio}).
	If we now lower bound \eqref{eqn:integral_lower_bound} using \eqref{eqn:integral_inequality}
	and take expectations over \eqref{eqn:variables_conditional_expectation}, it follows that
 $
		\mb{E}[p_{u_0,v_0} \cdot \alpha_{u_0} + w_{u_0} \cdot p_{u_0,v_0} \cdot \sum_{\substack{R \subseteq U: \\ u_0 \in R}} \phi_{v,R}] \ge  w_{u_0} \cdot p_{u_0,v_0}.
	$
	As the vertices $u_{0} \in U$ and $v_{0} \in V$ were chosen arbitrarily, the
	proposed dual solution of Lemma \ref{lem:dual_feasibility_constraints_general} is feasible, 
	and so the proof is complete.
\end{proof}

\section{Efficiency of Our Algorithms} \label{sec:algorithm_efficiency}
In this section, we prove that all of our algorithms can be implemented efficiently in
the membership oracle model. Given a stochastic graph $G$, we denote $|G|$ to be the number
of bits needed to encode all of its parameters \textit{excluding} its downward-closed probing
constraints $(\scr{C}_v)_{v \in V}$.
Clearly, \Cref{alg:dynamical_program} has a runtime which is polynomial in $|G|$ by \Cref{thm:efficient_star_dp} (denoted
$\poly(|G|)$).
For our remaining algorithms, we first show that \ref{LP:new} can be solved efficiently.
\begin{theorem} \label{thm:LP_solvability}
	Suppose that $G=(U,V,E)$ in a stochastic graph with downward-closed probing constraints $(\scr{C}_v)_{v \in V}$.
	In the membership oracle model, \ref{LP:new} can be solved in time $\poly(|G|)$.
\end{theorem}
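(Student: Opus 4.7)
The plan is to solve \ref{LP:new} via its dual using the ellipsoid algorithm with a separation oracle derived from \textsc{DP-OPT}. The dual has only $|U|+|V|$ variables but exponentially many constraints, so separation is the natural route.

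First I would write the dual explicitly. Introducing $\alpha_u \ge 0$ for \eqref{eqn:relaxation_efficiency_matching} and $\beta_v \in \R$ for \eqref{eqn:relaxation_efficiency_distribution}, the dual minimizes $\sum_{u \in U} \alpha_u + \sum_{v \in V} \beta_v$ subject to
\[
\beta_v + \sum_{(u,v) \in \bm{e}} p_{u,v} \cdot q(\bm{e}_{<(u,v)}) \cdot \alpha_u \;\ge\; \val(\bm{e}) \qquad \forall v \in V,\, \bm{e} \in \scr{C}_v.
\]
Substituting the definition of $\val(\bm{e})$ and rearranging, this constraint rewrites as
\[
\beta_v \;\ge\; \sum_{(u,v) \in \bm{e}} (w_{u,v} - \alpha_u) \cdot p_{u,v} \cdot q(\bm{e}_{<(u,v)}).
\]
The key observation is that, for fixed $v$, the supremum of the right-hand side over $\bm{e} \in \scr{C}_v$ is precisely the optimal value of the star-probing problem on $G[\{v\} \cup U]$ with modified edge ``weights'' $w'_{u,v} := w_{u,v} - \alpha_u$.

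Next I would use \textsc{DP-OPT} (Theorem~\ref{thm:efficient_star_dp}) to build the separation oracle. Given a candidate $((\alpha_u),(\beta_v))$, for each $v \in V$ I would compute the supremum $M_v$ above, together with a maximizing witness $\bm{e}_v^\ast$, by running \textsc{DP-OPT} on the star at $v$ with the modified weights. If $M_v > \beta_v$ for some $v$, then $\bm{e}_v^\ast$ produces a violated dual constraint; otherwise the point is dual-feasible. The one required modification is inserting an explicit ``stop'' option into the DP recursion (i.e.\ a $\max$ with $0$ at each state), since the modified weights may be negative. The exchange argument in the proof of Theorem~\ref{thm:efficient_star_dp} showing that probing in non-increasing weight order is optimal does not use non-negativity, so downward-closedness of $\scr{C}_v$ still allows us to restrict attention to sorted tuples, and each call runs in $O(|U|^2)$ time using only membership queries to $\scr{C}_v$.

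Plugging this $\poly(|G|)$-time oracle into the ellipsoid algorithm of \cite{Groetschel,GartnerM} solves the dual in $\poly(|G|)$ time. To recover a primal optimum I would apply the standard optimization-via-separation trick: the ellipsoid run encounters only polynomially many dual constraints, indexed by some family $\{(v^{(k)}, \bm{e}^{(k)})\}$; the primal LP restricted to the corresponding polynomially many variables $x_{v^{(k)}}(\bm{e}^{(k)})$ has the same optimal value as \ref{LP:new} by LP duality, and can be solved directly by any polynomial-time LP algorithm. The main obstacle I anticipate is verifying that \textsc{DP-OPT} extends cleanly to mixed-sign weights, which is precisely what the ``stop'' option above is designed to handle.
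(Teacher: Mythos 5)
Your proposal is correct and follows essentially the same route as the paper: dualize \ref{LP:new}, observe that separating the dual constraint for a fixed $v$ amounts to maximizing $\sum_{i}(w_{e_i}-\alpha_{u_i})\, p_{e_i}\, q(\bm{e}_{<i})$ over $\bm{e}\in\scr{C}_v$, solve this with \textsc{DP-OPT} in the membership oracle model, and feed the oracle to the ellipsoid method (recovering a primal optimum in the standard way). The only cosmetic difference is the treatment of negative modified weights: the paper prunes all edges with $w_e-\alpha_u<0$ up front (justified by downward-closedness), whereas you keep them and add an explicit stop option to the DP recursion; both fixes are valid.
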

We prove Theorem \ref{thm:LP_solvability} by first considering
the dual of \ref{LP:new}. Note, that in the below LP formulation,
if $\bm{e}=(e_{1}, \ldots , e_{k}) \in \scr{C}_v$,
then we set $e_{i}=(u_{i},v)$ for $i=1, \ldots ,k$ for convenience.
\begin{align*}\label{LP:config_dual}
	\tag{LP-config-dual}
	&\text{minimize} &  \sum_{u \in U} \alpha_{u} + \sum_{v \in V} \beta_{v}  \\
	&\text{subject to} & \beta_{v} + \sum_{j=1}^{|\bm{e}|} p_{e_j} \cdot q(\bm{e}_{< j}) \cdot \alpha_{u_j} \ge \sum_{j=1}^{|\bm{e}|} p_{e_j} \cdot w_{e_j} \cdot q( \bm{e}_{ < j}) &&
	\forall v \in V, \bm{e} \in \scr{C}_v \\ 
	&&  \alpha_{u} \ge 0 && \forall u \in U\\
	&& \beta_{v} \in \mb{R} && \forall v \in V
\end{align*}

Observe that to prove Theorem \ref{thm:LP_solvability},
it suffices to show that \ref{LP:config_dual} has a (deterministic) polynomial time separation
oracle, as a consequence of how the ellipsoid algorithm \cite{Groetschel,GartnerM} executes (see \cite{Williamson,Vondrak_2011,Adamczyk2017,Lee2018} for more detail). 

Suppose that we are presented a particular selection of dual variables,   
say $(\alpha_{u})_{u \in U}$ and $(\beta_{v})_{v \in V}$, which may or may not
be a feasible solution to \ref{LP:config_dual}. Our
separation oracle must determine efficiently whether these variables
satisfy all the constraints of \ref{LP:config_dual}. In the case
in which the solution is \textit{infeasible}, the oracle must additionally
return a constraint which is violated. It is clear that we can accomplish this for the non-negativity constraints,
so let us fix a particular $v \in V$ in what follows. 
We wish to determine whether there exists some
$\bm{e}=(e_{1}, \ldots ,e_{k}) \in \scr{C}_v$,
such that if $e_{i}=(u_{i},v)$ for $i=1, \ldots ,k$, then
\begin{equation}\label{eqn:existence_of_string}
	\sum_{j=1}^{|\bm{e}|}(w_{e_j} - \alpha_{u_j}) \cdot p_{e_j}  \cdot q(\bm{e}_{< j}) > \beta_{v},
\end{equation}
where the left-hand side of \eqref{eqn:existence_of_string} is $0$ if $\bm{e}=\lambda$.
In order to make this determination, it suffices to solve the following maximization problem.
Given any selection of real values, $(\alpha_{u})_{u \in U}$,
\begin{align}
	\text{maximize} \quad &\sum_{i=1}^{|\bm{e}|} (w_{e_i} - \alpha_{u_i}) \cdot p_{e_i} \cdot \prod_{j=1}^{i-1} (1- p_{e_j}) \label{eqn:demand_oracle}\\
	\text{subject to} \quad & \bm{e} \in \scr{C}_v  \nonumber
\end{align}
Before we show how \eqref{eqn:demand_oracle} can be solved, we provide a buyer/seller interpretation of the optimization
problem. Assuming first that the edges exist with certainty
(i.e. $p_e \in \{0,1\}$ for all $e \in \partial(v)$), let us suppose a seller is trying to allocate the items of $U$
to a number of buyers. We view the vertex $v$ as a \textit{buyer} who wishes to purchase a subset
of items $S \subseteq U$, based on their valuation function $f(S)$. Assume that
$v$ has \textbf{unit demand}, that is $f(S):= \max_{s \in S} p_{s,v} w_{s,v}$. The values $(\alpha_{s})_{s \in U}$ are viewed as prices the buyer must pay, and $\max_{S \subseteq U} (f(S) - \sum_{s \in S} \alpha_{s})$ is the maximum utility of $v$. Clearly, for the simple case of a unit-demand buyer,
an optimum assignment is the item  $u \in U$ for which $p_{u,v} w_{u,v} - \alpha_{u}$ is
maximized. 

Returning the setting of arbitrary edge probabilities, even the case of a unit-demand buyer is a non-trivial optimization problem in the stochastic probing framework. Observe that we may view the edge probabilities $(p_{e})_{e \in \partial(v)}$ as modelling the setting when there is uncertainty in whether or not the purchase proposals will succeed; that is, $\sta(u,v)=1$, provided the seller agrees to sell item $u$ to buyer $v$. In this interpretation, \eqref{eqn:demand_oracle} is the expected utility of the unit-demand buyer $v$ which purchases the first item $u \in U$ such that $\sta(u,v)=1$, at which point $v$ gains utility $w_{u,v} - \alpha_u$. This easily reduces to the problem solved by $\textsc{DP-OPT}$,
and we include the details below:
\begin{proposition}\label{prop:demand_to_membership_reduction}
	If $\scr{C}_v$ is downward-closed, then for any selection of values $(\alpha_u)_{u \in U}$,
	\eqref{eqn:demand_oracle} can be solved efficiently by $\textsc{DP-OPT}$, assuming access to a membership query
	oracle for $\scr{C}_v$.
\end{proposition}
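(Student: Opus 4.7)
The plan is to observe that the optimization problem \eqref{eqn:demand_oracle} has essentially the same form as the one solved by $\textsc{DP-OPT}$ in \Cref{thm:efficient_star_dp}, after reinterpreting the edge weights. Specifically, define the modified weights $w'_{u,v} := w_{u,v} - \alpha_u$ for each $u \in U$. Then the objective in \eqref{eqn:demand_oracle} becomes
\[
\sum_{i=1}^{|\bm{e}|} w'_{e_i} \cdot p_{e_i} \cdot \prod_{j=1}^{i-1}(1 - p_{e_j}),
\]
which is exactly $\val(\bm{e})$ computed using the modified weights $(w'_{u,v})_{u \in U}$. The claim should then follow by invoking \Cref{thm:efficient_star_dp} on the stochastic graph obtained from $G[\{v\}\cup U]$ by replacing the weights $(w_{u,v})_{u \in U}$ with $(w'_{u,v})_{u \in U}$, since membership queries to $\scr{C}_v$ are available by assumption.

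The main technical point I would need to address is that $\textsc{DP-OPT}$ was stated for non-negative weights, whereas the modified weights $w'_{u,v} = w_{u,v} - \alpha_u$ may be negative for some $u \in U$. I would first verify that the swapping/rearrangement argument used in the proof of \Cref{thm:efficient_star_dp} still goes through: for any two consecutive edges $e_i, e_{i+1}$ in a probing sequence $\bm{e}$, exchanging them changes the objective by $Q \cdot p_{e_i} p_{e_{i+1}} (w'_{e_{i+1}} - w'_{e_i})$, where $Q = \prod_{j<i}(1-p_{e_j}) \ge 0$. This sign is independent of the positivity of the weights themselves, so sorting $\partial(v)$ in non-increasing order of modified weight (combined with the fact that $\scr{C}_v$ is downward-closed, so any substring of a feasible string is feasible) remains without loss of generality.

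The second point I would handle explicitly is that when modified weights can be negative, one may do strictly better by truncating the probing sequence early, or even by returning the empty string $\lambda$. This is easily incorporated into the DP by replacing the recursion of \eqref{eqn:dynamical_program} with
\[
\OPT(\scr{I}^{>i}_v) := \max\left\{0,\; \max_{j \in \{i+1,\ldots,m\}}\bigl( p_{e_j} \cdot w'_{e_j} + (1 - p_{e_j}) \cdot \OPT(\scr{I}^{>j}_v) \bigr)\right\},
\]
where the explicit $0$ corresponds to the option of stopping (i.e., the empty continuation). The base case $\OPT(\scr{I}^{>m}_v) = 0$ remains valid. The analysis of the DP in \Cref{thm:efficient_star_dp} carries over verbatim after this modification, since the argument only used downward-closure of $\scr{I}_v$ together with optimality of the non-increasing-weight ordering.

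Putting these pieces together, the returned tuple $\bm{e}^* \in \scr{C}_v$ is precisely a maximizer of \eqref{eqn:demand_oracle}, and the runtime is $O(|\partial(v)|^2)$ as in \Cref{thm:efficient_star_dp}, using one membership query to $\scr{C}_v$ per candidate edge per stage. Since this is polynomial in $|G|$, this yields the efficient separation oracle required for \ref{LP:config_dual}, and hence \Cref{thm:LP_solvability}. The only genuine obstacle is bookkeeping the sign issue for the modified weights; once that is handled the reduction is immediate.
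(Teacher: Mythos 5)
Your proposal is correct and takes essentially the same route as the paper: define modified weights $\til{w}_e := w_e - \alpha_u$ for $e=(u,v)\in\partial(v)$, observe that \eqref{eqn:demand_oracle} is exactly the objective optimized by $\textsc{DP-OPT}$ under these weights, and invoke \Cref{thm:efficient_star_dp} with the membership oracle for $\scr{C}_v$. The only difference is how the possible negativity of $\til{w}_e$ is handled: the paper simply discards all edges with $\til{w}_e<0$ (legitimate because $\scr{C}_v$ is downward-closed, returning $\lambda$ if none remain) and then uses $\textsc{DP-OPT}$ as a black box, whereas you keep all edges and patch the DP with a stop-at-$0$ option and a re-check of the exchange argument -- both resolutions are valid and equally short.
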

\begin{proof}
	Compute $\til{w}_{e}:= w_{e} - \alpha_u$ for each $e=(u,v) \in \partial(v)$,
	and define $P:=\{e \in \partial(v) : \til{w}_e \ge 0\}$.
	First observe that if $P = \emptyset$, then \eqref{eqn:existence_of_string}
	is maximized by the empty-string $\lambda$. Thus, for now on assume that $P \neq \emptyset$. Since $\scr{C}_v$ is downward-closed, it suffices to consider those $\bm{e} \in \scr{C}_v$ whose
	edges all lie in $P$. As such, for notational convenience, let us hereby assume
	that $\partial(v)=P$. Observe then that solving \eqref{eqn:demand_oracle} corresponds to executing \textsc{DP-OPT}
	on the stochastic graph $G[U \cup \{v\}]$, with edge weights replaced
	by $(\til{w}_{e})_{e \in \partial(v)}$.
\end{proof}

As a corollary of \Cref{thm:LP_solvability}, it immediately follows that \Cref{alg:ROM_edge_weights} is poly-time
in $|G|$. Let us now consider when we have a known i.d. instance $(H_{\typ}, (\scr{D}_i)_{i=1}^{n})$,
and $|\scr{D}_{i}|$ denotes the amount of space needed to encode the distribution $\scr{D}_i$.
Following the standard in the literature, we consider an algorithm to be efficient if it can be
implemented in time $\poly(|H_{\typ}|, (|\scr{D}_i|)_{i=1}^{n})$.

We first observe that \ref{LP:new_id} can be solved in time $\poly(|H_{\typ}|, (|\scr{D}_i|)_{i=1}^{n})$
using the same strategy as in \Cref{thm:LP_solvability}, as the same maximization problem \eqref{eqn:demand_oracle} is needed
to separate the dual of \ref{LP:new_id}. On the other hand, \Cref{thm:known_id_reduction} is a polynomial-time reduction.
Thus,  since the OCRS and RCRS used in \Cref{cor:known_id_aom} and \Cref{cor:known_id_rom}
are polynomial time, our $1/2$ and $1-1/e$ competitive ratios are attained by efficient online probing algorithms.

\section{A Tight Adaptivity Gap} \label{sec:adaptivity_gap}

\label{sec:non-adaptive_negative}
Let $G=(U,V,E)$ be a stochastic graph with \textbf{substring-closed} probing constraints $(\scr{C}_v)_{v \in V}$.
Here $\scr{C}_v$ is substring-closed if any substring of $\bm{e} \in \scr{C}_v$ is also in $\scr{C}_v$. This is a less restrictive definition than imposing $\scr{C}_v$ must be downward-closed, and is the minimal assumption one needs to ensure that the offline stochastic matching is well-defined.

Given an offline probing algorithm, we say it is \textbf{non-adaptive} provided its probes are a randomized function of $G$.
Similar to the definition of the offline adaptive benchmark,
we define the \textbf{non-adaptive benchmark} as the optimum performance of a non-adaptive offline probing algorithm on $G$. That is, $\OPT_{\text{n-adap}}(G):= \sup_{\scr{B}} \mb{E}[ w(\scr{B}(G))]$,
where the supremum is over all non-adaptive offline probing algorithms.
We define the \textbf{adaptivity gap} of the \textbf{stochastic matching problem with
	one-sided probing constraints},
as the ratio
\begin{equation} \label{eqn:adaptivity_gap}
	\inf_{G} \frac{\OPT_{\text{n-adap}}(G)}{\OPT(G)},
\end{equation}
where the infimum is over all (bipartite) stochastic graphs $G=(U,V,E)$
with substring-closed probing constraints $(\scr{C}_v)_{v \in V}$. Observe
that \eqref{eqn:adaptivity_gap} is upper bounded by $1$ by definition. We
now state a better upper bound (i.e., negative/impossibility result) on \eqref{eqn:adaptivity_gap}.

\begin{theorem}\label{thm:adaptivity_gap_negative_restatement}
	The adaptivity gap of the stochastic matching problem with one-sided probing constraints 
	is upper bounded by $1-1/e$.
\end{theorem}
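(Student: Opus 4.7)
The plan is to invoke the $(1-1/e)$-competitive algorithm developed for known i.d.\ instances in the ROM (\Cref{cor:known_id_rom}), specialized to the degenerate case of a known stochastic graph, and then observe that it is non-adaptive in the offline sense of \Cref{sec:adaptivity_gap}. Since the definition of the adaptivity gap compares the best non-adaptive offline probing algorithm to $\OPT$ on the same stochastic graph, it suffices to exhibit, for every $G = (U,V,E)$ with substring-closed probing constraints $(\scr{C}_v)_{v \in V}$, a single non-adaptive offline probing algorithm whose expected matching weight is at least $(1-1/e)\,\OPT(G)$; taking the infimum over $G$ then yields the theorem.

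Concretely, I would first solve \ref{LP:new} for $G$ to obtain $(x_v(\bm{e}))_{v\in V,\bm{e}\in\scr{C}_v}$, and then independently sample a probe sequence $\bm{e}_v \in \scr{C}_v$ with probability $x_v(\bm{e}_v)$ for each $v \in V$ (a valid distribution by \eqref{eqn:relaxation_efficiency_distribution}). A uniformly random permutation $\pi$ of $V$ would also be drawn. The algorithm processes vertices in the order dictated by $\pi$; for each $v$, it probes the edges of $\bm{e}_v$ in the sampled order, commits to the first active edge $(u,v)$ whose offline endpoint $u$ is currently unmatched, and rejects any subsequent active probes. Crucially, $(\bm{e}_v)_{v\in V}$ and $\pi$ are drawn from distributions depending only on $G$ and not on the edge states, so the set and order of probes form a randomized function of $G$ alone; this satisfies the non-adaptivity definition. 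The analysis then mirrors the reduction of \Cref{thm:known_id_reduction} in the degenerate single-type case: each edge $e=(u,v)$ is ``proposed'' by $v$ with probability $z_e := p_e\,\til{x}_e$ (\Cref{lem:fixed_vertex_probe}), the proposals across distinct $v$'s are independent, and $\sum_{v} z_{u,v} \le 1$ for every $u$ by \eqref{eqn:relaxation_efficiency_matching}. Invoking the $(1-1/e)$-selectable RCRS of Lee and Singla \cite{Lee2018} at each $u \in U$ would give $\mb{E}[w(\scr{M})] \ge (1-1/e)\sum_{e} w_e\, z_e = (1-1/e)\,\LPOPT(G) \ge (1-1/e)\,\OPT(G)$, where the last inequality is \Cref{thm:new_LP_relaxation}.

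The main obstacle will be extending \Cref{thm:new_LP_relaxation} and the underlying \Cref{lem:non_adaptive_optimum} from downward-closed to merely substring-closed probing constraints, since only the latter is assumed in this section. The iterative rounding in \Cref{prop:vertex_round} appends one edge at a time to grow a prefix, and prefix-closure is exactly what substring-closedness provides; so the construction should go through with only cosmetic changes, but one must verify that the support of $\scr{D}^v$ actually lies inside $\scr{C}_v$ rather than inside its strictly larger downward closure, so that the resulting non-adaptive algorithm honors the original probing constraints. A secondary technical point is ensuring that the ``probe the entire sampled sequence, reject subsequent actives'' variant described above respects the probe-commit rule while remaining syntactically non-adaptive; this follows because the probe-commit rule permits rejection of active edges and our probe set is fixed upfront by the sampling of $\bm{e}_v$.
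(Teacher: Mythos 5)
Your proposal proves the opposite inequality from the one this theorem carries in the paper. Despite the wording ``no smaller than,'' \Cref{thm:adaptivity_gap_negative_restatement} is the paper's hardness (``negative'') result: its proof is an explicit sequence of instances $G_n$, the unweighted, unit-patience bipartite graphs $\mb{G}(s,n,p)$ with $p \ll 1/\sqrt{n}$ and $s=(1-o(1))pn$, combined with \Cref{lem:committal_benchmark_hardness} ($\OPT(G_n)=(1+o(1))s$) and \Cref{lem:non_adaptive_hardness} (every non-adaptive algorithm obtains at most $(1+o(1))(1-1/e)s$: a deterministic non-adaptive algorithm fixes probe indicators $x_{u,v}$ with $\sum_{u} x_{u,v}\le 1$, so $\mb{P}[u \text{ matched}]\le 1-\prod_{v}(1-p x_{u,v})$, and Jensen's inequality finishes). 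This shows the infimum defining the adaptivity gap is at most $1-1/e$ --- the text below the lemmas explicitly calls this ``the upper bound of Theorem \ref{thm:adaptivity_gap_negative_restatement}'' --- and \Cref{cor:adaptivity_gap} then pairs it with the algorithmic bound, which the paper obtains from \Cref{cor:known_id_rom} by simply observing that that algorithm is non-adaptive and needs only substring-closed constraints. Your argument (solve \ref{LP:new}, sample probe sequences up front, resolve contention at each offline vertex with a $(1-1/e)$-selectable RCRS) is precisely that algorithmic half; it cannot establish the theorem, since a universal $(1-1/e)$ non-adaptive guarantee says nothing about whether non-adaptive algorithms might do strictly better than $1-1/e$ on every instance. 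The missing ingredient is the hard instance itself, and your flagged ``main obstacle'' (extending \Cref{thm:new_LP_relaxation} to substring-closed constraints) is moot for the paper's proof, which never touches the LP.

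Even viewed as a proof of the algorithmic direction, your write-up has an internal mismatch between algorithm and analysis. The algorithm you describe has $v$ probe all of $\bm{e}_v$ and commit to the first active edge whose offline endpoint is currently unmatched; that is greedy acceptance, and no RCRS is ever run. Under that rule the event you call a ``proposal'' to $u$ is not an independent $\Ber(p_e\til{x}_e)$ event across online vertices (it depends on which offline vertices are already matched), so the selectability guarantee cannot be invoked; note the paper states that the greedy-acceptance \Cref{alg:known_stochastic_graph} is only $1/2$-competitive in the ROM. To get $1-1/e$ you must define $v$'s proposal as the first active edge of its sampled sequence (with no further proposals if it is declined) and delegate the accept/reject decision at each $u$ to the RCRS, which may decline even when $u$ is free; your non-adaptive packaging (probe the entire sampled sequence and reject any later actives) is then fine, as you observe.
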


Theorem \ref{thm:adaptivity_gap_negative_restatement} follows by considering
a sequence of stochastic graphs. In particular, given $n \ge 1$, consider functions $p=p(n)$ and $s = s(n)$ which satisfy the following:
\begin{enumerate}
	\item $p \ll 1/ \sqrt{n}$ and $s \rightarrow \infty$ as $n \rightarrow \infty$.
	\item $s \le  p n$ and $s = (1 - o(1)) pn$.
\end{enumerate}
Consider now an unweighted stochastic graph $G_{n}=(U,V,E)$ with unit patience
values, and which satisfies $|U| = s$ and $|V| = n$. Moreover, assume that $p_{u,v} = p$ for all $u \in U$ and $v \in V$.
Observe that $G_n$ corresponds to the bipartite Erdős–Rényi random graph $\mb{G}(s,n,p)$. 
\begin{lemma}\label{lem:committal_benchmark_hardness}
	
	The offline adaptive benchmark returns a matching of size asymptotically equal to $s$
	when executing on $G_n$; that is, $\OPT(G_n) = (1 + o(1)) s$.
	
\end{lemma}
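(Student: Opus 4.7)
The upper bound $\OPT(G_n) \le s$ is immediate, since any matching has at most $|U| = s$ edges. For the matching lower bound, I plan to exhibit a concrete offline probing strategy $\scr{A}$ whose expected matching size is $(1-o(1))s$. Let $\scr{A}$ process the online vertices $v_1, \dots, v_n$ in some fixed order, and upon reaching $v_i$ probe the edge $(u, v_i)$ for an arbitrary currently unmatched $u \in U$ (doing nothing if every vertex of $U$ is already matched). Because $\ell_{v_i} = 1$ and the edge states $(\sta(e))_{e \in E}$ are mutually independent Bernoulli$(p)$ random variables, the outcome of each such probe is an independent Bernoulli$(p)$ trial, regardless of which $u$ happens to be chosen.

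The key step is a coupling with a sequence $B_1, \dots, B_n$ of i.i.d.\ Bernoulli$(p)$ variables, where $B_i$ encodes the outcome of the potential probe in round $i$. A straightforward induction on $i$ shows that the number of offline vertices matched by $\scr{A}$ after round $i$ equals $\min(S_i, s)$, where $S_i := \sum_{j \le i} B_j$: the recursion $\min(S_{i-1},s) + B_i \cdot \bm{1}_{[\min(S_{i-1},s) < s]} = \min(S_i, s)$ is checked by cases. Consequently the expected matching size produced by $\scr{A}$ equals $\mb{E}[\min(Y, s)]$, where $Y \sim \Bin(n,p)$, and it suffices to show $\mb{E}[\min(Y,s)] \ge (1-o(1))s$.

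Writing $\mb{E}[\min(Y,s)] = s - \mb{E}[(s - Y)^+]$ and using $s \le pn = \mb{E}[Y]$, we get $(s-Y)^+ \le (pn - Y)^+$ pointwise, hence
\[
\mb{E}[(s-Y)^+] \;\le\; \mb{E}[(pn - Y)^+] \;=\; \tfrac{1}{2}\, \mb{E}[|Y - pn|] \;\le\; \tfrac{1}{2}\sqrt{\text{Var}(Y)} \;\le\; \tfrac{1}{2}\sqrt{pn},
\]
where the middle equality uses $\mb{E}[Y - pn] = 0$ and the subsequent inequality is Jensen's. Since $s \to \infty$ and $s \le pn$ we have $pn \to \infty$, and combined with $s = (1-o(1))pn$ this yields $\sqrt{pn} = o(pn) = o(s)$. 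Thus $\OPT(G_n) \ge s - o(s) = (1-o(1))s$, which together with the trivial upper bound gives the claim.

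The main obstacle is essentially recognizing that the adaptive algorithm's power collapses the analysis to a single binomial random variable via the coupling above; no delicate probabilistic tools are required, and only Chebyshev/Jensen-level concentration is invoked. By contrast, analyzing a non-adaptive algorithm on the same instance would require dealing with the event that a pre-committed assignment of probes to offline vertices suffers collisions, and it is precisely this collision loss that will ultimately drive the $1 - 1/e$ adaptivity gap in the companion argument to Theorem~\ref{thm:adaptivity_gap_negative_restatement}.
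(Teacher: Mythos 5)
Your proof is correct. Note that the paper itself gives no argument for this lemma -- it explicitly omits the proof as ``routine analysis of the Erd\H{o}s--R\'enyi random graph $\mb{G}(s,n,p)$'' -- so what you have written is exactly the missing routine analysis, and it is sound: the greedy adaptive strategy probes a fresh online vertex each round, so conditioned on the history each probe is an independent $\Ber(p)$ trial, the coupling $\min(S_i,s)$ is verified correctly by your case check, and the bound $\mb{E}[(s-Y)^+]\le \tfrac12\mb{E}[|Y-pn|]\le \tfrac12\sqrt{pn}=o(s)$ (using $s=(1-o(1))pn\to\infty$) gives $\mb{E}[\min(Y,s)]\ge (1-o(1))s$, which together with the trivial upper bound $\OPT(G_n)\le s$ yields the claim. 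One small merit of your write-up over a generic ``with high probability'' concentration argument is that the $L^1$/Jensen bound directly controls the expectation, which is what the lemma asserts; no union bounds or tail estimates are needed. The only point worth stating explicitly is that your greedy strategy is a legitimate offline probing algorithm respecting unit patience and probe-commit (it probes each online vertex at most once and always wants to keep a successful probe), so its performance indeed lower bounds $\OPT(G_n)$.
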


We omit the proof of Lemma \ref{lem:committal_benchmark_hardness}, as it is routine analysis
of the Erdős–Rényi random graph $\mb{G}(s,n,p)$. Instead, we focus on proving the following
lemma, which together with Lemma \ref{lem:committal_benchmark_hardness}
implies the upper bound of Theorem \ref{thm:adaptivity_gap_negative_restatement}:

\begin{lemma} \label{lem:non_adaptive_hardness}
	The non-adaptive benchmark returns in expectation a matching of size at most
	$(1+o(1)) \left(1 - \frac{1}{e} \right) s$ when executing on
	$G_n$. That is,
	\[
	\nOPT(G) \le (1 + o(1)) \left(1 - \frac{1}{e} \right) s.
	\] 
\end{lemma}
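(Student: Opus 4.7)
The plan is to exploit the unit-patience constraint directly. Because $G_n$ has $\ell_v = 1$ for every $v \in V$, any non-adaptive probing algorithm is specified by a (possibly randomized, but edge-state-independent) assignment $v \mapsto u_v$, where $u_v \in U \cup \{\bot\}$ is the unique neighbour whose edge $v$ probes (with $u_v = \bot$ if $v$ probes nothing). For each $u \in U$, set $S_u := \{v \in V : u_v = u\}$, and note that $\sum_{u \in U} |S_u| \le n$. Any matching returned is contained in the set of active probed edges, and since each $u \in U$ can appear in at most one matching edge, its size is at most $\sum_{u \in U} \bm{1}[\exists v \in S_u: \sta(u,v) = 1]$. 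I am generously allowing the algorithm to pick its matching adaptively from the edge realizations; this only strengthens the upper bound.

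Next I would take expectations. By non-adaptivity, $S_u$ is independent of $(\sta(e))_{e \in E}$, and by the mutual independence of $(\sta(u,v))_{v \in S_u}$, the conditional probability of the indicator event given $S_u$ equals $1 - (1-p)^{|S_u|}$. Applying Jensen's inequality to the concave function $x \mapsto 1 - (1-p)^x$ gives
\[
\mb{E}\bigl[\bm{1}[\exists v \in S_u: \sta(u,v) = 1]\bigr] \;\le\; 1 - (1-p)^{\mb{E}[|S_u|]},
\]
so the expected matching size is at most $\sum_{u \in U}\bigl(1 - (1-p)^{\mb{E}[|S_u|]}\bigr)$. A second concavity argument over the simplex $\{(y_u)_{u \in U} : y_u \ge 0,\ \sum_u y_u \le n\}$ shows that this sum is maximized when all $\mb{E}[|S_u|]$ equal $n/s$, giving the overall bound $s \cdot \bigl(1 - (1-p)^{n/s}\bigr)$.

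Finally I would pass to the limit. Using $s = (1-o(1))pn$, we have $n/s = (1+o(1))/p$. Combined with $p \to 0$, the standard estimate $(1-p)^{1/p} = e^{-1 + O(p)}$ yields $(1-p)^{n/s} = (1+o(1))/e$, and hence
\[
s \cdot \bigl(1 - (1-p)^{n/s}\bigr) \;=\; (1+o(1))\left(1 - \tfrac{1}{e}\right) s,
\]
which is exactly the claimed bound on $\nOPT(G_n)$. I do not anticipate a major obstacle: the main conceptual point is that non-adaptivity forces the load distribution $(|S_u|)_{u \in U}$ to be chosen before edge realizations are seen, which is precisely what justifies the two Jensen applications. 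A minor subtlety worth stating explicitly is that correlations between $u_v$ and $u_{v'}$ across distinct $v,v' \in V$ do not affect the argument, since only the marginal expectations $\mb{E}[|S_u|]$ enter the bound.
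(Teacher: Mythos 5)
Your proof is correct and follows essentially the same route as the paper: both arguments bound the probability that each offline vertex $u$ is matched by the probability that some probed edge at $u$ is active, and then apply a Jensen/concavity argument under the total probe budget $\sum_u (\text{probes at } u) \le n \le (1+o(1))s/p$ forced by unit patience and non-adaptivity. The only cosmetic differences are that you keep the exact expression $1-(1-p)^{|S_u|}$ (where the paper passes through the approximation $1-px \ge (1-Cp^2)e^{-px}$ before applying Jensen) and you handle randomized non-adaptive algorithms directly with an extra Jensen step, whereas the paper first reduces to deterministic ones.
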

\begin{proof}
	Let $\scr{A}$ be a non-adaptive probing algorithm, which we may assume is deterministic
	w.l.o.g. As the probes of $\scr{A}$ are
	determined independently of the random variables $(\sta(e))_{e \in E}$,
	we can define $x_{e} \in \{0,1\}$ for each $e \in E$ to indicate whether or not
	$\scr{A}$ probes the edge $e$.
	
	Now, if $\scr{A}(G)$ is the matching returned by $\scr{A}$,
	then using the independence of the edge states $(\sta(e))_{e \in E}$, we get that
	\begin{align}
		\mb{P}[\text{$u$ matched by $\scr{A}(G)$}] & \le \mb{P}\left[ \cup_{\substack{v \in V: \\ x_{u,v} =1}} \sta(u,v)=1 \right] \\
		&= 1 - \prod_{v \in V} (1 - p x_{u,v})
	\end{align}
	and so $\mb{E}[|\scr{A}(G))|] \le s - \sum_{u \in U} \prod_{v \in V} (1 - p x_{u,v})$. As such, if we can show that  
	\[
	\sum_{u \in U} \prod_{v \in V} (1 - p x_{u,v}) \ge (1 - o(1))\frac{s}{e},
	\]
	then this will imply that $\mb{E}[|\scr{A}(G)|] \le (1 + o(1))  \left(1 - \frac{1}{e} \right) s$.
	
	To see this, first observe that since $p(n) \rightarrow 0$ as $n \rightarrow \infty$, we know
	that $1 - p x_{u,v} = (1 + o(1)) \exp(-p x_{u,v})$
	for each $v \in V$. In fact, since $p x_{u,v} \le p$ for all $v \in V$,
	the asymptotics are uniform across $V$. More precisely, there exists $C > 0$,
	such that for $n$ sufficiently large,
	$
	1 - p x_{u,v} \ge (1 - C p^2) \exp(-p x_{u,v})
	$
	for all $v \in V$. As a result, 
	\begin{align*}
		\prod_{v \in V} (1 - p x_{u,v}) &\ge (1 - Cp^{2})^{n} \exp\left(-\sum_{v \in V} p x_{u,v}\right)	\\
		&= (1 + o(1))\exp\left(-\sum_{v \in V} p x_{u,v}\right),
	\end{align*}
	where the second line follows since $p \ll 1/\sqrt{n}$ by assumption. On the other hand, Jensen's inequality
	ensures that
	\[
	\sum_{u \in U}  \frac{\exp\left(-\sum_{v \in V} p x_{u,v}\right)}{s} \ge \exp\left(- \frac{\sum_{u \in U, v \in V} p x_{u,v}}{n}\right).
	\]
	However, $\sum_{u \in U} x_{u,v} \le 1$ for each $v \in V$.
	Thus, $\sum_{u \in U, v \in V} p x_{u,v} \le p n$, and so
	\[
	\exp\left(- \frac{\sum_{u \in U, v \in V} p x_{u,v}}{s}\right) \ge \exp\left(- \frac{p n}{s}\right) \ge \frac{1}{e},
	\]
	where the last line follows since $p n \le s$. It follows that
	$
	\sum_{u \in U} \prod_{v \in V} (1 - p x_{u,v}) \ge (1 +o(1)) \frac{s}{e},
	$
	and so $\mb{E}[|\scr{A}(G)|] \le (1 + o(1))  \left(1 - \frac{1}{e} \right) s.$
	As the asymptotics hold uniformly across each deterministic non-adaptive algorithm $\scr{A}$, 
	this completes the proof.	
\end{proof}
When considered in the known stochastic graph setting,
the $1-1/e$-competitive algorithm of \Cref{cor:known_id_rom} is non-adaptive.
Moreover, it applies more generally to stochastic
graphs with substring-closed probing constraints. The stronger downward closed condition is only needed to solve \ref{LP:new}
efficiently. 
Thus, Corollary \ref{cor:known_id_rom} and Theorem \ref{thm:adaptivity_gap_negative_restatement}
exactly characterize the adaptivity gap of the offline stochastic matching problem with one-sided probing constraints:

\begin{corollary}\label{cor:adaptivity_gap}
	The adaptivity gap of the offline stochastic matching problem with one-sided probing constraints is $1-1/e$.
\end{corollary}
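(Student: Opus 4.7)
The plan is to establish the adaptivity gap equals $1 - 1/e$ by sandwiching the infimum $\inf_G \nOPT(G)/\OPT(G)$ between two matching bounds, using the two earlier results together with one simple observation about the algorithm behind \Cref{cor:known_id_rom}.

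For the upper bound on the gap, I would invoke the hard instances already constructed in the proof of \Cref{thm:adaptivity_gap_negative_restatement}. The Erdős--Rényi family $G_n$ with unit patience, $|U| = s$, $|V| = n$, and uniform edge probability $p = p(n)$ satisfies $\OPT(G_n) = (1+o(1))\, s$ by \Cref{lem:committal_benchmark_hardness} and $\nOPT(G_n) \le (1+o(1))(1-1/e)\, s$ by \Cref{lem:non_adaptive_hardness}, so $\nOPT(G_n)/\OPT(G_n) \to 1 - 1/e$ as $n \to \infty$. Taking the infimum over $G$ therefore gives $\inf_G \nOPT(G)/\OPT(G) \le 1 - 1/e$.

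For the lower bound on the gap, I would instantiate the algorithm behind \Cref{cor:known_id_rom} on the degenerate known i.d.\ input whose type graph is $G$ itself and whose distributions are point masses on the distinct online vertices of $G$, so that the realized graph almost surely equals $G$ and the competitive guarantee reads $(1 - 1/e)\,\OPT(G)$. Two points then have to be checked. First, when the arrival order is generated u.a.r.\ and independent of all other randomness, this algorithm is a legitimate \emph{non-adaptive} offline probing algorithm: its probes are determined by the random permutation of $V$, the draws of $\bm{e} \in \scr{C}_s$ made by $\mathtt{VertexProbe}$, and the random bits consumed by the $1$-uniform RCRS --- all of which are statistically independent of $(\sta(e))_{e \in E}$. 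Second, the argument uses \ref{LP:new} only as a relaxation of $\OPT(G)$ together with the exact rounding guarantee of \Cref{lem:fixed_vertex_probe}; both continue to hold under the weaker hypothesis that each $\scr{C}_v$ is merely substring-closed, since the downward-closed condition enters only in \Cref{thm:LP_solvability} to solve \ref{LP:new} in polynomial time, and for an existential adaptivity-gap statement efficiency is irrelevant. Hence $\nOPT(G) \ge (1-1/e)\,\OPT(G)$ for every such $G$.

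Combining the two bounds yields $\inf_G \nOPT(G)/\OPT(G) = 1 - 1/e$. The step I expect to require the most care is the non-adaptivity check: one must confirm that, when specialized to a known stochastic graph, no part of the CRS-based boosting step inspects the realized edge states, and that the natural coupling treating the random permutation and the $\mathtt{VertexProbe}$ random string as external randomness really produces a probe set whose distribution is independent of $(\sta(e))_{e \in E}$.
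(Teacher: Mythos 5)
Your proposal is correct and follows essentially the same route as the paper: the paper obtains the upper bound from \Cref{thm:adaptivity_gap_negative_restatement} (via \Cref{lem:committal_benchmark_hardness,lem:non_adaptive_hardness}) and the lower bound by observing that the $1-1/e$-competitive algorithm of \Cref{cor:known_id_rom}, specialized to a known stochastic graph, is non-adaptive and needs downward-closedness only for efficiently solving \ref{LP:new}, so substring-closed constraints suffice. Your write-up simply fills in the details of that same observation (point-mass distributions, independence of the probe randomness from the edge states), which the paper states in a brief closing paragraph.
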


\section{Conclusion and Open Problems}

There are some basic questions that are unresolved. Perhaps the most basic question 
which is also unresolved in the classical setting without probing is 
to bridge the gap between the positive $1-1/e$ competitive ratio and in-approximations in the context of known i.d. random order arrivals. In terms of the single item prophet secretary problem (without probing), Correa et al. \cite {CorreaSZ19} obtain a $0.669$ competitive ratio following Azar et al. \cite{AzarCK18} who were the first to surpass the  $1-1/e$ ``barrier''.     
Correa et al. \cite{CorreaSZ19} also establish a $0.732$ in-approximation for the i.d. setting, and Huang et al. \cite{huang2022} recently established a $0.703$ in-approximation for i.i.d. arrivals in the multi-item case. 
Can we surpass $1-1/e$ in the probing setting for i.d. input arrivals or for the special case of i.i.d. input arrivals? As previously
mentioned, Yan \cite{Yan2022} recently proved that $0.645 > 1-1/e$ is attainable for known i.i.d. arrivals when probing is not required. 
Is there a provable difference between  stochastic bipartite matching (with probing constraints) and the 
classical online settings? Can we obtain the same competitive results against an optimal offline {\it  non-committal}  benchmark which respects the probing constraints  but doesn't operate in the probe-commit model? The $0.51$ in-approximation result of Fata et al. \cite{Fata2019MultiStageAM} suggests that $0.51$ may be the optimal competitive ratio against this stronger benchmark.

One interesting extension of the probing model is to 
allow non-Bernoulli edge random variables 
to describe edge uncertainty. Even for a single online
vertex in the unconstrained setting, this problem is interesting as it corresponds to computing an optimal
policy for the \textbf{free-order prophets problem},
which was recently studied by Segev and Singla in \cite{Segev2020}.

\subsection*{Acknowledgements} We would like to thank Denis Pankratov, Rajan Udwani, and David Wajc for their constructive comments on early versions of this paper.

\bibliographystyle{plain}
\bibliography{ref}

\appendix

\section{LP Relations} \label{sec:LP_relations}

Suppose that we are given an arbitrary stochastic
graph $G=(U,V,E)$. In this section, we first prove the equivalence between 
the relaxed stochastic matching problem and \ref{LP:new}.
We then state \ref{LP:standard_definition_general}, the standard
LP in the stochastic matching literature, as introduced by Bansal et al. \cite{BansalGLMNR12},
as well as \ref{LP:full_patience}, the LP introduced by Gamlath et al. \cite{Gamlath2019}.
We then show that \ref{LP:full_patience} and \ref{LP:new} have the same optimum value when $G$ has unbounded patience. 
\begin{theorem}\label{thm:LP_relaxation_benchmark_equivalence}
	$\OPT(G) = \LPOPT(G)$
\end{theorem}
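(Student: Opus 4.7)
The plan is to show both directions of the equality. The forward direction $\OPT(G) \le \LPOPT(G)$ is already Theorem \ref{thm:new_LP_relaxation}, so the only new work is to prove the reverse inequality $\LPOPT(G) \le \OPT(G)$. The standard route is to start from an optimal LP solution and read off an offline probing algorithm whose expected weight matches the LP value.

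Concretely, I would fix an optimal solution $(x^*_v(\bm{e}))_{v \in V, \bm{e} \in \scr{C}_v}$ of \ref{LP:new} and exploit constraint \eqref{eqn:relaxation_efficiency_distribution}, which makes $(x^*_v(\cdot))$ a probability distribution on $\scr{C}_v$ for each $v$. Define a non-adaptive algorithm $\scr{A}^*$ that, independently for each $v \in V$, samples $\bm{e}_v \sim x^*_v(\cdot)$, probes the edges of $\bm{e}_v$ in order, and commits to the first active edge found. A direct calculation paralleling the end of the proof of Lemma \ref{lem:non_adaptive_optimum}, using the independence of $\bm{e}_v$ from all edge states, yields that the expected weight of the edge matched to $v$ is $\sum_{\bm{e} \in \scr{C}_v} \val(\bm{e}) \cdot x^*_v(\bm{e})$, so the total expected weight of the committed edges equals $\LPOPT(G)$.

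The main obstacle, and where I expect the argument to require a reinterpretation of the statement, is that $\scr{A}^*$ produces only a one-sided matching of the online vertices: constraint \eqref{eqn:relaxation_efficiency_matching} bounds the expected number of commits at each offline $u$ by $1$, but not pointwise. The toy instance $|U|=1$ with $n$ unit-patience online neighbors and all edge probabilities $1/n$ already gives $\LPOPT(G)=1$ while $\OPT(G) \to 1 - 1/e$, so no post-processing of $\scr{A}^*$ can yield a true matching of value $\LPOPT(G)$, and the literal inequality $\LPOPT(G) \le \OPT(G)$ fails in general. I therefore read the theorem, in line with the appendix's opening sentence promising ``the equivalence between the relaxed stochastic matching problem and \ref{LP:new}'', as the identity $\rOPT(G) = \LPOPT(G)$, and plan the proof in two steps: (i) $\rOPT(G) \le \LPOPT(G)$, extracted from Lemma \ref{lem:non_adaptive_optimum} exactly as in the proof of Theorem \ref{thm:new_LP_relaxation}; and (ii) $\LPOPT(G) \le \rOPT(G)$, by verifying that $\scr{A}^*$ is a valid relaxed probing algorithm, which is precisely the content of interpreting \eqref{eqn:relaxation_efficiency_matching} as the expected-degree bound $\mb{E}[N_u] \le 1$ required in the definition of a relaxed probing algorithm.
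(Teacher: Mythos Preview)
Your proposal is correct and matches the paper's own proof. You correctly diagnose that the theorem statement contains a typo (it should read $\rOPT(G) = \LPOPT(G)$, as the paper's surrounding text and its own proof confirm), and your algorithm $\scr{A}^*$ is exactly the paper's construction: for each $v$ run $\mathtt{VertexProbe}(v,\partial(v),(x^*_v(\bm{e}))_{\bm{e}\in\scr{C}_v})$ and add the returned edge to the one-sided matching, then invoke constraint \eqref{eqn:relaxation_efficiency_matching} to check $\mb{E}[N_u]\le 1$ and Lemma~\ref{lem:fixed_vertex_probe} (equivalently, the computation at the end of Lemma~\ref{lem:non_adaptive_optimum}) to get the expected weight.
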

\begin{proof}
	Clearly, Theorem \ref{thm:new_LP_relaxation} accounts for one side of the inequality,
	so it suffices to show that $\LPOPT(G) \le \rOPT(G)$. Suppose we are presented a feasible solution $(x_{v}(\bm{e}))_{v \in V, \bm{e} \in \scr{C}_v}$ to \ref{LP:new}. Consider then the following algorithm:
	\begin{enumerate}
		\item $\scr{M} \leftarrow \emptyset$.
		\item For each $v \in V$, set $e \leftarrow \mathtt{VertexProbe}(v, \partial(v), (x_{v}(\bm{e}))_{\bm{e} \in \scr{C}_v})$.
		If $e \neq \emptyset$, then add $e$ to $\scr{M}$.
		\item Return $\scr{M}$.
	\end{enumerate}
	Using Lemma \ref{lem:fixed_vertex_probe}, it is clear that
	$\mb{E}[ w(\scr{M})] = \sum_{v \in V} \sum_{\bm{e} \in \scr{C}_v} \val(\bm{e}) \cdot x_{v}(\bm{e}).$
	Moreover, each vertex $u \in U$ is matched by $\scr{M}$ at most once in expectation, as a consequence of
	constraint \eqref{eqn:relaxation_efficiency_matching} of \ref{LP:new}, and so the algorithm satisfies
	the required properties of a relaxed probing algorithm.
	The proof is therefore complete.
\end{proof}
Consider \ref{LP:standard_definition_general}, which is defined
only when $G$ has patience values $(\ell_v)_{v \in V}$. Here
each $e \in E$ has a variable $x_e$ corresponding to the probability
that the offline adaptive benchmark probes $e$. 
\begin{align}\label{LP:standard_definition_general}
	\tag{LP-std}
	&\text{maximize} & \sum_{e \in E} w_{e} \cdot p_{e} \cdot x_{e} \\
	&\text{subject to} & \sum_{e \partial(u)} p_{e} \cdot x_{e} & \leq 1 && \forall u \in U \\
	& &\sum_{e \in \partial(v)} p_{e} \cdot x_{e} & \leq 1 && \forall v \in V  \\
	& &\sum_{e \in \partial(v)} x_{e} & \leq \ell_v && \forall v \in V  \\
	& &0 \leq x_{e} &\leq 1 && \forall e \in E.
\end{align}
Observe that \ref{LP:new} and \ref{LP:standard_definition_general} are the same
LP in the case of unit patience:
\begin{align} \label{LP:standard_benchmark}
	\tag{LP-std-unit}
	&\text{maximize}&  \sum_{v \in V} \sum_{e \in \partial(v)} w_e \cdot p_{e} \cdot x_{e}\\
	&\text{subject to}& \sum_{e \in \partial(u)} p_{e} \cdot x_{e} &\leq 1 && \forall u \in U\\
	&&\sum_{e \in \partial(v)} x_{e} &\leq 1 && \forall v \in V\\
	&&  x_{e}  &\ge 0 && \forall e \in E
\end{align}

Gamlath et al. modified \ref{LP:standard_definition_general} in for the case of unbounded patience by adding in exponentially
many extra constraints. Specifically, for each $v \in V$ and $S \subseteq \partial(v)$, they ensure that
\begin{equation}\label{eqn:svensson_constraints}
	\sum_{e \in S} p_{e} \cdot x_{e} \le 1 - \prod_{e \in S}(1 - p_{e}),
\end{equation}
In the same variable interpretation as \ref{LP:standard_definition_general},
the left-hand side of \eqref{eqn:svensson_constraints} corresponds
to the probability the adaptive benchmark matches an edge of $S \subseteq \partial(v)$, and the right-hand side corresponds to the probability an edge of $S$ is active\footnote{The LP considered by Gamlath et al. in \cite{Gamlath2019}
	also places the analogous constraints of \eqref{eqn:svensson_constraints} on the vertices of $U$. That being said, these additional constraints are not used anywhere in the work of Gamlath et al., so we omit them.}. 
\begin{align}\label{LP:full_patience}
	\tag{LP-QC}
	&\text{maximize} & \sum_{e \in E} w_{e} \cdot p_{e} \cdot x_{e} \\
	&\text{subject to} & \sum_{e \in S} p_{e} \cdot x_{e} &\le 1 - \prod_{e \in S}(1 - p_{e}) && \forall v \in V, S \subseteq \partial(v) \notag \\
	& &\sum_{e \in \partial(u)} p_{e} \cdot x_{e} & \leq  1&& \forall u \in U  \label{eqn:relxation_matched_full_patience}\\
	& &x_{e} &\ge 0 && \forall e \in E. \notag
\end{align}
Let us denote $\qLPOPT(G)$ as the optimum value of \ref{LP:full_patience}.
\begin{proposition} \label{prop:full_patience_equivalence}
	If $G$ is unconstrained, then $\qLPOPT(G) = \LPOPT(G)$.
\end{proposition}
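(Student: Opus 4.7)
The plan is to route both directions through the relaxed benchmark, using Theorem~\ref{thm:LP_relaxation_benchmark_equivalence} which gives $\LPOPT(G) = \rOPT(G)$, and then show separately that $\qLPOPT(G) \le \rOPT(G) \le \qLPOPT(G)$ in the unbounded patience regime.

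\textbf{Direction $\qLPOPT(G) \le \LPOPT(G)$.} Given any feasible $(\til{x}_e)_{e \in E}$ for \ref{LP:full_patience}, I would construct a non-adaptive relaxed probing algorithm $\scr{A}$ of value $\sum_{e \in E} w_e p_e \til{x}_e$. Because $G$ is unconstrained, the exact rounding procedure of Theorem~\ref{thm:costello_svensson_guarantee} (Costello--Svensson) produces, for each $v \in V$, a distribution $\mu_v$ on $\partial(v)^{(*)}$ such that probing a sample $\bm{Y}^v \sim \mu_v$ in order under commitment matches $e \in \partial(v)$ with probability \emph{exactly} $p_e \til{x}_e$. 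Let $\scr{A}$ probe each $v$ independently according to $\mu_v$. Then $\scr{A}$ returns a set of expected weight $\sum_{e \in E} w_e p_e \til{x}_e$, while the expected number of matches to each $u \in U$ is $\sum_{v \in V} p_{u,v} \til{x}_{u,v} \le 1$ by the matching constraint of \ref{LP:full_patience}. Hence $\scr{A}$ is a valid relaxed probing algorithm and $\qLPOPT(G) \le \rOPT(G) = \LPOPT(G)$.

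\textbf{Direction $\LPOPT(G) \le \qLPOPT(G)$.} By Lemma~\ref{lem:non_adaptive_optimum}, fix an optimal relaxed probing algorithm $\scr{B}$ satisfying \ref{eqn:committal} and \ref{eqn:non_adaptive}, and let $\bm{Y}^v$ denote its random probe sequence at $v$. Define
\[
\til{x}_e := \sum_{\bm{e}' \in \partial(v)^{(*)}:\, e \in \bm{e}'} q(\bm{e}'_{<e}) \cdot \Pr[\bm{Y}^v = \bm{e}'] \qquad \text{for } e \in \partial(v),
\]
so that $p_e \til{x}_e$ is exactly the probability $\scr{B}$ matches $e$, and $\sum_{e \in E} w_e p_e \til{x}_e = \rOPT(G)$. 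Non-negativity is immediate. The constraint $\sum_{e \in \partial(u)} p_e \til{x}_e \le 1$ is the one-sided matching property of $\scr{B}$ at $u$. For the subset constraint, fix $v \in V$ and $S \subseteq \partial(v)$: the quantity $\sum_{e \in S} p_e \til{x}_e$ equals the probability that $\scr{B}$ matches $v$ via some edge in $S$ (the events of matching distinct edges of $S$ are disjoint), which is bounded above by the probability that at least one edge of $S$ is active, namely $1 - \prod_{e \in S}(1 - p_e)$. Thus $(\til{x}_e)$ is feasible for \ref{LP:full_patience}, giving $\rOPT(G) \le \qLPOPT(G)$ and completing the equivalence.

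\textbf{Main obstacle.} The harder step is the first direction: it rests on the nontrivial exact (lossless) rounding of \ref{LP:full_patience} provided by Theorem~\ref{thm:costello_svensson_guarantee}, a procedure that only works in the unbounded patience regime. This is precisely the reason the equivalence fails to extend to stochastic graphs with arbitrary downward-closed probing constraints, consistent with the earlier remark that \ref{LP:full_patience} does not seem to admit a natural extension to bounded patience with polynomial-time solvability or exact rounding.
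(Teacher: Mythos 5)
Your proof is correct and takes essentially the same route as the paper: both inequalities are routed through the relaxed benchmark using Theorem \ref{thm:LP_relaxation_benchmark_equivalence}, with $\qLPOPT(G)\le \rOPT(G)$ obtained by running the exact rounding of Theorem \ref{thm:costello_svensson_guarantee} at every online vertex, and the reverse inequality by reading off a feasible \ref{LP:full_patience} solution from an optimal relaxed probing algorithm (which the paper dispatches in one line and you spell out carefully via Lemma \ref{lem:non_adaptive_optimum}). The only minor imprecision is that Theorem \ref{thm:costello_svensson_guarantee} is stated as supplying an online probing algorithm $\scr{B}_v$ with the exact per-edge matching guarantee, not literally a distribution over probe sequences, but since your argument only uses that guarantee this changes nothing.
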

In order to prove Proposition \ref{prop:full_patience_equivalence}, we make use
of a result of Gamlath et al.  We mention that an almost identical result
is also proven by Costello et al. \cite{costello2012matching} using different techniques.
\begin{theorem}[\cite{Gamlath2019}] \label{thm:costello_svensson_guarantee}
	Suppose that $G=(U,V,E)$ is an unconstrained stochastic graph, and 
	$(x_e)_{e \in E}$ is a solution to \ref{LP:full_patience}. 
	For each $v \in V$, there exists an online probing algorithm $\scr{B}_{v}$
	whose input is $(v,\partial(v), (x_e)_{e \in \partial(v)})$, and which satisfies
	$\mb{P}[\text{$\scr{B}_{v}$ matches $v$ to $e$}] = p_{e} x_e$ for each $e \in \partial(v)$.
\end{theorem}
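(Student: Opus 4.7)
The proof proceeds via a polymatroid decomposition. Define the set function $f : 2^{\partial(v)} \to [0, 1]$ by $f(S) := 1 - \prod_{e \in S}(1-p_e)$, which is the probability that at least one edge of $S$ is active. It is monotone, normalized with $f(\emptyset)=0$, and submodular, since the marginal $f(S \cup \{e\}) - f(S) = p_e \prod_{e' \in S}(1-p_{e'})$ is nonincreasing in $S$. Substituting $z_e := p_e x_e$, the constraint $\sum_{e \in S} p_e x_e \le 1 - \prod_{e \in S}(1-p_e)$ from \ref{LP:full_patience} is exactly $z(S) \le f(S)$, so $z$ lies in the polymatroid $P_f := \{z \in \mb{R}_{\ge 0}^{\partial(v)} : z(S) \le f(S) \text{ for every } S \subseteq \partial(v)\}$.

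Next I invoke the classical Edmonds characterization of polymatroid vertices. Every vertex of $P_f$ arises from a partial greedy construction parametrized by a subset $T \subseteq \partial(v)$ and a linear order $\pi$ on $T$: the vertex $z^{T, \pi}$ has coordinates $z^{T, \pi}_e = f(T_{\le_\pi e}) - f(T_{<_\pi e}) = p_e \prod_{e' \in T,\, e' <_\pi e}(1 - p_{e'})$ for $e \in T$, and $z^{T, \pi}_e = 0$ for $e \notin T$. Since $P_f$ is bounded and $z \in P_f$, Caratheodory's theorem yields a decomposition $z = \sum_i \lambda_i\, z^{(T_i, \pi_i)}$ with $\lambda_i \ge 0$ and $\sum_i \lambda_i = 1$ (allowing the trivial vertex corresponding to $T = \emptyset$ to absorb any slack).

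I then define $\scr{B}_v$ as follows: sample an index $i$ with probability $\lambda_i$; probe the edges of $T_i$ in the order $\pi_i$ and match $v$ to the first active edge encountered, stopping at that point; if no active edge is found in $T_i$, output nothing. This strategy respects the probe-commit rule, since we always commit to the first active probe. Under the strategy indexed by $i$, the event ``$e$ is matched'' requires $e \in T_i$, every predecessor of $e$ in $\pi_i$ to be inactive, and $e$ itself active; by independence of the edge states this occurs with probability $p_e \prod_{e' \in T_i,\, e' <_{\pi_i} e}(1-p_{e'}) = z^{(T_i, \pi_i)}_e$ when $e \in T_i$, and with probability $0 = z^{(T_i, \pi_i)}_e$ otherwise. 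Averaging over $i$ gives $\mb{P}[\scr{B}_v\text{ matches } v \text{ to } e] = \sum_i \lambda_i z^{(T_i, \pi_i)}_e = z_e = p_e x_e$, exactly as required.

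The main obstacle is justifying the extreme-point description above. Edmonds' greedy theorem directly gives that the vertices of the \emph{base polytope} of the monotone submodular function $f$ are indexed by linear orders of the full ground set $\partial(v)$. For the down-closed polymatroid $P_f$ one additionally needs to allow orders on arbitrary subsets $T \subseteq \partial(v)$ in order to capture vertices with zero coordinates; the standard argument is that any face of $P_f$ fixing a set of coordinates to zero reduces to the base polytope of $f$ restricted to the remaining coordinates, whose vertices are covered by the usual greedy theorem. Once this characterization is in hand, the rest of the proof is a direct verification using the independence of the edge states.
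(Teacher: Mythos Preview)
The paper does not give its own proof of this theorem; it is quoted from Gamlath et al.\ \cite{Gamlath2019} (with a remark that Costello et al.\ \cite{costello2012matching} prove an almost identical result by different means). Your polymatroid argument is correct and is in fact the approach used in the cited Gamlath et al.\ reference: the coverage function $f(S) = 1 - \prod_{e \in S}(1-p_e)$ is monotone submodular, the \ref{LP:full_patience} constraints place $z = (p_e x_e)_e$ inside the associated polymatroid, and each extreme point of that polymatroid is realized by a ``probe a fixed ordered subset and commit to the first active edge'' strategy, so a Carath\'eodory decomposition yields the randomized algorithm $\scr{B}_v$.

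Your handling of the one subtle point---that extreme points of $P_f$ (as opposed to the base polytope) are indexed by ordered \emph{subsets} rather than full permutations---is correct in outline. To make it airtight: an extreme point of $P_f$ with support exactly $T$ is an extreme point of the face $\{z \in P_f : z_e = 0 \text{ for } e \notin T\}$, which by monotonicity of $f$ coincides with $P_{f|_T}$ in the $T$-coordinates; an extreme point of $P_{f|_T}$ with full support must have $|T|$ linearly independent tight constraints of the form $z(S) = f(S)$, and since the tight sets are closed under union and intersection this forces a maximal chain and hence $z(T) = f(T)$, placing it on the base polytope of $f|_T$ where Edmonds' greedy characterization applies directly.
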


\begin{proof}[Proof of Proposition \ref{prop:full_patience_equivalence}]
	
	Observe that by Theorem \ref{thm:LP_relaxation_benchmark_equivalence},  in order to prove the claim it suffices
	to show that $\qLPOPT(G) = \rOPT(G)$. Clearly, $\rOPT(G) \le \qLPOPT(G)$,
	as can be seen by defining $x_{e}$ as the probability that the relaxed benchmark probes
	the edge $e \in E$. Thus, we focus on showing that $\qLPOPT(G) \le \rOPT(G)$.
	
	Suppose that $(x_e)_{e \in E}$ is an optimum solution to $\qLPOPT(G)$. 
	We design the following algorithm, which we denote by $\scr{B}$:
	\begin{enumerate}
		\item $\scr{M} \leftarrow \emptyset$.
		\item For each $v \in V$, execute  $\scr{B}_{v}$ on $(v,\partial(v), (x_e)_{e \in \partial(v)})$,
		where $\scr{B}_v$ is the online probing algorithm of Theorem \ref{thm:costello_svensson_guarantee}.
		If $\scr{B}_v$ matches $v$, then let $e'$ be this edge, and add $e'$ to $\scr{M}$
		\item Return $\scr{M}$.
	\end{enumerate}
	Using Theorem \ref{thm:costello_svensson_guarantee}, it is clear that
	$\mb{E}[ w(\scr{M})] = \sum_{e \in E} w_e p_e x_e.$
	Moreover, each vertex $u \in U$ is matched by $\scr{M}$ at most once in expectation, as a consequence of
	constraint \eqref{eqn:relxation_matched_full_patience}. As a result, $\scr{B}$ is a relaxed probing algorithm.
	Thus, $\qLPOPT(G) =  \sum_{e \in E} w_e p_e x_e \le \rOPT(G)$, and so the proof is complete.
\end{proof}

\section{Deferred Proofs} \label{sec:deferred_proofs}

\begin{proof}[Proof of Lemma \ref{lem:edge_value_lower_bound}]
Set $\alpha:=1/e$ for clarity, and take $t \ge \lceil \alpha n \rceil$. 
Define $e_{t}:=(u_t,v_{t})$, where $u_{t}$ is the vertex of $U$ which $v_{t}$ proposes to
(which is the empty set $\emptyset$, if no such proposal occurs). 
For each $u \in U$, define $C(u,v_t)$ to be the event in which $v_t$ proposes to $u$. 
Let us now condition on the random subset $V_{t}$, as well as the 
random vertex $v_{t}$. 
In this case, 
\[
    \mb{E}[ w(e_{t})  \, | \, V_{t}, v_t] = \sum_{u \in U} w_{u,v_t} \, \mb{P}[ C(u,v_t)  \mid V_{t}, v_{t}].
\]
Observe however that once we condition on $V_{t}$ and $v_{t}$, Algorithm \ref{alg:ROM_edge_weights} corresponds to executing $\mathtt{VertexProbe}$ on the instance $(v_t, \partial(v_t), (x_{v_t}(\bm{e}))_{\bm{e} \in \scr{C}_v})$,
where we recall that $(x_{v}(\bm{e}))_{\bm{e} \in \scr{C}_v, v \in v_t}$ is an optimum solution
to \ref{LP:new} for $G_t=G[U \cup V_t]$. Thus,
$\mb{P}[ C(u,v_t)  \mid V_{t}, v_{t}]  =  p_{u,v_t} \til{x}_{u,v_t}$ by \Cref{lem:fixed_vertex_probe},
where
\[
\til{x}_{u,v_t}:= \sum_{\substack{\bm{e}' \in \scr{C}_{v_t}: \\ e \in \bm{e}'}} q(\bm{e}_{ < e}') \cdot x_{v_t}( \bm{e}'),
\]
and so $\mb{E}[ w(e_{t})  \, | \, V_{t}, v_t] = \sum_{u \in U} w_{u,v_t}  p_{u,v_t}  \til{x}_{u,v_t}$.
On the other hand, if we condition on \textit{solely} $V_{t}$, then $v_{t}$ remains distributed uniformly
at random amongst the vertices of $V_{t}$. Moreover, once we condition on $V_t$, the graph $G_t$ is determined, and thus so are
the values $(x_{v}(\bm{e}))_{v \in V_{t}, \bm{e} \in \scr{C}_v}$. These observations together imply that
\begin{equation}\label{eqn:conditional_expectation_value}
\mb{E}[ w_{u,v_t} \, p_{u,v_t} \,  \til{x}_{u,v_t}  \, | \, V_{t}] = \frac{\sum_{v \in V_t} w_{u,v} \, p_{u,v} \, \til{x}_{u,v}}{t}
\end{equation}
for each $u \in U$ and $\lceil \alpha n \rceil \le t \le n$. If we now take expectation over $v_{t}$, then using the law of iterated
expectations,
\begin{align*}
    \mb{E}[ w(e_t)  \, | \, V_{t}] &= \mb{E}[ \,  \mb{E}[ w(e_t) \, | \, V_t, v_t] \,   \, | \, V_{t} ]  \\
                              &= \mb{E}\left[ \sum_{u \in U} w_{u,v_t} \, p_{u,v_t} \, \til{x}_{u,v_t}   \, | \, V_{t} \right] \\
                              &= \sum_{u \in U} \mb{E}[ w_{u,v_t} \, p_{u,v_t} \, \til{x}_{u,v_t}  \, | \, V_{t}] \\
                              &= \sum_{u \in U} \sum_{v \in V_t} \frac{w_{u,v} p_{u,v} \, \til{x}_{u,v}}{t},
\end{align*}
where the final equation follows from \eqref{eqn:conditional_expectation_value}.
Now,
$
    \LPOPT(G_t)=\sum_{v \in V_t} \sum_{u \in U} w_{u,v_t} \, p_{u,v_t} \, \til{x}_{u,v_t},
$
as $(x_{v}(\bm{e}))_{v \in V_{t}, \bm{e} \in \scr{C}_v}$ is an optimum solution to
\ref{LP:new} for $G_t$. As a result,
$
    \mb{E}[ w(e_t)  \, | \, V_{t}]  = \LPOPT(G_t)/t,
$
and so $\mb{E}[ w(e_t)]  = \mb{E}[\LPOPT(G_t)]/t$,
after taking taking expectation over $V_{t}$. On the other hand, Lemma \ref{lem:random_induced_subgraph} implies that
\[
	\frac{\mb{E}[\LPOPT(G_t)]}{t} \ge  \frac{\LPOPT(G)}{n}.
\]
Thus, $\mb{E}[ w(e_t)] \ge \LPOPT(G)/n$,
provided $\lceil \alpha n \rceil \le t \le n$, thereby completing the proof.
\end{proof}

\begin{proof}[Proof of Lemma \ref{lem:availability_lower_bound}]
Let us assume that $\lceil \alpha n  \rceil \le t \le n$ is fixed, and $(x_{v}^{(t)}(\bm{e}))_{v \in V, \bm{e} \in \scr{C}_v}$
is the optimum solution of \ref{LP:new} for $G_t$, as used by Algorithm \ref{alg:ROM_edge_weights}.
For each $u \in U$ and $v \in v$, define the \textbf{edge variable} $\til{x}^{(t)}_{u,v}$,
where
\[
	\til{x}^{(t)}_{u,v} := \sum_{\substack{\bm{e}' \in \scr{C}_{v_t}: \\ e \in \bm{e}'}} q(\bm{e}_{ < e}') \cdot x_{v_t}^{(t)}( \bm{e}')
\]
We wish to prove that for each $u \in U$, 
\begin{equation} \label{eqn:conditional_available}
\mb{P}[u \in R_{t}  \, | \, V_{t},v_t] \ge \lfloor \alpha n \rfloor /(t-1).
\end{equation}
As such, we must condition on $(V_t,v_t)$ throughout the remainder of the proof.
To simplify the argument, we abuse notation slightly and remove $(V_t,v_t)$
from the subsequent probability computations, though it is understood to implicitly
appear.

Given arriving node $v_{j}$ for $j=1, \ldots ,n$, denote $C(u,v_j)$
as the event in which $v_j$ proposes to $u \in U$. As $R_{t}$ denotes the
unmatched nodes after the vertices $v_{1}, \ldots , v_{t-1}$ are processed
by Algorithm \ref{alg:ROM_edge_weights}, observe that
$u \in R_{t}$ if and only if $\neg C(u,v_j)$ occurs for each $j=1, \ldots , t-1$,
and so $\mb{P}[ u \in R_{t} ] = \mb{P}[\cap_{j=1}^{t-1} \neg C(u,v_j)]$.
We therefore focus on lower bounding $\mb{P}[\cap_{j=1}^{t-1} \neg C(u,v_j) ]$ in order to prove
the lemma.

First observe that for $j=1, \ldots , \lfloor \alpha n \rfloor$, the algorithm passes on probing
$\partial(v_j)$ by definition, and so \eqref{eqn:conditional_available} holds
if $t = \lceil \alpha n \rceil$. We may thereby
assume that $t \ge \lceil \alpha n \rceil +1$ and focus on lower bounding $\mb{P}[\cap_{j= \lceil \alpha n \rceil}^{t-1} \neg C(u,v_j)]$. Observe that this event depends only on the probes of the vertices of $V_{t-1} \setminus V_{\lfloor \alpha n \rfloor}$.
We denote $\bar{t}:= t - 1 - \lfloor \alpha n \rfloor = t - \lceil \alpha n \rceil$ as the number of vertices within this set.

Let us first consider the vertex $v_{t-1}$,
and the edge variable $\til{x}^{(t-1)}_{u,v}$ for each $v \in V_{t-1}$.
Observe that after applying Lemma \ref{lem:fixed_vertex_probe},
\begin{align*}
	\mb{P}[ C(u,v_{t-1})] &= \sum_{ v \in V_{t-1}} \mb{P}[ C(u,v_{t-1})  \, | \, v_{t-1} = v] \cdot \mb{P}[v_{t-1}=v] \\
											  &= \frac{1}{t-1} \sum_{v \in V_{t-1}} \til{x}_{u,v}^{(t-1)}  p_{u,v},
\end{align*}
as once we condition on the set $V_{t}$
and the vertex $v_t$, we know that $v_{t-1}$ is uniformly distributed amongst $V_{t-1}$.
On the other hand, the values $(\til{x}_{u,v}^{(t-1)})_{u \in U, v \in V_{t-1}}$ are derived
from a solution to \ref{LP:new} for $G_{t-1}$, and so by constraint \eqref{eqn:relaxation_efficiency_matching},
$
	\sum_{v \in V_{t-1}} \til{x}_{u,v}^{(t-1)}  p_{u,v} \le 1.
$
We therefore get that $\mb{P}[ C(u,v_{t-1}) ]  \le \frac{1}{t-1}$.
Similarly, if we fix $1 \le k \le \bar{t}$, then we can generalize the above
argument by conditioning
on the identities of all the vertices preceding $v_{t-k}$, as well as the probes
they make; that is, $(u_{t-1},v_{t-1}), \ldots ,(u_{t-(k-1)},v_{t-(k-1)})$ (in addition to $V_t$ and $v_t$ as always). 

In order to simplify the resulting indices, let us reorder the vertices of $V_{t-1} \setminus V_{\lfloor \alpha n \rfloor }$.
Specifically, define $\bar{v}_{k}:=v_{t-k}, \bar{u}_k := u_{t-k}$ and $\bar{e}_k:=e_{t-k}$
for $k=1,\ldots ,\bar{t}$. With this notation, we denote
$\scr{H}_{k}$ as encoding the information available based on the vertices $\bar{v}_1, \ldots , \bar{v}_{k}$ and the proposals they
(potentially) made. Formally, $\scr{H}_k$ is the sigma-algebra
	generated from $V_t,v_t$ and $\bar{e}_{1}, \ldots ,\bar{e}_{k}$. By convention,
we define $\scr{H}_{0}$ as the sigma-algebra generated from $V_t$ and $v_t$. 

An analogous computation to the above case then implies that
\[
	\mb{P}[ C(u,\bar{v}_{k})  \, | \, \scr{H}_{k-1}] = \sum_{ v \in V_{t-k}} \til{x}_{u,v}^{(t-k)}  p_{u,v}  \mb{P}[\bar{v}_k=v] \le \frac{1}{t-k},
\]
for each $k=1,\ldots , \bar{t}$, where
$\til{x}_{u,v}^{(t-k)}$ is the edge variable for $v \in V_{t-k}$.

Observe now that in each step, we condition on strictly more information;
that is, $\scr{H}_{k-1} \subseteq \scr{H}_{k}$
for each $k=2, \ldots , \bar{t}$. On the other hand, observe that
if we condition on $\scr{H}_{k-1}$ for $1 \le k \le \bar{t}-1$,
then the event $C(u,\bar{v}_{j})$ can be determined from $\scr{H}_{k-1}$
for each $1 \le j \le k-1$.

Using these observations, for $1 \le k \le \bar{t}$, the following recursion holds:
\begin{align*}
\mb{P}[ \cap_{j=1}^{k} \neg C(u,\bar{v}_j)] 
  &= \mb{E}\left[ \, \mb{E}\left[ \prod_{j=1}^{k} \bm{1}_{ [\neg C(u,\bar{v}_j)]}  \, | \, \scr{H}_{k-1}\right] \right]	\\
&= \mb{E}\left[ \,  \prod_{j=1}^{k-1} \bm{1}_{ [\neg C(u,\bar{v}_j)]} \, \mb{P}[\neg C(u, \bar{v}_k)  \, | \, \scr{H}_{k-1}]\right] \\
&\ge \left(1 - \frac{1}{t-k}\right) \mb{P}[ \cap_{j=1}^{k-1} \neg C(u,\bar{v}_j)]
\end{align*}
It follows that if $k= \bar{t}= t - \lceil \alpha n \rceil$, then applying the above recursion implies that
\[
	\mb{P}[\cap_{j= \lceil \alpha n \rceil}^{t-1} \neg C(u,v_j)] \ge \prod_{k=1}^{t- \lceil \alpha n \rceil} \left(1 - \frac{1}{t-k}\right).
\]
Thus, after cancelling the pairwise products,
$
	 \mb{P}[ u \in R_{t}] = \mb{P}[\cap_{j= \lceil \alpha n \rceil}^{t-1}\neg C(u,v_j)] \ge  \frac{\lfloor \alpha n \rfloor}{t-1},
$
and so \eqref{eqn:conditional_available} holds for all $t \ge \lceil \alpha n \rceil$, thereby completing the argument.
\end{proof}

\begin{proof}[Proof of  \Cref{lem:adv_feasibility}]
We must show that for each fixed $u_0 \in U$ and $v_0 \in V$,
\begin{equation}\label{eqn:dual_feasibility_adv}
	\mb{E}[ p_{u_0,v_0} \cdot \alpha_{u_0} +   w_{u_0} \cdot p_{u_0,v_0} \, \sum_{\substack{R \subseteq U: \\ u_0 \in R}} \phi_{v_0,R}] \ge w_{u_0} \cdot p_{u_0,v_0}.
\end{equation}
Let $R_{v_0}$ be the unmatched vertices of $U$ when $v_0$ arrives (this is a random subset of $U$).
Moreover, define $\til{E}:= E \setminus \partial(v_0)$. We claim the following:
\[
\sum_{\substack{R \subseteq U: \\ u_{0} \in R}} \mb{E}[ \phi_{v_{0},R} \, | \, (\sta(e))_{e \in \til{E}}] = \bm{1}_{[u_{0} \in R_{v_0}]}.
\]
To see this, observe that if we take a \textit{fixed} subset $R \subseteq U$, then the charging assignment to $\phi_{v_0,R}$
ensures that
\[
	\phi_{v_0,R} = w( \scr{M}(v_0)) \cdot \frac{1}{\OPT(v_0,R)} \cdot \bm{1}_{[ R_{v_0} = R]},
\]
where $w( \scr{M}(v_0))$ corresponds to the weight of the vertex matched to $v_0$ (which is zero if $v_0$ remains unmatched after the execution on $G$). In order to make use of this relation,
let us first condition on $(\sta(e))_{e \in \til{E}}$. Observe
that once we condition on this information, we can determine $R_{v_0}$. As such,
\[
	\mb{E}[ \phi_{v_0, R} \, | \, (\sta(e))_{e \in \til{E}}] = \frac{1}{\OPT(v_0,R)} \, \mb{E}[w( \scr{M}(v_0)) \, | \, (\sta(e))_{e \in \til{E}}] \cdot  \bm{1}_{[ R_{v_0} = R]}.
\]
On the other hand, the only randomness which remains in 
the conditional expectation involving $w(\scr{M}(v_0))$ is 
over $(\sta(e))_{e \in \partial(v_0)}$. However, 
since Algorithm \ref{alg:dynamical_program} behaves optimally
on $G[ \{v_{0}\} \cup R_{v_{0}}]$,
we get that
\begin{equation}\label{eqn:DP_optimality_adv}
	\mb{E}[w( \scr{M}(v_0)) \, | \,  (\sta(e))_{e \in \til{E}}] = \OPT(v_0,R_{v_0}),
\end{equation}
and so for the \textit{fixed} subset $R \subseteq U$,
$
	\mb{E}[w( \scr{M}(v_0)) \, | \, (\sta(e))_{e \in \til{E}}] \cdot \bm{1}_{[ R_{v_0} = R]} = \OPT(v_0,R) \cdot\bm{1}_{[ R_{v_0} = R]}
$
after multiplying each side of \eqref{eqn:DP_optimality_adv} by the indicator random variable $\bm{1}_{[R_{v_{0}}=R]}$.
Thus, 
$
	\mb{E}[\phi_{v_0,R}\, | \, (\sta(e))_{e \in \til{E}}] =  \bm{1}_{[ R_{v_0} = R]},
$
after cancellation. We therefore get that
$
	\sum_{\substack{R \subseteq U: \\ u_{0} \in R}} \mb{E}[ \phi_{v_{0},R} \, | \, (\sta(e))_{e \in \til{E}}] 
	= \sum_{\substack{R \subseteq U: \\ u_{0} \in R}} \bm{1}_{[ R_{v_0} = R]} = \bm{1}_{[u_{0} \in R_{v_0}]},
$
as claimed. On the other hand, if we focus on the vertex $u_{0}$, then observe
that if $u_0 \notin R_{v_0}$, then $\alpha_{u_0}$ must have been
charged $w_{u_0}$. In other words, $\alpha_{u_0} \ge w_{u_0} \cdot \bm{1}_{[u_{0} \notin R_{v_0}]}$.
As a result,
\[
	\mb{E}[ p_{u_0,v_0}  \alpha_{u_0}+   w_{u_0}  p_{u_0,v_0}\sum_{\substack{R \subseteq U: \\ u_0 \in R}} \phi_{v,R} \, | \, (\sta(e))_{e \in \til{E}}] \ge w_{u_0} p_{u_0,v_0}\cdot \bm{1}_{[u_{0} \notin R_{v_0}]}  + w_{u_0} p_{u_0,v_0} \cdot \bm{1}_{[u_{0} \in R_{v_0}]},
\]
and so \eqref{eqn:dual_feasibility_adv} follows after taking expectations.
The solution $((\alpha^*_u)_{u \in U}, (\phi^*_{v,R})_{v \in V, R \subseteq U})$ is therefore feasible.
\end{proof}

\end{document}